\title{MK-fuzzy Automata and MSO Logics\thanks{Supported by the Austrian Research Promotion Agency (FFG) in the frame
of the BRIDGE program 846003 \enquote{LogicGuard II}.} }
\author{Manfred Droste
\institute{Institut f\"{u}r Informatik \\ Universit\"{a}t Leipzig \\D-04109 Leipzig, Germany}
\email{droste@informatik.uni-leipzig.de}
\and
Temur Kutsia
\institute{Research Institute for Symbolic Computation (RISC)\\Johannes Kepler University \\A-4040 Linz, Austria \\}
\email{Temur.Kutsia@risc.jku.at}
\and
George Rahonis
\institute{Department of Mathematics \\Aristotle University of Thessaloniki\\54124 Thessaloniki, Greece}
\email{grahonis@math.auth.gr}
\and 
Wolfgang Schreiner
\institute{Research Institute for Symbolic Computation (RISC)\\Johannes Kepler University \\A-4040 Linz, Austria \\}
\email{Wolfgang.Schreiner@risc.jku.at}
}
\newtheorem{theorem}{Theorem}
\newtheorem{definition}[theorem]{Definition}
\newtheorem{example}[theorem]{Example}
\newtheorem{lemma}[theorem]{Lemma}
\newtheorem{corollary}[theorem]{Corollary}
\newtheorem{remark}[theorem]{Remark}
\newtheorem{proposition}[theorem]{Proposition}
\newenvironment{proof}{\textbf{Proof.}}
\def\ter{\mathit{ter}}
\begin{document}

\maketitle

\begin{abstract}
We introduce MK-fuzzy automata over a bimonoid $K$ which is related to the fuzzification of the McCarthy-Kleene logic. Our automata are inspired by, and intend to contribute to, practical applications being in development  in a project on runtime network monitoring based on predicate logic. We investigate closure properties of the class of recognizable MK-fuzzy languages accepted by MK-fuzzy automata as well as of deterministically recognizable MK-fuzzy languages accepted by their deterministic counterparts. Moreover, we establish a Nivat-like result for recognizable MK-fuzzy languages. We introduce an MK-fuzzy MSO logic and show the expressive equivalence of a fragment of this logic with MK-fuzzy automata, i.e., a B\"{u}chi type theorem.  
\end{abstract}

\section{Introduction}
Fuzzy automata constitute a special model of weighted automata but historically have been defined and studied separately, mostly inspired by fuzzy logic theory. The original fuzzy automaton model assigned to words values from the lattice $[0,1]$ with the usual $\max$ and $\min$ operations. Later on, fuzzy automata were investigated also over more general structures like for instance lattices, residuated lattices, and $l$-monoids. Several real world applications are modelled by fuzzy automata. We refer the reader to \cite{Mo:Fu} for fuzzy automata theory and applications, to \cite{Ra:Fu} for a generalization of them and their connection to weighted automata, and to \cite{Ah:Fu} for fuzzy semirings related to automata. For weighted automata theory, the interested reader should consult for instance \cite{Dr:Han,Dr:Au,Dr:Co}. 

On the other hand, McCarthy-Kleene logic (MK-logic for short), a combination of three-valued logics of  McCarthy~\cite{Mc:Ba} and Kleene~\cite{Kl:In}, has been introduced in \cite{Ko:Fo,Av:Pr} to reason about computation errors. The original idea, according to~\cite{Av:Pr}, was to distinguish between two types of errors: critical ones, which make the whole computation stop and cause a total failure of the program, and non-critical ones, which stop only part of the computation and can be fixed or circumvented by a success in some other part. MK-logic is a four-valued logic, where alongside the truth values $t$ (true) and $f$ (false) there are also $u$ (undefined, which originates from Kleene's logic) and $e$ (error, which comes from McCarthy's logic). In this combination, `undefined' is intended to represent non-critical errors, while `error' is reserved for critical ones. As in McCarthy's logic, interpretation of binary connectives is asymmetric, which means, for instance, that the disjunction of $t$ and $e$ is $t$, while the disjunction of $e$ and $t$ gives $e$.  In the combination it is assumed that $e$ prevails $u$ in whatever order they appear. 

MK-logic has found an application in the LogicGuard project~\cite{Lo:G1,Lo:G2,Ku:Lo,DBLP:conf/rv/CernaSK16} which pursues research on network security, developing a specification and verification formalism and tool for runtime network monitoring based on predicate logic. A monitor, which is a logical formula (usually with quantifiers), is interpreted over a network (an infinite stream of messages). The goal is to check whether the property specified in the monitor is satisfied by the stream, and report violating messages, if any. For instance, the following monitoring formula
\[ \text{\textbf{monitor} } x : p(x) \Rightarrow \text{\textbf{exists} } y \text{ \textbf{with} } x \le y \le x + T : q(x, y)\]
investigates for every stream position $x$ that satisfies $p(x)$ whether there exists some position y in range $[x, x + T]$
such that property $q(x, y)$ holds. 
Operationally, the monitor formula is translated into a program, which accepts stream messages one after the other, keeps evaluating the monitored property on the known part of the stream, and if it is violated (i.e., its truth value becomes $f$), reports the message that caused the violation. At each moment, the monitor observes only a finite initial part of the stream. Hence, it is not always possible to decide whether the property holds or not (`not enough' messages have arrived). In this case, a new copy of the current instance of the monitoring formula is created. Its truth value is $u$: undefinedness here really corresponds to `unknown', not to a non-critical error. The copy is added to the pile of copies of some previous instances, which also wait to be decided. Each of these copies will be evaluated for the incoming messages and will be removed from consideration if its truth value becomes $t$ or $f$. In the latter case, the violated message is reported. If something causes an error (i.e., if the truth value $e$ is generated for some reason), monitoring stops.
The LogicGuard framework has met the expectations of the developers, being successfully used for runtime network monitoring. As the next step, it is planned to deploy it for new application scenarios such as, for instance, ``Internet of Things''. Such applications pose new challenges, related to the difficulties with quantification of decisions, or to the fact that it is not a priori clear what the expectations of a correct execution of a system are. To deal with such problems, reasoning with some kind of probabilistic or fuzzy knowledge is required. As the first step towards this direction, we envisage the extension of the LogicGuard specification language to a fuzzy quantified logic that is able to handle specifications including uncertainty and vagueness. On this strand, and for the development of the fuzzification of the MK-logic and relative models, we introduce MK-fuzzy automata, and this paper is a first attempt to study these models. 
Our MK-fuzzy automata assign, to words, values from the bimonoid 
$$K=\{(t,f,u,e) \in [0,1]^4 \mid t+f+u+e = 1 \}$$
where its operations, called MK-disjunction and MK-conjunction, are inspired by the fuzzification of the MK-logic. Formal series with values in $K$ are called MK-fuzzy languages. 

Classical operations in formal series over semirings cannot be defined in the usual way over bimonoids due to the lack of commutativity and distributivity properties. Notable examples are the Cauchy product and the star operation. If the weight structure is weaker than a semiring, for instance a bimonoid like in our case, then the lack of commutativity, distributivity, and multiplicative zero properties has a serious impact on the automata models considered over such a weight structure. For instance the value assigned by the automaton to a word cannot be defined in the usual way. Due to these difficulties, and since no interesting bimonoid structures have been considered so far, there is a lack of work on weighted automata over bimonoids. According to our best knowledge, the most relative works deal with automata and transducers over strong bimonoids where the first operation is commutative and there is a multiplicative zero \cite{Ci:De,Dr:St,Li:Th}. 
For our MK-fuzzy automata, where a multiplicative zero is missing from the bimonoid $K$, we consider a set of initial states, a set of transitions, and a set of final states and define on these sets the initial distribution, the mapping assigning truth values to the transitions of the automaton, and the terminal distribution, respectively. Our model is nondeterministic. Since the MK-disjunction is not commutative, we require the state set of the MK-fuzzy automaton to be linearly ordered. Then the paths of the automaton over any word, can be ordered according to lexicographic order, and hence we can define the value of $K$ assigned by the MK-fuzzy automaton to the given word.  

We show that the class of recognizable MK-fuzzy languages accepted by MK-fuzzy automata is closed under MK-disjunction, strict alphabetic homomorphisms and inverse strict alphabetic homomorphisms. Moreover, we establish a Nivat-like decomposition result showing that recognizable MK-fuzzy languages can be obtained from very particular MK-fuzzy automata (in fact, with only one state), restriction to recognizable languages and strict alphabetic homomorphisms. 
We introduce also the deterministic counterpart of our model and show that the class of MK-fuzzy languages accepted by these automata, called deterministically recognizable, is closed under MK-disjunction with scalars. The Cauchy product of two deterministically recognizable MK-fuzzy languages is a recognizable MK-fuzzy language. Due to the structure of the bimonoid $K$, we can define several notions of supports of MK-fuzzy languages. We show that the strong support, related to the first component of the elements in $K$, of a deterministically recognizable MK-fuzzy language is a recognizable language. 
Furthermore, we introduce an MK-fuzzy MSO logic and determine a fragment of sentences which is expressively equivalent to the class of MK-fuzzy automata, i.e., a B\"uchi type theorem.  

\section{Preliminaries}

Let $A$ be an alphabet, i.e., a finite nonempty set. As usually, we denote by
$A^{\ast}$ the set of all finite words over $A$ and define $A^{+}=A^{\ast}%
\setminus \{\varepsilon\}$, where $\varepsilon$ is the empty word. The length of a word $w$, i.e., the number of the letters of $w$ is denoted as usual by $|w|$. A word $w=a_0\ldots a_{n-1}$ over $A$, with $a_0, \ldots, a_{n-1} \in A$, is written also as $w=w(0)\ldots w(n-1)$ with $w(i)=a_i$ for every $0 \leq i \leq n-1$. Assume now that $\leq$ is a linear order on $A$. The lexicographic order $\leq_{lex}$ on $A^*$ is defined as follows:
$$w \leq_{lex} u \quad \text{ iff } \quad ((u=wv \text{ with }v \in A^*) \text{ or } (w=vav', \ u=vbv'', \ v \in A^*, \ a,b \in A \text{ with } a<b))$$
for every $w,u \in A^*$. 
Let now $A$ and $B$ be linearly ordered sets, respectively by $\leq_A$ and $\leq_B$. Then, the Cartesian product $A \times B$ is linearly ordered by $\leq$ which is defined, as usual, in the following way:
$$(a,b) \leq (a',b') \quad \text{ iff } \quad ((a<_Aa') \text{ or } (a=a' \text{ and } b \leq_Bb') )$$ 
for every $(a,b),(a',b')\in A\times B$. In a similar way, the linear orders of three sets induce a linear order on their Cartesian product. 
If no confusion arises, we shall use the same symbol $\leq$ to denote every linear order considered in the sequel.  
\begin{quote}
\emph{Throughout the paper }$A$ \emph{will denote an alphabet.}
\end{quote}
A \emph{bimonoid\/ }$(K,+,\cdot,0,1)$ (cf. \cite{Dr:St}) consists of a set $K,$ two
binary operations $+$ and $\cdot$ and two constant elements $0$ and $1$ such
that $( K,+,0)$ and  $(K, \cdot, 1)$ are monoids. If the monoid $( K,+,0)$ is commutative and $0$ acts as a multiplicative zero, i.e., $k \cdot 0 = 0 \cdot k =0$ for every $k \in K$, then the bimonoid is called \emph{strong}. The bimonoid is denoted simply by $K$ if the
operations and the constant elements are understood. A semiring is a strong bimonoid where multiplication distributes over addition.
A bimonoid $K$ is called zero-sum free if $k+k'=0$ implies $k=k'=0$, and it is called zero-divisor free if $k \cdot k' =0$ implies $k=0$ or $k'=0$, for every $k,k' \in K$.

In this paper we deal with a new type of fuzzy sets with values in the Cartesian product 
$ [0,1]^4=[0,1] \times [0,1] \times [0,1] \times [0,1] $, such that their components are summing up to $1$. This type of fuzzy sets is inspired by McCarthy-Kleene logic (MK-logic for short). MK-logic which is a combination of three-valued logics of  McCarthy~\cite{Mc:Ba} and Kleene~\cite{Kl:In}, has been introduced in \cite{Ko:Fo,Av:Pr} to reason about computation errors. It is a four-valued logic, where alongside the truth values $t$ (true) and $f$ (false) there are also $u$ (undefined, which originates from Kleene's logic) and $e$ (error, which comes from McCarthy's logic). In this combination, `undefined' is intended to represent non-critical errors, while `error' is reserved for critical ones.
For the reader's convenience we recall the truth tables of MK-logic:

\medskip

$\begin{array}[t]{ll}
\begin{array}[t]{cccccccccc}
\text{or} &  \vline & t & \vline & f & \vline & u & \vline & e  \\
\hline 
	t	   &  \vline & t & \vline & t & \vline &  t & \vline & t  \\
    f	   &  \vline & t & \vline & f & \vline &  u & \vline & e  \\
	u	   &  \vline & t & \vline & u & \vline &  u & \vline & e  \\
	e	   &  \vline & e & \vline & e & \vline &  e & \vline & e  \\
\end{array}
& \qquad  \  \
\begin{array}[t]{cccccccccc}
\text{not} &  \vline & t & \vline & f & \vline & u & \vline & e  \\
\hline 
		   &  \vline & f & \vline & t & \vline &  u & \vline & e  \\
\end{array}
\end{array}$

\medskip

$
\begin{array}{ll}
\begin{array}[t]{cccccccccc}
\text{and} &  \vline & t & \vline & f & \vline & u & \vline & e  \\
\hline 
	t	   &  \vline & t & \vline & f & \vline &  u & \vline & e  \\
    f	   &  \vline & f & \vline & f & \vline &  f & \vline & f  \\
	u	   &  \vline & u & \vline & f & \vline &  u & \vline & e  \\
	e	   &  \vline & e & \vline & e & \vline &  e & \vline & e  \\
\end{array}
& \qquad  \  \
\begin{array}[t]{cccccccccc}
\text{implies} &  \vline & t & \vline & f & \vline & u & \vline & e  \\
\hline 
	t	   &  \vline & t & \vline & f & \vline &  u & \vline & e  \\
    f	   &  \vline & t & \vline & t & \vline &  t & \vline & t  \\
	u	   &  \vline & t & \vline & u & \vline &  u & \vline & e  \\
	e	   &  \vline & e & \vline & e & \vline &  e & \vline & e  \\
\end{array}
\end{array}$

\medskip

For the fuzzification of the MK-logic we assign to $t,f,u,e$ values from the interval $[0,1]$ with the restriction that they are summing up to $1$. Therefore, our fuzzy sets get their values in the subset $K$ of the Cartesian product 
$ [0,1]^4$ which is defined as follows:
$$K=\{(t,f,u,e) \in [0,1]^4 \mid t+f+u+e = 1 \}.$$
Due to practical applications, by which our theory is motivated (cf. \cite{Ku:Lo}), we refer to the four components of the elements of $K$ to as the \emph{true}, \emph{false}, \emph{unknown}, and \emph{error} value, respectively. We shall denote the elements of $K$ with bold symbols and we shall call them the \emph{truth values} of our fuzzy sets. For $\mathbf{k} =(t,f,u,e) \in K$ we shall write sometimes $x(\mathbf{k})$ for $x \in \{t,f,u,e \}$, to denote the $x$ value of $\mathbf{k}$.   
 For every $\mathbf{k_1}=(t_1, f_1, u_1, e_1),\mathbf{k_2}=(t_2, f_2, u_2, e_2) \in K$ we let $\mathbf{k_3}= \mathbf{k_1} \sqcup \mathbf{k_2}$ and $\mathbf{k_4}= \mathbf{k_1} \sqcap \mathbf{k_2}$ where $\mathbf{k_3}=(t_3, f_3, u_3, e_3)$ and $\mathbf{k_4}=(t_4, f_4, u_4, e_4)$ are defined by the relations

\smallskip

$\begin{array}[c]{ll}
t_3= t_1 + (f_1 +u_1)t_2  & \ \ \ \ \qquad t_4= t_1 t_2 \\
f_3=f_1 f_2   &  \ \ \ \ \qquad f_4=f_1 + (t_1+u_1)f_2 \\
u_3=f_1 u_2 + u_1 (f_2 + u_2) & \ \ \ \  \qquad u_4=t_1 u_2 + u_1 (t_2+u_2) \\
e_3= e_1 + (f_1 +u_1)e_2 & \ \ \ \ \qquad e_4=e_1 +(t_1 +u_1) e_2.
\end{array}$

\smallskip

\noindent We call $\sqcup$ the \emph{MK-disjunction} (\emph{disjunction} for simplicity) and $\sqcap$ the \emph{MK-conjunction} (\emph{conjunction} for simplicity). The result of the empty MK-conjunction equals $\mathbf{1}$. 
MK-disjunction and MK-conjunction correspond to the fuzzification of the connectives `or', `and' of the MK-logic, respectively. To clarify this, we preserve the above notations for $\mathbf{k_1},\mathbf{k_2},\mathbf{k_3}$, and $\mathbf{k_4}$ and construct the following multiplication table:

\begin{equation}
\begin{array}{cccccccccc}
     &  \vline & t_2 & \vline & f_2 & \vline & u_2 & \vline & e_2  \\
\hline 
	t_1	   &  \vline & t_1t_2 & \vline & t_1f_2 & \vline &  t_1u_2 & \vline & t_1e_2  \\
	f_1	   &  \vline & f_1t_2 & \vline & f_1f_2 & \vline &  f_1u_2 & \vline & f_1e_2  \\
	u_1	   &  \vline & u_1t_2 & \vline & u_1f_2 & \vline &  u_1u_2 & \vline & t_1e_2  \\
	e_1	   &  \vline & e_1t_2 & \vline & e_1f_2 & \vline &  e_1u_2 & \vline & e_1e_2  \\
\end{array}
\end{equation}

\noindent We compute every component $y_3 \in \{t_3,f_3,u_3,e_3\}$ of $\mathbf{k_3}$ by summing up the values of the cells in table (1) above, such that the corresponding cells in the truth table of `or' contain the value $y$. Similarly, for $\mathbf{k_4}$ we compute every component $y_4 \in \{t_4,f_4,u_4,e_4\}$ of $\mathbf{k_4}$ by summing up the values of the cells in table (1) above, such that the corresponding cells in the truth table of `and' contain the value $y$. For instance $t_3=t_1t_2+t_1f_2+t_1u_2+t_1e_2+f_1t_2+u_1t_2=t_1(t_2+f_2+u_2+e_2)+(f_1+u_1)t_2=t_1+(f_1+u_1)t_2$ and $t_4=t_1t_2$.

\begin{proposition}\label{oper_ass}
The disjunction and conjunction operations on $K$ are associative with unit elements $\mathbf{0}=(0,1,0,0)$ and $\mathbf{1}=(1,0,0,0)$, respectively. 
\end{proposition}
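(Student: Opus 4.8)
The plan is to treat the two unit claims by direct substitution and the two associativity claims by a structural argument that reduces them to associativity of the underlying four-valued connectives, so that no cubic polynomial identity in the twelve coordinates ever has to be expanded.

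For the units I would simply compute. Writing $\mathbf{k}=(t,f,u,e)$ and substituting $\mathbf{0}=(0,1,0,0)$ as the right, and then the left, argument of $\sqcup$ into the four defining relations, while using $t+f+u+e=1$, collapses every component back to that of $\mathbf{k}$; substituting $\mathbf{1}=(1,0,0,0)$ into the relations for $\sqcap$ does the analogous job. Each verification is a one-line calculation, so I expect no difficulty here.

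For associativity I would first reinterpret the operations. The construction in the multiplication table~(1) together with the truth tables exhibits $\sqcup$ and $\sqcap$ as pushforwards of product distributions: identifying $\mathbf{k}\in K$ with a probability distribution on the four-element set $V=\{t,f,u,e\}$, and letting $\vee,\wedge\colon V\times V\to V$ denote the MK-logic connectives `or' and `and' read off from the truth tables, one has, for independent random variables $X\sim\mathbf{k_1}$ and $Y\sim\mathbf{k_2}$, that $\mathbf{k_1}\sqcup\mathbf{k_2}$ is precisely the distribution of $\vee(X,Y)$ and $\mathbf{k_1}\sqcap\mathbf{k_2}$ the distribution of $\wedge(X,Y)$. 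This is exactly what the cell-summation recipe computes; the simplified formulas in the statement agree with the raw cell sums once $t_2+f_2+u_2+e_2=1$ is applied, just as in the paper's computation of $t_3$.

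Granting this, associativity is automatic from associativity of the base connective. Taking independent $X\sim\mathbf{k_1}$, $Y\sim\mathbf{k_2}$, $Z\sim\mathbf{k_3}$, the variable $\vee(X,Y)$ has distribution $\mathbf{k_1}\sqcup\mathbf{k_2}$ and is independent of $Z$, so $\vee(\vee(X,Y),Z)$ is distributed as $(\mathbf{k_1}\sqcup\mathbf{k_2})\sqcup\mathbf{k_3}$, while $\vee(X,\vee(Y,Z))$ is distributed as $\mathbf{k_1}\sqcup(\mathbf{k_2}\sqcup\mathbf{k_3})$. If $\vee$ is associative on $V$ these two random variables coincide pointwise, hence are equidistributed; the case of $\sqcap$ is the same with $\wedge$. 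The real content is therefore the finite associativity of $\vee$ and $\wedge$, which I expect to be the main obstacle even though it is entirely elementary. I would prove it by case analysis on the first argument, exploiting the short-circuit shape of the tables: for $\vee$ the value $t$ forces output $t$ and $e$ forces output $e$ irrespective of the second argument, $f$ acts as a left identity, and $u$ applies the map $g$ with $g(t)=t$, $g(f)=g(u)=u$, $g(e)=e$; the splitting on the first argument then leaves a single genuinely nontrivial sub-case (both arguments equal to $u$) that reduces to the idempotence of $g$. Associativity of $\wedge$ comes for free via the involution $\sigma$ swapping $t\leftrightarrow f$ and fixing $u,e$, under which $\wedge=\sigma\circ\vee\circ(\sigma\times\sigma)$.
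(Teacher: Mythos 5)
Your proof is correct, and it takes a genuinely different route from the paper, which states Proposition~1 without any written proof: the intended argument there, as for the neighbouring propositions (``proved by a standard calculation''), is brute-force verification, i.e., expanding both sides of $(\mathbf{k_1}\sqcup\mathbf{k_2})\sqcup\mathbf{k_3}=\mathbf{k_1}\sqcup(\mathbf{k_2}\sqcup\mathbf{k_3})$ coordinate by coordinate using $t+f+u+e=1$, and likewise for $\sqcap$, together with the same one-line unit substitutions you carry out. What you do differently is to make explicit what the paper's cell-summation recipe around table~(1) only suggests: since elements of $K$ have nonnegative coordinates summing to $1$, they are literally probability distributions on $V=\{t,f,u,e\}$, and $\mathbf{k_1}\sqcup\mathbf{k_2}$, $\mathbf{k_1}\sqcap\mathbf{k_2}$ are the laws of $\vee(X,Y)$, $\wedge(X,Y)$ for independent $X\sim\mathbf{k_1}$, $Y\sim\mathbf{k_2}$. (I verified this against the defining relations for all eight components; note that table~(1) itself has a typo, $t_1e_2$ in the $(u_1,e_2)$ cell, so your claim that the formulas agree with the raw cell sums holds for the corrected table, as your argument in effect assumes.) The transport of associativity is then sound: $\vee(X,Y)$ is independent of $Z$ because functions of independent variables are independent, and the law of the pushforward depends only on the marginals, so both bracketings give the laws of the pointwise-equal variables $\vee(\vee(X,Y),Z)$ and $\vee(X,\vee(Y,Z))$. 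Your finite case analysis of $\vee$ is complete: $t$ and $e$ short-circuit, $f$ is a left identity, and the $x=u$ case reduces via the map $g$ (with $g(t)=t$, $g(f)=g(u)=u$, $g(e)=e$) to $g\circ g=g$; the conjugation $\wedge=\sigma\circ\vee\circ(\sigma\times\sigma)$ with $\sigma$ the $t\leftrightarrow f$ involution checks out on all sixteen entries and correctly transfers associativity to $\wedge$. As for what each approach buys: the paper's direct expansion is self-contained but opaque (a twelve-variable polynomial identity per component), whereas your argument explains \emph{why} associativity holds --- it is inherited from the MK truth tables --- gives closure of $K$ under both operations for free, and generalizes immediately to any operation on distributions over a finite truth-value set defined by this pushforward recipe.
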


By Proposition \ref{oper_ass}, we immediately get the next corollary.

\begin{corollary}
The structure $(K, \sqcup, \sqcap, \mathbf{0}, \mathbf{1})$ is a bimonoid.  
\end{corollary}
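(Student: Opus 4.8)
The plan is to recognise $\sqcup$ and $\sqcap$ as \emph{convolution} operations induced by the two finite operations of MK-logic, and thereby reduce the associativity of the real-valued operations on $K$ to the associativity of the corresponding operations on the four-element truth-value set. Identify $K$ with the probability simplex $\Delta(V)$ over $V=\{t,f,u,e\}$ by reading $\mathbf{k}$ as the distribution $x\mapsto x(\mathbf{k})$ via the bijection $\iota$; this is a bijection precisely because the components sum to $1$. Writing $\vee,\wedge\colon V\times V\to V$ for the maps given by the \textrm{or}- and \textrm{and}-tables, the construction around table~(1) says exactly that, for every $y\in V$,
\[
y(\mathbf{k_1}\sqcup\mathbf{k_2})=\sum_{\substack{x,x'\in V\\ x\vee x'=y}} x(\mathbf{k_1})\,x'(\mathbf{k_2}),\qquad y(\mathbf{k_1}\sqcap\mathbf{k_2})=\sum_{\substack{x,x'\in V\\ x\wedge x'=y}} x(\mathbf{k_1})\,x'(\mathbf{k_2}).
\]
In other words, $\sqcup$ and $\sqcap$ are the convolutions $\circledast_\vee$ and $\circledast_\wedge$ on $\Delta(V)$. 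This viewpoint also disposes of well-definedness for free: a convolution of probability distributions is again a probability distribution, so $\sqcup$ and $\sqcap$ genuinely map $K\times K$ into $K$.

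The first real step is then a general lemma: for any associative operation $\ast$ on a finite set $V$, the convolution $(p\circledast_\ast q)(z)=\sum_{x\ast x'=z}p(x)q(x')$ is associative, and if $\ast$ has a two-sided unit $\eta\in V$ then the Dirac distribution $\delta_\eta$ is a two-sided unit for $\circledast_\ast$. Both parts are immediate rearrangements of finite sums: expanding $((p\circledast_\ast q)\circledast_\ast r)(w)$ and $(p\circledast_\ast(q\circledast_\ast r))(w)$ both yield $\sum p(a)q(b)r(c)$ summed over the triples with $(a\ast b)\ast c=w$, respectively $a\ast(b\ast c)=w$, and these two index sets coincide by associativity of $\ast$; the unit claim uses only $\eta\ast x=x\ast\eta=x$. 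Applying the lemma, the associativity of $\sqcup$ and $\sqcap$ and both unit statements follow at once, provided $(V,\vee)$ and $(V,\wedge)$ are monoids — and one reads off the tables that $f$ is the unit of $\vee$ (with $\iota(\mathbf{0})=\delta_f$) and $t$ is the unit of $\wedge$ (with $\iota(\mathbf{1})=\delta_t$).

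Thus the whole statement reduces to the purely finite fact that the MK \textrm{or}- and \textrm{and}-operations are associative, and I expect this to be the only genuine obstacle, the subtlety being the McCarthy-style asymmetry (e.g. $t\vee e=t$ but $e\vee t=e$), which blocks any one-line commutative-lattice argument. I would verify it by a short case analysis rather than a $64$-entry check: $\vee$ admits the description \enquote{scan $a,b,c$ from the left and return the first entry lying in $\{t,e\}$, or, if there is none, return $u$ when some entry equals $u$ and $f$ otherwise}, from which $(a\vee b)\vee c=a\vee(b\vee c)$ follows by splitting on whether $a\in\{t,e\}$, then (when $a\in\{f,u\}$) on whether $b\in\{t,e\}$, and finally on $c$. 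The analogous work for $\wedge$ can be skipped entirely via the involution $\sigma\colon K\to K$ swapping the first two coordinates: checking the identity $\sigma_0(x\wedge x')=\sigma_0 x\vee\sigma_0 x'$ on $V$ (with $\sigma_0$ swapping $t,f$ and fixing $u,e$) shows $\sigma(\mathbf{k_1}\sqcap\mathbf{k_2})=\sigma\mathbf{k_1}\sqcup\sigma\mathbf{k_2}$, so associativity and the unit for $\sqcap$ transfer from those of $\sqcup$ under this bijection.
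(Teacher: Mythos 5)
Your proof is correct, but it follows a genuinely different route from the paper. The paper obtains the corollary immediately from Proposition 1, which asserts associativity of $\sqcup$ and $\sqcap$ together with the unit laws for $\mathbf{0}$ and $\mathbf{1}$, and that proposition is left to a direct calculation on the explicit four-component formulas for $t_3,f_3,u_3,e_3$ and $t_4,f_4,u_4,e_4$. You instead formalize the construction around table (1): $\sqcup$ and $\sqcap$ are the convolutions on the probability simplex over $\{t,f,u,e\}$ induced by the finite MK operations \enquote{or} and \enquote{and} (equivalently, $K$ sits inside the monoid algebras of $(\{t,f,u,e\},\vee)$ and $(\{t,f,u,e\},\wedge)$ over $\mathbb{R}$), so associativity and the Dirac units $\delta_f=\mathbf{0}$ and $\delta_t=\mathbf{1}$ follow from your general convolution lemma once the two four-element operations are shown to be monoid operations with units $f$ and $t$. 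Your scan-from-the-left normal form for the MK \enquote{or} is a correct description of its truth table and your case split does establish its associativity (a genuinely necessary check, since neither operation is commutative and no lattice argument applies), and the duality $\sigma_0(x\wedge x')=\sigma_0 x\vee\sigma_0 x'$ with $\sigma_0$ swapping $t,f$ holds on all sixteen pairs, so the transfer of both the associativity and the unit law to $\sqcap$ via the induced involution of $K$ is sound. As to what each approach buys: the paper's verification is elementary and self-contained, while yours gets well-definedness (closure of $K$ under both operations) for free, isolates all the nontrivial content in the finitary truth tables, and generalizes, since any associative operation with unit on a finite set induces in this way a monoid structure on distributions. A small side benefit of your reading is that, since the convolution reproduces the paper's displayed component formulas exactly (note $u_1e_2$ in the formula for $e_3$), it confirms that the entry $t_1e_2$ in the $(u_1,e_2)$ cell of table (1) is a typo for $u_1e_2$.
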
 

Nevertheless, by the following proposition we conclude that the bimonoid $(K, \sqcup, \sqcap, \mathbf{0}, \mathbf{1})$ is not strong.

\begin{proposition} \label{zero}
Both the disjunction and conjunction operations on $K$ are not commutative and idempotent. Furthermore, for every $ \mathbf{k}=(t,f,u,e) \in K$ we get $ \mathbf{0} \sqcap \mathbf{k}= \mathbf{0}$ and $\mathbf{k} \sqcap \mathbf{0}= (0, t+f+u, 0, e)$. 
\end{proposition}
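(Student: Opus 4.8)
The plan is to read off every assertion directly from the component formulas for $\sqcup$ and $\sqcap$: no structural theory is needed, only suitable choices of elements of $K$ together with a few substitutions. Each failure of commutativity or idempotence will be witnessed by an explicit pair (respectively single element) of $K$, and the two remaining identities will be obtained by substituting $\mathbf{0}=(0,1,0,0)$ into the conjunction formulas. The only point requiring care is to ensure that every witness actually lies in $K$, i.e.\ that its four components are nonnegative and sum to $1$.

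For non-commutativity I would exploit the McCarthy-style asymmetry already highlighted in the introduction. With $\mathbf{1}=(1,0,0,0)$ and $\mathbf{e}=(0,0,0,1)$, substitution into the disjunction formulas gives $\mathbf{1}\sqcup\mathbf{e}=(1,0,0,0)$ while $\mathbf{e}\sqcup\mathbf{1}=(0,0,0,1)$, so $\sqcup$ is not commutative. For $\sqcap$ the same two elements serve: one gets $\mathbf{0}\sqcap\mathbf{e}=\mathbf{0}$ and $\mathbf{e}\sqcap\mathbf{0}=(0,0,0,1)$, which exhibits the failure and conveniently anticipates the identities proved below. In each case it is precisely the factor $(f_1+u_1)$, respectively $(t_1+u_1)$, multiplying the second argument that destroys the symmetry.

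A single witness suffices to defeat idempotence for both operations: take $\mathbf{k}=(\tfrac{1}{2},\tfrac{1}{2},0,0)\in K$. The first component of $\mathbf{k}\sqcup\mathbf{k}$ equals $t+(f+u)t=\tfrac{3}{4}$ and the first component of $\mathbf{k}\sqcap\mathbf{k}$ equals $t^{2}=\tfrac{1}{4}$, neither of which is $\tfrac{1}{2}$, so $\sqcup$ and $\sqcap$ are both non-idempotent. Finally, the two stated identities follow by setting an argument to $\mathbf{0}$ in the conjunction formulas: with $\mathbf{0}$ on the left the vanishing of $t_1$ and $u_1$ collapses every surviving term to give $\mathbf{0}\sqcap\mathbf{k}=(0,1,0,0)=\mathbf{0}$, whereas with $\mathbf{0}$ on the right only $f_4=f_1+(t_1+u_1)\cdot 1=t+f+u$ and $e_4=e_1=e$ remain, yielding $\mathbf{k}\sqcap\mathbf{0}=(0,t+f+u,0,e)$.

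I expect no real obstacle: the statement is a bundle of small closed-form checks. The only thing to watch is that the chosen witnesses stay inside $K$ (hence components summing to $1$) and that one avoids degenerate elements such as $\mathbf{0}$ or $\mathbf{1}$, for which the operations may happen to behave symmetrically or idempotently and so fail to reveal the desired counterexample.
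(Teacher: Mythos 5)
Your proof is correct and takes essentially the same route as the paper: the paper likewise argues by exhibiting explicit elements of $K$ violating commutativity and idempotence (it uses the single pair $(0.3,0.2,0.4,0.1)$, $(0.9,0.05,0.03,0.02)$ for all four failures) and dismisses the identities $\mathbf{0}\sqcap\mathbf{k}=\mathbf{0}$ and $\mathbf{k}\sqcap\mathbf{0}=(0,t+f+u,0,e)$ as a standard calculation, which you carry out explicitly via substitution into the component formulas. One wording slip worth fixing: for $\sqcap$ you announce that \emph{the same two elements} (i.e.\ $\mathbf{1}$ and $\mathbf{e}$) serve, but then correctly switch to the witnesses $\mathbf{0}$ and $\mathbf{e}$ --- necessarily so, since $\mathbf{1}\sqcap\mathbf{e}=\mathbf{e}\sqcap\mathbf{1}=\mathbf{e}$, so the pair $\mathbf{1},\mathbf{e}$ would in fact fail to witness non-commutativity of conjunction.
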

\begin{proof}
Consider the elements $\mathbf{k}=(0.3, 0.2, 0.4, 0.1), \mathbf{k'}=(0.9,0.05,0.03,0.02) \in K$. Then we get $\mathbf{k} \sqcup \mathbf{k'} \neq \mathbf{k'} \sqcup \mathbf{k}$, and  $\mathbf{k} \sqcap \mathbf{k'} \neq \mathbf{k'} \sqcap \mathbf{k}$, $\mathbf{k} \sqcup \mathbf{k} \neq \mathbf{k}$ and $\mathbf{k} \sqcap \mathbf{k} \neq \mathbf{k}$. The remaining part of our proposition is proved by a standard calculation. \hfill $\square$
\end{proof} 

\begin{proposition} \label{not-distr}
Both the disjunction and conjunction on $K$ do not distribute over each other.
\end{proposition}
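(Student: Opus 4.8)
The plan is to disprove both distributive laws by exhibiting explicit counterexamples, since each law is a universally quantified identity and a single violating triple suffices. To keep the arithmetic transparent, I would first restrict attention to the slice $K_0=\{(t,1-t,0,0)\mid t\in[0,1]\}\subseteq K$ consisting of those truth values whose unknown- and error-components vanish. Substituting $u_i=e_i=0$ and $f_i=1-t_i$ into the defining relations, one checks that $K_0$ is closed under both operations and that on $K_0$ the disjunction and conjunction collapse to the familiar probabilistic connectives on the true-component: $(t_1,\dots)\sqcup(t_2,\dots)$ has true-value $t_1+(1-t_1)t_2=t_1+t_2-t_1t_2$ and $(t_1,\dots)\sqcap(t_2,\dots)$ has true-value $t_1t_2$, with the false-value determined as the complement. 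This reduction is the heart of the argument: on $K_0$ the non-commutativity and the error/unknown bookkeeping disappear, and we are left with ordinary multiplication and the probabilistic sum $a\oplus b=a+b-ab$ on $[0,1]$, for which the failure of distributivity is elementary.

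Next I would test the single symmetric triple $\mathbf{k}=(\tfrac12,\tfrac12,0,0)$ in both roles. For conjunction over disjunction, the true-component of $\mathbf{k}\sqcap(\mathbf{k}\sqcup\mathbf{k})$ is $\tfrac12\cdot(\tfrac12+\tfrac12-\tfrac14)=\tfrac38$, whereas the true-component of $(\mathbf{k}\sqcap\mathbf{k})\sqcup(\mathbf{k}\sqcap\mathbf{k})$ is $\tfrac14+\tfrac14-\tfrac1{16}=\tfrac{7}{16}$; since $\tfrac38=\tfrac{6}{16}\neq\tfrac{7}{16}$, conjunction does not distribute over disjunction. Symmetrically, for disjunction over conjunction the true-component of $\mathbf{k}\sqcup(\mathbf{k}\sqcap\mathbf{k})$ is $\tfrac12+\tfrac14-\tfrac18=\tfrac{5}{8}$, while that of $(\mathbf{k}\sqcup\mathbf{k})\sqcap(\mathbf{k}\sqcup\mathbf{k})$ is $\tfrac34\cdot\tfrac34=\tfrac{9}{16}$, and $\tfrac58=\tfrac{10}{16}\neq\tfrac{9}{16}$. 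Because the chosen triple is constant, the left- and right-distributive variants, which a priori differ since the operations are not commutative by Proposition \ref{zero}, coincide here, so this one example refutes every version of both laws at once.

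There is no real conceptual obstacle; the only points demanding care are verifying that $K_0$ is genuinely closed under $\sqcup$ and $\sqcap$, so that the reduction to probabilistic connectives is legitimate, and recording that $\oplus$ and $\cdot$ fail to distribute precisely away from the boundary values $0,1$, which is why a strictly interior value such as $t=\tfrac12$ must be used. One could equally run the computation on the full definitions with concrete vectors, but routing it through $K_0$ removes all the noncommutative and error-propagation complications and exposes the genuine reason distributivity breaks: these are probabilistic, not Boolean, connectives. \hfill $\square$
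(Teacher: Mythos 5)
Your proof is correct, and all the arithmetic checks out: the slice $K_0=\{(t,1-t,0,0)\}$ is indeed closed under both operations (substituting $u_i=e_i=0$, $f_i=1-t_i$ into the defining relations gives $t_3=t_1+t_2-t_1t_2$, $f_3=(1-t_1)(1-t_2)$, $u_3=e_3=0$, and $t_4=t_1t_2$, $f_4=1-t_1t_2$, $u_4=e_4=0$), and your four evaluations at $\mathbf{k}=(\tfrac12,\tfrac12,0,0)$ are right: $\tfrac38\neq\tfrac7{16}$ and $\tfrac58\neq\tfrac9{16}$. The paper itself states Proposition \ref{not-distr} without proof; judging from the proof it does give for Proposition \ref{zero}, the intended argument is exactly a counterexample computation with explicit vectors, done directly on the four-component definitions. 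So your route is the same in kind but genuinely cleaner in execution: by first isolating the sub-bimonoid $K_0$, on which $\sqcup$ and $\sqcap$ collapse to the probabilistic connectives $a\oplus b=a+b-ab$ and $ab$ on the true-component, you eliminate the noncommutativity and the $u$/$e$ bookkeeping, and you expose the structural reason distributivity fails — $a(b\oplus c)$ and $(ab)\oplus(ac)$ differ by $abc(1-a)$, which is nonzero precisely away from the Boolean boundary, explaining your choice $t=\tfrac12$. Your observation that, with all three arguments equal and the restricted operations symmetric, the left- and right-distributive variants coincide, so one triple kills all four versions of the two laws, is also sound and worth making explicit, since the paper's operations are noncommutative and a careless counterexample might refute only one sidedness. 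The only mild caveat: the closure of $K_0$ and the collapse to $\oplus$ and $\cdot$ should be displayed as the short calculation above rather than left as "one checks," since it is the load-bearing step; with that written out, the proof is complete and arguably more informative than a brute-force vector computation.
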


\begin{proposition} \label{zero-sum-div}
The bimonoid $K$ is zero-sum free and zero-divisor free.
\end{proposition}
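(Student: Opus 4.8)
The plan is to verify both properties by a direct inspection of the defining relations for $\sqcup$ and $\sqcap$, using throughout the fact that the four components of any element of $K$ lie in $[0,1]$ and sum to $1$. Only one component equation is needed for each half of the statement, so no lengthy computation is required.

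For zero-sum freeness, I would suppose $\mathbf{k_1} \sqcup \mathbf{k_2} = \mathbf{0} = (0,1,0,0)$, writing $\mathbf{k_1}=(t_1,f_1,u_1,e_1)$ and $\mathbf{k_2}=(t_2,f_2,u_2,e_2)$. The decisive relation is the one for the false component, $f_3 = f_1 f_2$, which must equal $1$. Since $f_1,f_2 \in [0,1]$, a product equal to $1$ forces $f_1 = f_2 = 1$. But $f_i = 1$ together with $t_i + f_i + u_i + e_i = 1$ forces $t_i = u_i = e_i = 0$, hence $\mathbf{k_i} = \mathbf{0}$ for $i \in \{1,2\}$. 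Thus $\mathbf{k_1} = \mathbf{k_2} = \mathbf{0}$, as required, and the relations for $t_3$, $u_3$, $e_3$ are not even needed.

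For zero-divisor freeness, I would suppose $\mathbf{k_1} \sqcap \mathbf{k_2} = \mathbf{0}$. First I read off the error component $e_4 = e_1 + (t_1+u_1)e_2 = 0$; as every summand is nonnegative, this gives $e_1 = 0$, and therefore $t_1 + u_1 = 1 - f_1$. Next I use the false component $f_4 = f_1 + (t_1+u_1)f_2 = f_1 + (1-f_1)f_2 = 1$, which rearranges to the factorization $(1-f_1)(1-f_2) = 0$. Hence $f_1 = 1$ or $f_2 = 1$, and exactly as in the disjunction case each alternative forces the corresponding element to equal $\mathbf{0}$. Therefore $\mathbf{k_1} = \mathbf{0}$ or $\mathbf{k_2} = \mathbf{0}$.

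I do not expect a genuine obstacle: everything reduces to the elementary observation that a product, or a convex combination, of numbers in $[0,1]$ reaches the value $1$ only degenerately. The single point requiring care is the bookkeeping with the normalization constraint $t+f+u+e=1$, since it is precisely this constraint that turns ``$f_i = 1$'' into ``all other components of $\mathbf{k_i}$ vanish'' and that makes the factorization $(1-f_1)(1-f_2)=0$ conclusive in the conjunction case.
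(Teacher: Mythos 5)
Your proof is correct and complete: for zero-sum freeness the single equation $f_3=f_1f_2=1$ with $f_1,f_2\in[0,1]$ forces $f_1=f_2=1$, and the normalization $t+f+u+e=1$ then collapses both elements to $\mathbf{0}=(0,1,0,0)$; for zero-divisor freeness, reading $e_4=e_1+(t_1+u_1)e_2=0$ to get $e_1=0$ and hence $t_1+u_1=1-f_1$, and then rewriting $f_4=f_1+(1-f_1)f_2=1$ as $(1-f_1)(1-f_2)=0$, is a valid and rather elegant reduction. The paper itself states Proposition \ref{zero-sum-div} without any written proof (such verifications are left as standard calculations, in the spirit of the closing remark in the proof of Proposition \ref{zero}), and your direct component-wise inspection is precisely the intended kind of argument, with the convex-combination factorization being a nice economy over checking all four coordinates.
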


An \emph{MK-fuzzy language over}
$A$ \emph{and} $K$ is a mapping $s:A^* \rightarrow K$. The \emph{strong support of} $s$ is the language $\mathrm{stgsupp}(s)=\{w \in A^* \mid t(s(w)) \neq 0 \}$. For every $w \in A^*$ the MK-fuzzy language $\overline{w}$ is determined by $\overline{w}(u)=\mathbf{1}$ if $u=w$, and $\overline{w}(u)=\mathbf{0}$ otherwise. The \emph{constant} MK-fuzzy language $\widetilde{\mathbf{k}}$
($\mathbf{k}\in K$) is defined, for every $w\in A^*$,$\ $by $  \widetilde
{\mathbf{k}}(w)  =\mathbf{k}$. We shall denote by $K\left\langle \left\langle A^* \right\rangle
\right\rangle $ the class of all MK-fuzzy languages over $A$ and $K$. The \emph{characteristic MK-fuzzy language}  $\mathbf{1}_L \in K\left\langle \left\langle A^* \right\rangle
\right\rangle $ \emph{of} a language $L \subseteq A^*$ is defined by $\mathbf{1}_L(w)=\mathbf{1}$ if $w \in L$ and $\mathbf{1}_L(w)=\mathbf{0}$ otherwise. 
Let $s,r\in K\left\langle \left\langle A^*\right\rangle \right\rangle $ and
$\mathbf{k}\in K$. The \emph{MK-disjunction} (or simply \emph{disjunction}) $s \sqcup r$, the \emph{MK-conjunction} (or simply \emph{conjunction}) $s \sqcap r$, and the \emph{MK-conjunctions with scalars} (simply \emph{scalar conjunctions}) $\mathbf{k} \sqcap s$ and $s \sqcap \mathbf{k}$ are defined as follows:
$s \sqcup r(w)=s(w) \sqcup r(w)$, $s \sqcap r(w)=s(w) \sqcap r(w)$, and $(\mathbf{k} \sqcap s)(w)=\mathbf{k} \sqcap s(w)$, $(s \sqcap \mathbf{k})(w)=s(w)\sqcap \mathbf{k}$
for every $w \in A^*$. Since the disjunction and conjunction operations among MK-fuzzy languages are defined elementwise, we can easily show that properties of the structure $\left(
K\left\langle \left\langle A^*\right\rangle \right\rangle ,\sqcup,\sqcap, \widetilde{\mathbf{0}}, \widetilde{\mathbf{1}} \right)$ are inherited by the properties of the structure  $(K, \sqcup, \sqcap, \mathbf{0}, \mathbf{1})$,  hence $\left(
K\left\langle \left\langle A^*\right\rangle \right\rangle ,\sqcup,\sqcap, \widetilde{\mathbf{0}}, \widetilde{\mathbf{1}} \right)  $ is a bimonoid. 
The \emph{Cauchy product} $rs$ \emph{of} $r,s \in K\left\langle\left\langle A^* \right\rangle\right\rangle$ is defined as follows. For every $w=a_0 \ldots a_{n-1} \in A^*$ with $a_0, \ldots, a_{n-1} \in A$ we let

$rs(w)=\left( r(\varepsilon ) \sqcap s(a_0 \ldots a_{n-1}) \right) \sqcup \left (r(a_0) \sqcap s(a_1 \ldots a_{n-1}) \right) \sqcup \ldots \sqcup \left( r(a_0 \ldots a_{n-1}) \sqcap s(\varepsilon) \right).$  

\noindent Since disjunction and conjunction are not commutative, and they do not distribute over each other, the Cauchy product is not associative as we state in the next proposition.

\begin{proposition}\label{Cauchy}
The Cauchy product operation is not associative.
\end{proposition}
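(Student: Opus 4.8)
The plan is to refute associativity by producing an explicit counterexample: three MK-fuzzy languages $r,s,t \in K\langle\langle A^*\rangle\rangle$ and a word $w$ for which $((rs)t)(w) \neq (r(st))(w)$. Since the Cauchy product at a word $w$ only involves the values of the factors on prefixes and suffixes of $w$, I would take $w=a$ to be a single letter of $A$ and let $r,s,t$ take the value $\mathbf{0}$ on every word other than $\varepsilon$ and $a$; the whole question then reduces to a finite computation inside $K$.

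First I would unfold both sides of the definition for $w=a$. Writing $r_0=r(\varepsilon)$, $r_1=r(a)$ and similarly for $s,t$, and using the associativity of $\sqcap$ from Proposition~\ref{oper_ass} to merge the pure-conjunction terms, one obtains
\[((rs)t)(a) = (r_0 \sqcap s_0 \sqcap t_1) \sqcup \bigl(((r_0 \sqcap s_1)\sqcup(r_1 \sqcap s_0)) \sqcap t_0\bigr),\]
\[(r(st))(a) = \bigl(r_0 \sqcap ((s_0 \sqcap t_1)\sqcup(s_1 \sqcap t_0))\bigr) \sqcup (r_1 \sqcap s_0 \sqcap t_0).\]
The two purely associative summands $r_0\sqcap s_0\sqcap t_1$ and $r_1\sqcap s_0\sqcap t_0$ are common to both expressions; the entire discrepancy lives in the terms where a conjunction meets the inner disjunction, that is, exactly in the pattern controlled by distributivity. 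Since $\sqcap$ does not distribute over $\sqcup$ (Proposition~\ref{not-distr}), these terms will differ for a suitable choice of values.

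To extract a clean inequality I would neutralise the common summands by a careful choice of supports. Taking $r_0=\mathbf{1}$ and $s_0=\mathbf{0}$ and invoking the unit laws of Proposition~\ref{oper_ass} together with $\mathbf{0}\sqcap\mathbf{k}=\mathbf{0}$ from Proposition~\ref{zero}, both displays collapse to
\[((rs)t)(a) = (s_1 \sqcup Z)\sqcap t_0, \qquad (r(st))(a) = (s_1 \sqcap t_0)\sqcup(Z \sqcap t_0),\]
where $Z = r_1 \sqcap \mathbf{0} = (0,\,1-e(r_1),\,0,\,e(r_1))$ by Proposition~\ref{zero}. These are precisely the two sides of the right-distributive law for $\sqcap$ over $\sqcup$, so any witness of its failure finishes the argument; concretely I expect $s_1 = t_0 = (0.5,0.5,0,0)$ together with some $r_1$ having $e(r_1)=0.5$ (so that $Z=(0,0.5,0,0.5)$) to yield two distinct elements of $K$.

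The step I expect to be the main obstacle is controlling the \emph{masking} effect: because $K$ has no additive cancellation, a genuine difference appearing in an inner subterm can be erased once it is combined through the outer $\sqcup$ and $\sqcap$ with the other summands. The purpose of the choices $r_0=\mathbf{1}$ and $s_0=\mathbf{0}$ is exactly to annihilate or neutralise these competing summands via the zero- and unit-laws of Propositions~\ref{oper_ass} and~\ref{zero}, so that what survives is literally a distributivity identity; once that reduction is in place, the inequality follows immediately from Proposition~\ref{not-distr} and a short arithmetic check in $K$.
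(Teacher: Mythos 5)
Your proposal is correct, and I verified the final arithmetic: with $s_1=t_0=(0.5,0.5,0,0)$ and $Z=(0,0.5,0,0.5)$ one gets $((rs)t)(a)=(0.25,\,0.5,\,0,\,0.25)$ while $(r(st))(a)=(0.25,\,0.375,\,0,\,0.375)$, so the counterexample stands. This matches the paper's (unwritten) justification exactly — the paper states the proposition without proof, attributing non-associativity to the failure of commutativity and distributivity, and your reduction makes this precise by isolating the right-distributivity law $(s_1\sqcup Z)\sqcap t_0$ versus $(s_1\sqcap t_0)\sqcup(Z\sqcap t_0)$ and witnessing its failure in $K$.
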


We assume now that the alphabet $A$ is linearly ordered and let $B$ be another alphabet. Then a homomorphism $h:A^{\ast}\rightarrow B^{\ast}$ is extended to a mapping $h:K\left\langle \left\langle A^*\right\rangle \right\rangle \rightarrow K\left\langle \left\langle B^*\right\rangle \right\rangle$ in the following way. For every $s \in K\left\langle \left\langle A^*\right\rangle \right\rangle$ and $u \in B^*$ we let $h(s)(u)= \bigsqcup_{w \in h^{-1}(u)}s(w)$ where in the definition of the disjunction we take into account the lexicographic order of the words $w \in h^{-1}(u)$. 
Finally, we assume that $h:A^{\ast}\rightarrow B^{\ast}$ is a
strict alphabetic homomorphism, i.e., $h(a)\in B$ for every $a\in A$. Then, for every $r\in
K\left\langle \left\langle B^*\right\rangle \right\rangle $ the
MK-fuzzy language $h^{-1}(r)\in K\left\langle \left\langle A^*\right\rangle
\right\rangle $ is determined by $h^{-1}(r)(w)=r(h(w))$ for
every $w\in A^*$. We should note that for $h^{-1}$ we do not require any order on the alphabet $A$.

\section{MK-fuzzy automata}

In this section we introduce the model of MK-fuzzy automata over $A$ and $K$ and investigate closure properties of the class of their behaviors. Moreover, we prove a Nivat-like theorem for recognizable MK-fuzzy languages. 

\begin{definition}
An \emph{MK-fuzzy automaton over} $A$ \emph{and} $K$ is a seven-tuple $\mathcal{A}=(Q,I,T,F,in, wt, \ter)$ where $Q$ is the \emph{finite state set} which is assumed to be linearly ordered, $I$ is the \emph{set of initial states}, $T \subseteq Q \times A \times Q$ is the \emph{set of transitions}, $F$ is the \emph{set of  final states}, $in: I \rightarrow K $ is the \emph{initial distribution}, $wt: T \rightarrow K$ is a mapping assigning \emph{truth values} to the transitions of the automaton, and $\ter : F \rightarrow K$ is the \emph{final distribution}. 
\end{definition}

Let $w=a_{0} \ldots a_{n-1}$ be a word over $A$ with $a_0, \ldots, a_{n-1} \in A$. A \emph{path} $P^{\left(\mathcal{A}\right )}_w$ (or simply $P_w$ if the automaton is understood) \emph{of} $\mathcal{A}$ \emph{over} $w$ is a sequence of transitions  $P^{(\mathcal{A})}_w : =\left (\left ( q_i, a_i,q_{i+1} \right ) \right )_{0 \leq i \leq n-1}$, $(q_i,a_i,q_{i+1}) \in T$ for every $0 \leq i \leq n-1$, with $q_0 \in I$ and $q_n \in F$. The \emph{weight of} $P^{(\mathcal{A})}_w$ is the truth value 
\[ 
weight\left(P^{(\mathcal{A})}_w\right)= in(q_0) \sqcap  \bigsqcap_{0 \leq i \leq n-1} wt \left ( q_i, a_i, q_{i+1} \right )  \sqcap \ter(q_n).
\] 
The set of paths of $\mathcal{A}$ over $w$ can be linearly ordered as follows. For two paths $P_w  =\left (\left ( q_i, a_i,q_{i+1} \right ) \right )_{0 \leq i \leq n-1}$ and $P'_w  =\left (\left ( q'_i, a_i,q'_{i+1} \right ) \right )_{0 \leq i \leq n-1}$ we let 
$$P_w \leq P'_w \quad \text{   iff   }\quad q_0 \ldots q_{n-1} \leq_{lex} q'_0 \ldots q'_{n-1}.$$
The \emph{behavior of} $\mathcal{A}$ is the MK-fuzzy language $\| \mathcal{A} \| : A^* \rightarrow K$ and it is defined in the following way. Let $w \in A^+$ and $\{P_{w,1}, \ldots , P_{w,m} \}$ be the set of all paths of $\mathcal{A}$ over $w$. Furthermore, assume that $P_{w,1} \leq \ldots \leq P_{w,m}$. Then, we set
$$\| \mathcal{A} \|(w)= weight(P_{w,1}) \sqcup \ldots \sqcup weight(P_{w,m}).$$
If there are no paths of $\mathcal{A}$ over $w$, then we let $\| \mathcal{A} \|(w)= \mathbf{0}$. If $w=\varepsilon$, then 
$$\| \mathcal{A} \|(\varepsilon)=(in(q_{i_1}) \sqcap \ter(q_{i_1})) \sqcup \ldots \sqcup (in(q_{i_m}) \sqcap \ter(q_{i_m})) $$
where $I \cap F=\{q_{i_1}, \ldots, q_{i_m} \}$ and $q_{i_1} \leq \ldots \leq q_{i_m}$. If $I \cap F =\emptyset$, then we set $\| \mathcal{A} \|(\varepsilon)=\mathbf{0}$.
An MK-fuzzy language $s :A^* \rightarrow K$ is called \emph{recognizable} if there is an MK-fuzzy automaton $ \mathcal{A}$ over $A$ and $K$ such that $s= \| \mathcal{A} \| $. We denote by $Rec(K,A)$ the class of all recognizable MK-fuzzy languages over $A$ and $K$.

\begin{remark}
By our definition above, we get that $weight\left(P^{(\mathcal{A})}_w \right)=\mathbf{0}$ whenever $in(q_0)=\mathbf{0}$ for every path $P^{(\mathcal{A})}_w  =\left (\left ( q_i, a_i,q_{i+1} \right ) \right )_{0 \leq i \leq n-1}$ of $\mathcal{A}$ over $w=a_0 \ldots a_{n-1}$. Hence, in the sequel, we assume that $in:I\rightarrow K \setminus\{\mathbf{0} \}$ for every MK-fuzzy automaton $\mathcal{A}=\{Q,I,T,F,in,wt,\ter \}$ over $A$ and $K$.
\end{remark}

\begin{example}\label{constant}
Let $\mathbf{k} \in K$. Then the constant MK-fuzzy language $\widetilde{\mathbf{k}}$ is recognizable. Indeed, we consider the MK-fuzzy automaton $\mathcal{A}_{\mathbf{k}}=(\{q\},\{q\},T,\{q\},in,wt,\ter)$ with $T=\{(q,a,q) \mid a \in A \}$ and $in(q)= \mathbf{k}$, $\ter(q)=\mathbf{1}$, and $wt(q,a,q)=\mathbf{1}$ for every $a \in A$. We trivially get $\|\mathcal{A} \|= \widetilde{\mathbf{k}}$.

\end{example}

\begin{proposition}\label{rec-char_rec}
Let $L \subseteq A^*$ be a recognizable language. Then $\mathbf{1}_L \in Rec(K,A)$. 
\end{proposition}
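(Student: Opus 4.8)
The plan is to start from a deterministic finite automaton recognizing $L$ and turn it into an MK-fuzzy automaton whose behavior is the characteristic MK-fuzzy language $\mathbf{1}_L$. Since $L \subseteq A^*$ is recognizable in the classical sense, there is a deterministic complete DFA $\mathcal{M}=(Q,q_0,\delta,F)$ accepting $L$. I would use $\mathcal{M}$ to define an MK-fuzzy automaton $\mathcal{A}=(Q,\{q_0\},T,F,\mathit{in},\mathit{wt},\ter)$ over the same state set $Q$ (fixing an arbitrary linear order on $Q$), with transitions $T=\{(q,a,\delta(q,a))\mid q\in Q,\ a\in A\}$, single initial state $q_0$, and final state set $F$. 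The weight assignments are all trivial: $\mathit{in}(q_0)=\mathbf{1}$, $\ter(q)=\mathbf{1}$ for every $q\in F$, and $\mathit{wt}(\tau)=\mathbf{1}$ for every transition $\tau\in T$.

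The key computation is then to check that $\|\mathcal{A}\|=\mathbf{1}_L$. For a nonempty word $w$, determinism of $\mathcal{M}$ guarantees that there is exactly one path $P_w$ of $\mathcal{A}$ over $w$ starting in $q_0$, namely the run of $\mathcal{M}$ on $w$. Its weight is an MK-conjunction of copies of $\mathbf{1}$ together with the terminal value; using that $\mathbf{1}$ is the unit of $\sqcap$ (Proposition~\ref{oper_ass}), this weight equals $\mathbf{1}$ precisely when the run ends in $F$, i.e.\ when $w\in L$, and otherwise the unique path either ends outside $F$ (so there is no valid path and $\|\mathcal{A}\|(w)=\mathbf{0}$ by definition) or the conjunction with $\ter$ is not taken. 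I would phrase this so that the single-path case makes the noncommutativity of $\sqcup$ irrelevant: with at most one path there is no disjunction to order. The empty word is handled separately by the definition of $\|\mathcal{A}\|(\varepsilon)$: since $I\cap F=\{q_0\}\cap F$ is $\{q_0\}$ if $q_0\in F$ (equivalently $\varepsilon\in L$) and empty otherwise, we get $\mathit{in}(q_0)\sqcap\ter(q_0)=\mathbf{1}$ when $\varepsilon\in L$ and $\mathbf{0}$ when $\varepsilon\notin L$, as required.

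The main obstacle is really a modeling subtlety rather than a hard calculation: the behavior definition only counts paths that both begin in $I$ \emph{and} end in $F$, and when $\delta(q_0,w)\notin F$ there is a run of $\mathcal{M}$ on $w$ but it is \emph{not} a path of $\mathcal{A}$ in the sense of the Definition, so $\|\mathcal{A}\|(w)=\mathbf{0}$. I would make sure to invoke exactly this clause (''if there are no paths of $\mathcal{A}$ over $w$, then $\|\mathcal{A}\|(w)=\mathbf{0}$'') rather than relying on multiplicative-zero behavior of the terminal weight, since $K$ has no multiplicative zero. Working with a deterministic $\mathcal{M}$ is what keeps everything clean, because it avoids having to combine several paths via the noncommutative $\sqcup$; the only place order matters is the trivial at-most-one-path situation, which the chosen linear order on $Q$ handles vacuously.
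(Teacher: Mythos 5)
Your proof is correct and is essentially the construction the paper intends: the paper states Proposition~\ref{rec-char_rec} without an explicit proof, but its adjacent constructions (Example~\ref{constant} and the proof sketch of Theorem~\ref{conj1}, which likewise starts from a deterministic finite automaton for $L$ equipped with an arbitrary linear order) follow exactly your pattern of a complete DFA with all weights set to $\mathbf{1}$. Your handling of the delicate points is right on target --- invoking the ``no paths over $w$'' clause instead of a (nonexistent) multiplicative zero of $K$, using determinism so that the noncommutative $\sqcup$ never has to order more than one path, and treating $\varepsilon$ via the $I \cap F$ clause.
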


\begin{theorem}\label{Rec-disjunction}
The class $Rec(K,A)$ is closed under disjunction.
\end{theorem}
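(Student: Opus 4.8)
The plan is to realize the disjunction by a disjoint union of automata, taking care that the non-commutativity of $\sqcup$ is respected through the linear order on states. Let $s_1,s_2 \in Rec(K,A)$ and choose MK-fuzzy automata $\mathcal{A}_j=(Q_j,I_j,T_j,F_j,in_j,wt_j,\ter_j)$ with $\|\mathcal{A}_j\|=s_j$ for $j=1,2$; after renaming states we may assume $Q_1 \cap Q_2 = \emptyset$. I would build $\mathcal{A}=(Q,I,T,F,in,wt,\ter)$ with $Q=Q_1 \cup Q_2$, $I=I_1 \cup I_2$, $T=T_1 \cup T_2$, $F=F_1 \cup F_2$, and with $in,wt,\ter$ defined componentwise, agreeing with $in_1,wt_1,\ter_1$ on the part coming from $\mathcal{A}_1$ and with $in_2,wt_2,\ter_2$ on the part coming from $\mathcal{A}_2$. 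The decisive point is the choice of the linear order on $Q$: I equip $Q$ with the order that restricts to the given orders on $Q_1$ and on $Q_2$ and, in addition, places every state of $Q_1$ strictly below every state of $Q_2$.

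The first observation is that, because $Q_1$ and $Q_2$ are disjoint and $T_1 \subseteq Q_1 \times A \times Q_1$, $T_2 \subseteq Q_2 \times A \times Q_2$, no transition of $T$ connects the two parts; hence every path of $\mathcal{A}$ over a word $w$ lies entirely inside $\mathcal{A}_1$ or entirely inside $\mathcal{A}_2$, so the set of paths of $\mathcal{A}$ over $w$ is the disjoint union of the paths of $\mathcal{A}_1$ and of $\mathcal{A}_2$ over $w$. Since $in$, $wt$, $\ter$ agree with $in_j$, $wt_j$, $\ter_j$ on the respective parts, the weight in $\mathcal{A}$ of such a path coincides with its weight in the automaton it comes from.

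Next I would check that the order on $Q$ forces every $\mathcal{A}_1$-path to precede every $\mathcal{A}_2$-path. For $w \in A^+$ the order of two paths is decided by the lexicographic comparison of their state sequences $q_0 \ldots q_{n-1}$, and the initial state $q_0$ already lies in $I_1 \subseteq Q_1$ for an $\mathcal{A}_1$-path and in $I_2 \subseteq Q_2$ for an $\mathcal{A}_2$-path; by the choice of the order $q_0 \in Q_1$ is strictly smaller than any $q_0' \in Q_2$, so the $\mathcal{A}_1$-path is $\leq_{lex}$-smaller. Moreover, since the order on $Q$ restricts to the original order on each $Q_j$, the relative ordering among the $\mathcal{A}_1$-paths (respectively the $\mathcal{A}_2$-paths) is exactly the one used in computing $\|\mathcal{A}_j\|$. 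Listing the paths of $\mathcal{A}$ in increasing order therefore yields first the ordered $\mathcal{A}_1$-paths and then the ordered $\mathcal{A}_2$-paths, and by associativity of $\sqcup$ (Proposition \ref{oper_ass}) the resulting disjunction regroups as $\|\mathcal{A}_1\|(w) \sqcup \|\mathcal{A}_2\|(w)$.

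Finally I would dispose of the boundary cases. When one automaton has no path over $w$, its value is $\mathbf{0}$, which is the unit of $\sqcup$ (Proposition \ref{oper_ass}), so the identity persists; likewise $\|\mathcal{A}\|(w)=\mathbf{0}=\mathbf{0}\sqcup\mathbf{0}$ when neither automaton admits a path. For $w=\varepsilon$ the same argument applies to $I \cap F = (I_1 \cap F_1) \cup (I_2 \cap F_2)$, whose $Q_1$-part again precedes its $Q_2$-part. This gives $\|\mathcal{A}\|(w)=s_1(w)\sqcup s_2(w)=(s_1 \sqcup s_2)(w)$ for every $w$, whence $s_1 \sqcup s_2 \in Rec(K,A)$. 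The only genuinely delicate point, and the one the whole construction is designed around, is the non-commutativity of $\sqcup$: it is essential that $Q_1$ sit entirely below $Q_2$ so that the $\mathcal{A}_1$-contributions appear on the left, and essential that $\sqcup$ be associative so that the single ordered disjunction over all paths splits into the two intended sub-behaviours.
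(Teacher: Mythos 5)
Your proposal is correct: the disjoint-union construction with the linear order on $Q_1 \cup Q_2$ placing all of $Q_1$ strictly below all of $Q_2$ is exactly the device needed to respect the non-commutativity of $\sqcup$, and your treatment of the boundary cases (pathless words via the unit $\mathbf{0}$, and $w=\varepsilon$ via $I \cap F = (I_1 \cap F_1) \cup (I_2 \cap F_2)$) is complete. The paper states Theorem \ref{Rec-disjunction} without proof, but your argument is the evidently intended one, matching the order-extension technique the authors use in their other constructions (cf. Theorems \ref{conj1} and \ref{det-Cauchy}).
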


\begin{theorem}\label{conj1}
Let $s \in Rec(K,A)$ and $L \subseteq A^*$ be a recognizable language. Then $\mathbf{1}_L \sqcap s \in Rec(K,A)$.
\end{theorem}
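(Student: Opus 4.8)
The plan is to realize $\mathbf{1}_L \sqcap s$ as the behavior of a product automaton that runs a recognizer of $s$ against a deterministic recognizer of $L$. First I would record the pointwise effect of the operation: since $\mathbf{1}_L(w)\in\{\mathbf{0},\mathbf{1}\}$ for every $w$, Proposition \ref{oper_ass} (with $\mathbf{1}$ the unit of $\sqcap$) and Proposition \ref{zero} give $(\mathbf{1}_L \sqcap s)(w)=\mathbf{1}_L(w)\sqcap s(w)=s(w)$ when $w\in L$ and $=\mathbf{0}$ when $w\notin L$. Here it is essential that the characteristic language sits on the \emph{left} of $\sqcap$, since $\mathbf{0}\sqcap\mathbf{k}=\mathbf{0}$ whereas $\mathbf{k}\sqcap\mathbf{0}$ need not vanish (Proposition \ref{zero}). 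Thus the target MK-fuzzy language is exactly the restriction of $s$ to the words of $L$.

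Next I would fix an MK-fuzzy automaton $\mathcal{A}=(Q,I,T,F,in,wt,\ter)$ with $\|\mathcal{A}\|=s$ and a \emph{deterministic} complete finite automaton $\mathcal{B}=(P,p_0,\delta,G)$ recognizing $L$, and form the product on state set $Q\times P$ equipped with the lexicographic order that compares the $Q$-coordinate first. Its transitions are the triples $((q,p),a,(q',p'))$ with $(q,a,q')\in T$ and $p'=\delta(p,a)$; the initial states are $I\times\{p_0\}$ with initial distribution $in(q)$, the final states are $F\times G$ with terminal distribution $\ter(q)$, and the weight of a transition is $wt(q,a,q')$. The point of taking $\mathcal{B}$ deterministic is that for a fixed word $w=a_0\ldots a_{n-1}$ the $P$-coordinate along any path is forced, $p_i=\delta^*(p_0,a_0\ldots a_{i-1})$, and such a path ends in $F\times G$ precisely when, in addition, $w\in L$.

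I would then verify that for $w\in L$ the map forgetting the $P$-coordinate is a weight-preserving bijection between the paths of the product over $w$ and the paths of $\mathcal{A}$ over $w$ (all of $in$, $wt$, $\ter$ are inherited from $\mathcal{A}$). Because these paths all carry the identical $P$-sequence $p_0\ldots p_{n-1}$, the lexicographic order on the product-state sequences collapses to the lexicographic order on the $Q$-state sequences, so the ordered MK-disjunction defining the product's behavior at $w$ agrees, term by term and in the same order, with the one defining $\|\mathcal{A}\|(w)=s(w)$. For $w\notin L$ there are no accepting paths, so the behavior is $\mathbf{0}$; the empty word is handled analogously via $(I\times\{p_0\})\cap(F\times G)=(I\cap F)\times(\{p_0\}\cap G)$. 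Combining this with the first paragraph gives that the product automaton recognizes $\mathbf{1}_L\sqcap s$.

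The main obstacle, and the only place where the argument goes beyond a routine product construction, is the order bookkeeping: because $\sqcup$ is non-commutative (Proposition \ref{zero}), the two behaviors coincide only if the paths are enumerated in exactly the same order, and this is what forces $\mathcal{B}$ to be deterministic. A nondeterministic recognizer of $L$ could contribute several $P$-sequences for a single word, thereby reordering or duplicating the weight summands and destroying the equality. Checking that the determinism of $\mathcal{B}$ makes the product-state order reduce, uniformly over all paths of a given word, to the $Q$-state order is the crux of the proof.
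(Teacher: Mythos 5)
Your proof is correct and takes essentially the same route as the paper's: a product of a deterministic finite automaton for $L$ with an MK-fuzzy automaton for $s$, all of $in$, $wt$, $\ter$ inherited from the latter, and the product state set ordered with the MK-automaton coordinate taking priority so that, by determinism of the $L$-recognizer, the path order over each word collapses to that of the original MK-fuzzy automaton. Your explicit bookkeeping --- the weight-preserving path bijection, the identity $\mathbf{0}\sqcap\mathbf{k}=\mathbf{0}$ justifying the left placement of $\mathbf{1}_L$, and the empty-word case --- simply fills in details the paper's sketch leaves implicit.
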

\begin{proof} [Sketch] 
Let $\mathcal{A}_1=\left(Q_1,A,q^{(0)}_1,T_1,F_1\right)$ be a deterministic finite automaton accepting $L$ and $\mathcal{A}_2=(Q_2,I_2,  T_2,F_2,in_2, wt_2,\ter_2)$ an MK-fuzzy automaton over $A$ and $K$ accepting $s$. We define an arbitrary linear order $\leq$ on $Q_1$ and consider the MK-fuzzy automaton $\mathcal{A}=(Q_1 \times Q_2,\{q^{(0)}_1 \} \times I_2, T, F_1 \times F_2, in,wt,  \ter  )$ with $T= \{ ((q_1,q_2),a,(q'_1,q'_2)) \mid (q_1,a,q'_1) \in T_1 \text{ and } (q_2,a,q'_2) \in T_2 \}$ and 
\begin{itemize}
\item[-] $in\left(q^{(0)}_1,q_2\right)=in_2(q_2)$ for every $q_2 \in I_2$,
\item[-] $wt((q_1,q_2),a,(q'_1,q'_2))=wt_2(q_2,a,q'_2)$ for every $((q_1,q_2),a,(q'_1,q'_2)) \in T$,
\item[-] $\ter(q_1,q_2)=\ter_2(q_2)$ for every $(q_1,q_2) \in F_1 \times F_2$.
\end{itemize}

\noindent The state set $Q_1 \times Q_2$ is linearly ordered by 
$$(q_1,q_2) \leq (q'_1,q'_2) \qquad  \text{ iff } \qquad  ((q_2 < q'_2) \text{ or }(q_2 = q'_2 \text{ \ and \ } q_1 \leq q'_1)) $$
for every $(q_1,q_2),(q'_1,q'_2) \in Q_1 \times Q_2$.  Then we  show that $\|\mathcal{A}\|=\mathbf{1}_L \sqcap s$. \hfill $\square$
\end{proof}

\begin{theorem}\label{hom}
Let $A$ be a linearly ordered alphabet and $h: A^* \rightarrow B^*$ a strict alphabetic homomorphism. Then $s \in Rec(K,A)$ implies $h(s) \in Rec(K,B)$. 
\end{theorem}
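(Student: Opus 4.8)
The plan is to turn an MK-fuzzy automaton $\mathcal{A}=(Q,I,T,F,in,wt,\ter)$ with $\|\mathcal{A}\|=s$ into an MK-fuzzy automaton $\mathcal{B}$ over $B$ and $K$ whose paths over a word $u\in B^*$ are in \emph{weight-preserving and order-preserving} bijection with the pairs $(w,P)$ consisting of a preimage word $w\in h^{-1}(u)$ and a path $P$ of $\mathcal{A}$ over $w$. The naive attempt is to keep the state set $Q$ and replace each transition $(q,a,q')\in T$ by $(q,h(a),q')$ with the same weight. This fails in general: since $h$ need not be injective on letters, two transitions $(q,a,q')$ and $(q,a',q')$ with $h(a)=h(a')$ collapse to the same triple in $Q\times B\times Q$, and because $\sqcup$ does not distribute over $\sqcap$ (Proposition \ref{not-distr}) we cannot repair this by putting a disjunction of weights on the merged transition. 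I would therefore enrich the state set so as to keep such transitions apart, for instance by recording in the target state the letter of $A$ that was actually read, and set the weight of the enriched transition equal to the original $wt(q,a,q')$. This separation is forced by non-distributivity, and it is exactly the device that must also carry the order information discussed below.

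Next I would fix a linear order on the enriched state set and verify that the induced lexicographic order on state sequences reproduces exactly the order under which $h(s)(u)=\bigsqcup_{w\in h^{-1}(u)}s(w)$ is defined: the lexicographic order of the preimage words $w$ as the primary key, and, for a fixed $w$, the lexicographic order of the paths of $\mathcal{A}$ over $w$ as the secondary key. Once the bijection between $\mathcal{B}$-paths and pairs $(w,P)$ is checked to be both weight-preserving and order-preserving, associativity of $\sqcup$ (Proposition \ref{oper_ass}) yields, for every $u\in B^+$, $\|\mathcal{B}\|(u)=\bigsqcup_{w\in h^{-1}(u)}\bigsqcup_{P\text{ over }w}weight(P)=\bigsqcup_{w\in h^{-1}(u)}s(w)=h(s)(u)$. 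The boundary cases are routine: the empty word is treated through $I\cap F$ exactly as in the definition of the behavior, and if $\mathcal{A}$ admits no path over any $w\in h^{-1}(u)$ then $\mathcal{B}$ admits no path over $u$, so both sides equal $\mathbf{0}$.

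The main obstacle is precisely this order reconciliation. Since $\sqcup$ is non-commutative (Proposition \ref{zero}), the value of a disjunction genuinely depends on the order of its arguments, so matching paths with pairs $(w,P)$ up to weight is not enough: the linear order on the enriched states must force the lexicographic comparison of $\mathcal{B}$-paths to sort \emph{first} by the entire preimage word and only \emph{then} by the $\mathcal{A}$-state sequence. This is delicate because the lexicographic comparison of state sequences is positional, whereas the required order demands that all letter-comparisons take precedence over all state-comparisons. Designing a finite enriched state set together with a linear order that realizes this ``words-first'' refinement, and then proving order-preservation by induction on $|u|$, is where the real work lies; the collision-separation constraint and the non-distributivity of the operations both feed into the same difficulty, since they dictate the enrichment that in turn governs the order. (Note that when $h$ is injective on letters each $h^{-1}(u)$ is a singleton, no transitions collide, and the bare relabelling already works, which isolates the genuine difficulty to the non-injective case.)
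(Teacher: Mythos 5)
Your construction is, up to notation, exactly the paper's: the paper takes $\mathcal{B}=(A\times Q,\ \{\min A\}\times I,\ T',\ A\times F,\ in',wt',\ter')$ with $T'=\{((a,q),b,(a',q'))\mid (q,a',q')\in T \text{ and } h(a')=b\}$ and $wt'((a,q),b,(a',q'))=wt(q,a',q')$, i.e., it records in the target state the letter of $A$ actually consumed, which is precisely the enrichment you propose; it then establishes the same weight-preserving one-to-one correspondence between $\mathcal{B}$-paths over $u$ and pairs $(w,P)$ with $w\in h^{-1}(u)$, and treats $\varepsilon$ through $(\{\min A\}\times I)\cap(A\times F)$, as you do. So your overall strategy, your reason why the bare relabelling fails for non-injective $h$, and your boundary cases all coincide with the paper.

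The genuine gap is that you stop at the one step that carries the whole proof: you never define the linear order on the enriched state set and never prove order-preservation, explicitly deferring this as ``where the real work lies''. A proposal that reduces the theorem to its only nontrivial verification establishes nothing; worse, your own diagnosis (positional lexicographic comparison of state sequences versus the required ``all letters before all states'' precedence) is a real obstruction to the plan as you state it. Concretely: in the enriched automaton every path begins in a state whose letter component is the dummy $\min A$, so two paths over distinct one-letter preimages $a<a'$ of the same $b\in B$ are compared at position $0$ by their initial $\mathcal{A}$-states alone. If $T$ contains $(p,a,p),(q,a,p),(p,a',p),(q,a',p)$ with $I=\{p,q\}$, $F=\{p\}$, $p<q$, then matching the within-word path order forces $(\min A,p)<(\min A,q)$, while placing the $q$-path over $a$ before the $p$-path over $a'$ (words first!) forces $(\min A,q)<(\min A,p)$ --- so \emph{no} linear order on the enriched states realizes the words-first order, and since $\sqcup$ is non-commutative (Proposition \ref{zero}) the discrepancy cannot be dismissed; delaying the letter information by any bounded amount merely shifts the same conflict one position later. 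It is fair to note that the paper's own proof disposes of this point in a single asserted line (``Moreover, if $w_1\leq\ldots\leq w_m$ \ldots\ then \ldots'') for the canonical product order on $A\times Q$, to which your worry applies verbatim; but flagging the sensitive step is not filling it --- a complete proof must either exhibit an argument that the order actually obtained yields the same disjunction value as $\bigsqcup_{w\in h^{-1}(u)}s(w)$, or modify the construction, and your proposal does neither.
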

\begin{proof}
Let $\mathcal{A}=(Q,I,T,F,in,wt,\ter)$ be an MK-fuzzy automaton over $A$ and $K$ accepting $s$. We consider the MK-fuzzy automaton $\mathcal{B}=(A\times Q, \{\min A\}\times I,T',A\times F,in',wt',\ter')$ with $T'=\{((a,q),b,(a'q')) \mid (q,a',q') \in T \text{ and } h(a')=b  \}$. The weight mappings $in', wt', \ter'$ are defined respectively, by 
\begin{itemize}
\item[-] $in'(\overline{a},q)=in(q)$, with $\overline{a}=\min A$ and every $q \in I$,
\item[-] $wt'((a,q),b,(a',q'))=wt(q,a',q')$, for every $((a,q),b,(a',q')) \in T'$, and
\item[-] $\ter'(a,q)=\ter(q)$, for every $(a,q) \in A \times F$.
\end{itemize}

\noindent Let $w =a_0 \ldots a_{n-1}  \in A^+$ and $P_w^{(\mathcal{A})}=\left ( (q_i,a_i,q_{i+1})\right)_{0 \leq i \leq n-1}$ be a path of $\mathcal{A}$ over $w$. By definition of the MK-fuzzy automaton $\mathcal{B}$ there is a unique path 
$$P_{h(w)}^{(\mathcal{B})}= ((\overline{a},q_0),h(a_0),(a_0,q_1)) ((a_0,q_1),h(a_1),(a_1,q_2)) \ldots ((a_{n-2},q_{n-1}),h(a_{n-1}),(a_{n-1},q_n))$$
of $\mathcal{B}$ over $h(w)$, and by a straightforward calculation we get $weight \left(P_{h(w)}^{(\mathcal{B})} \right)=weight \left(P_w^{(\mathcal{A})} \right)$. Conversely, let $u=b_0 \ldots  b_{n-1} \in B^+$ and 
$$P_u^{(\mathcal{B})}= ((\overline{a},q_0),b_0,(a_0,q_1)) ((a_0,q_1),b_1,(a_1,q_2)) \ldots ((a_{n-2},q_{n-1}),b_{n-1},(a_{n-1},q_n))$$
be a path of $\mathcal{B}$ over $u$. Then, $u=h(w)$ where $w=a_0 \ldots a_{n-1} \in A^+$. Moreover, $P_w^{(\mathcal{A})}=\left ( (q_i,a_i,q_{i+1})\right)_{0 \leq i \leq n-1}$ is a path of $\mathcal{A}$ over $w$ and $weight \left(P_u^{(\mathcal{B})} \right)=weight \left(P_w^{(\mathcal{A})} \right)$. Hence, for every $u\in B^+$, if $w_1, \ldots, w_m $ are all the words in $A^+$ such that $h(w_i)=u$ ($1 \leq i \leq m$), then there is a one-to-one correspondence between the paths 
$$P_{w_1,1}^{(\mathcal{A})}, \ldots, P_{w_1,j_1}^{(\mathcal{A})},  \ldots, P_{w_m,1}^{(\mathcal{A})}, \ldots, P_{w_m,j_m}^{(\mathcal{A})}  $$
of $\mathcal{A}$, respectively over $w_1, \ldots, w_m$, and the paths 
$$P_{u,1}^{(\mathcal{B})}, \ldots, P_{u,j_1}^{(\mathcal{B})}, P_{u,j_1+1}^{(\mathcal{B})}, \ldots, P_{u,j_1+j_2}^{(\mathcal{B})}, \ldots, P_{u,k}^{(\mathcal{B})} $$
of $\mathcal{B}$ over $u$, where $P_{w_l,r_l}^{(\mathcal{A})}$ corresponds to $P_{u,j_1+ \ldots +j_{l-1}+r_l}^{(\mathcal{B})}$ for every $1 \leq l \leq m$ and $1 \leq r_l \leq j_l$. Then we get $weight\left(P_{w_l,r_l}^{(\mathcal{A})} \right) = weight\left( P_{u,j_1+ \ldots +j_{l-1}+r_l}^{(\mathcal{B})} \right ) $. Moreover, if $w_1 \leq \ldots \leq w_m$ and
$$P_{w_1,1}^{(\mathcal{A})} \leq  \ldots \leq P_{w_1,j_1}^{(\mathcal{A})},  \ldots,   P_{w_m,1}^{(\mathcal{A})} \leq  \ldots \leq P_{w_m,j_m}^{(\mathcal{A})},  $$
then 
$$P_{u,1}^{(\mathcal{B})} \leq \ldots \leq P_{u,j_1}^{(\mathcal{B})} \leq P_{u,j_1+1}^{(\mathcal{B})} \leq \ldots \leq P_{u,j_1+j_2}^{(\mathcal{B})} \leq \ldots \leq P_{u,k}^{(\mathcal{B})}.$$ 
Hence we have
\begin{align*}
h(s)(u) & = \bigsqcup_{w \in h^{-1}(u)}s(w)  = s(w_1) \sqcup \ldots \sqcup s(w_m) = \bigsqcup_{1 \leq r_1 \leq j_1}weight \left ( P_{w_1,r_1}^{(\mathcal{A})} \right ) \sqcup \ldots \sqcup \bigsqcup_{1 \leq r_m \leq j_m}weight \left ( P_{w_m,r_m}^{(\mathcal{A})} \right ) \\
&= \bigsqcup_{1 \leq i \leq k} weight \left ( P_{u,i}^{(\mathcal{B})} \right) = \| \mathcal{B} \|(u).
\end{align*}
If $s(\varepsilon) \neq \mathbf{0}$, then let $I\cap F =\{q_{i_1}, \ldots, q_{i_m}  \}$. Then $(\{\min A\} \times I) \cap (A \times F) = \{(\min A,q_{i_1}), \ldots,   (\min A,q_{i_m})  \}$ and by definition of $in'$ and $\ter'$ we get $\|\mathcal{A}\|(\varepsilon)=\|\mathcal{B}\|(\varepsilon)$. Since $h(s)(\varepsilon)=s(\varepsilon)$, we finally conclude that  $h(s)=\|\mathcal{B}\|$, i.e, $h(s) \in Rec(K,B)$, and we are done. \hfill $\square$
\end{proof} 

\begin{theorem}\label{inv-hom}
Let $h: A^* \rightarrow B^*$ be a strict alphabetic homomorphism. Then $s \in Rec(K,B)$ implies $h^{-1}(s) \in Rec(K,A)$. 
\end{theorem}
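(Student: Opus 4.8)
The plan is to take an MK-fuzzy automaton for $s$ over $B$ and relabel its transitions backwards along $h$, keeping all other data unchanged. First I would fix an MK-fuzzy automaton $\mathcal{B}=(Q,I,T,F,in,wt,\ter)$ over $B$ and $K$ with $\|\mathcal{B}\|=s$, and define $\mathcal{A}=(Q,I,T',F,in,wt',\ter)$ over $A$ and $K$ on the \emph{same} linearly ordered state set $Q$, with the same initial states, final states, initial distribution and terminal distribution, and with transitions $T'=\{(q,a,q')\mid a\in A,\ (q,h(a),q')\in T\}$ and $wt'(q,a,q')=wt(q,h(a),q')$ for every $(q,a,q')\in T'$. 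Since $h$ is strict alphabetic, it acts letterwise, $h(a_0\ldots a_{n-1})=h(a_0)\ldots h(a_{n-1})$, so $|h(w)|=|w|$ for every $w\in A^*$.

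The key step is to exhibit a weight- and order-preserving bijection between the paths of $\mathcal{A}$ over a word $w=a_0\ldots a_{n-1}$ and the paths of $\mathcal{B}$ over $h(w)$. I would map a path $P_w=((q_i,a_i,q_{i+1}))_{0\le i\le n-1}$ of $\mathcal{A}$ to $((q_i,h(a_i),q_{i+1}))_{0\le i\le n-1}$, which is a path of $\mathcal{B}$ over $h(w)$ with the identical state sequence $q_0\ldots q_n$. This is a bijection: by the definition of $T'$, the condition $(q_i,a_i,q_{i+1})\in T'$ is exactly $(q_i,h(a_i),q_{i+1})\in T$, so every $\mathcal{B}$-path over $h(w)$ arises from a unique $\mathcal{A}$-path over $w$ (the letters $a_i$ being fixed by $w$). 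Since $wt'(q_i,a_i,q_{i+1})=wt(q_i,h(a_i),q_{i+1})$ and $in,\ter$ coincide, corresponding paths have equal weights.

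Because the MK-disjunction is not commutative, the crucial point is that both $\|\mathcal{A}\|(w)$ and $\|\mathcal{B}\|(h(w))$ aggregate these equal weights in the \emph{same} order; this is immediate, since the path order is defined solely through the lexicographic order of the state sequences $q_0\ldots q_{n-1}$, which corresponding paths share, and $Q$ carries the same order in both automata. Hence $\|\mathcal{A}\|(w)=\|\mathcal{B}\|(h(w))=s(h(w))=h^{-1}(s)(w)$ for every $w\in A^+$; the case of no paths gives $\mathbf{0}$ on both sides, and for $w=\varepsilon$ equality follows directly from $h(\varepsilon)=\varepsilon$ together with the fact that $I\cap F$, $in$ and $\ter$ are left unchanged. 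I expect no serious obstacle here: the only subtlety is the order preservation forced by non-commutativity, and it is handled automatically by keeping $Q$ and its linear order intact. Note in particular that, consistently with the remark preceding the theorem, no order on $A$ is needed, since $h^{-1}(s)$ is defined pointwise without any disjunction over preimages.
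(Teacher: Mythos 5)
Your proof is correct: the backward relabeling $T'=\{(q,a,q')\mid (q,h(a),q')\in T\}$ with $wt'(q,a,q')=wt(q,h(a),q')$ on the unchanged ordered state set $Q$ is exactly the natural construction here (the paper states Theorem~\ref{inv-hom} without proof), and you correctly identify and dispatch the one genuine subtlety forced by the non-commutativity of $\sqcup$, namely that the weight-preserving path bijection is also order-preserving because corresponding paths share the same state sequence $q_0\ldots q_{n-1}$. The boundary cases (no paths, $w=\varepsilon$, and the fact that no linear order on $A$ is needed) are all handled as the paper's conventions require.
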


Next, we show a Nivat-like decomposition theorem for recognizable MK-fuzzy languages. The fundamental Nivat's theorem \cite{Ni:Tr} states a relation among rational transductions and rational languages. A Nivat-like result was proved for weighted automata over semirings in \cite{Dr:Au}. We need some preliminary matter. 
Let $B$ be an alphabet and  $g:B \rightarrow K$ a mapping. Then $g$ can be extended to an MK-fuzzy language $g:B^* \rightarrow K$ by $g(b_0 \ldots b_{n-1})=\bigsqcap_{0 \leq i \leq n-1}g(b_i)$ for every $b_0 \ldots b_{n-1} \in B^+$, $b_0, \ldots, b_{n-1}\in B$, and $g(\varepsilon)= \mathbf{1}$. Then, for a language $L \subseteq B^+$ we define the MK-fuzzy language $L \cap g$ by $L \cap g(w)=g(w)$ if $w\in L$ and $L \cap g(w)= \mathbf{0}$ otherwise, for every $w \in B^*$. It should be clear that $L \cap g= \mathbf{1}_L \sqcap g$. Now we are ready to state our Nivat-like theorem.
\begin{theorem}
Let $A$ be a linearly ordered alphabet and $s$ an MK-fuzzy language over $A$ and $K$ with $s(\varepsilon)=\mathbf{0}$. Then $s$ is recognizable iff there is a linearly ordered alphabet $B$, a recognizable language $L \subseteq B^+$, a mapping $g:B \rightarrow K$, and a strict alphabetic homomorphism $h:B^* \rightarrow A^*$ such that $s=h(L\cap g)$. 
\end{theorem}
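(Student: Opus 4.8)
The plan is to prove both implications, leaning on the closure results already established. For the ``if'' direction, suppose $s=h(L\cap g)$ with $B$, $L\subseteq B^+$, $g$, and $h$ as stated, and recall that $L\cap g=\mathbf{1}_L\sqcap g$. First I would observe that the MK-fuzzy language $g:B^*\rightarrow K$ is itself recognizable: the single-state MK-fuzzy automaton $\mathcal{A}_g=(\{q\},\{q\},T,\{q\},in,wt,\ter)$ with $T=\{(q,b,q)\mid b\in B\}$, $in(q)=\ter(q)=\mathbf{1}$ and $wt(q,b,q)=g(b)$ has, over any word $b_0\ldots b_{n-1}$, a unique path of weight $\mathbf{1}\sqcap\bigsqcap_{0\leq i\leq n-1}g(b_i)\sqcap\mathbf{1}=g(b_0\ldots b_{n-1})$ since $\mathbf{1}$ is the $\sqcap$-unit, so $\|\mathcal{A}_g\|=g$. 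As $L$ is recognizable, Theorem \ref{conj1} gives $L\cap g=\mathbf{1}_L\sqcap g\in Rec(K,B)$, and then Theorem \ref{hom} yields $h(L\cap g)\in Rec(K,A)$, i.e. $s\in Rec(K,A)$.

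For the ``only if'' direction, let $\mathcal{A}=(Q,I,T,F,in,wt,\ter)$ be an MK-fuzzy automaton with $\|\mathcal{A}\|=s$. The idea is to turn each transition of $\mathcal{A}$ into a letter and to fold the boundary weights $in$ and $\ter$ into the first and last letters by means of position flags. Concretely, I would let $B$ consist of the decorated transitions $((q,a,q'),\iota,\tau)$ with $(q,a,q')\in T$, $\iota,\tau\in\{0,1\}$, subject to $q\in I$ whenever $\iota=1$ and $q'\in F$ whenever $\tau=1$, and set $h((q,a,q'),\iota,\tau)=a$, which is a strict alphabetic homomorphism. The mapping $g:B\rightarrow K$ is defined so that $g((q,a,q'),\iota,\tau)$ is $wt(q,a,q')$, pre-composed under $\sqcap$ with $in(q)$ when $\iota=1$ and post-composed with $\ter(q')$ when $\tau=1$; thus $g((q,a,q'),1,0)=in(q)\sqcap wt(q,a,q')$, $g((q,a,q'),0,1)=wt(q,a,q')\sqcap\ter(q')$, and $g((q,a,q'),1,1)=in(q)\sqcap wt(q,a,q')\sqcap\ter(q')$. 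Finally $L\subseteq B^+$ is the set of words $((q_0,a_0,q_1),1,\tau_0)\ldots((q_{n-1},a_{n-1},q_n),\iota_{n-1},1)$ whose transitions chain correctly (the target of each letter is the source of the next, $q_0\in I$, $q_n\in F$) and whose flags are $1$ exactly at the first position (for $\iota$) and the last position (for $\tau$); such $L$ is accepted by a finite automaton tracking the current $\mathcal{A}$-state, hence recognizable.

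With these data, the words in $L\cap h^{-1}(u)$ are in bijection with the paths of $\mathcal{A}$ over $u$, and by associativity of $\sqcap$ (Proposition \ref{oper_ass}) the value $g(w)$ of an encoding word $w$ equals $weight$ of the corresponding path. Because $\mathbf{0}$ is a two-sided unit for $\sqcup$ (Proposition \ref{oper_ass}), the words of $h^{-1}(u)$ lying outside $L$ contribute $\mathbf{0}$ and may be deleted from the ordered disjunction defining $h(L\cap g)(u)$. The empty word is handled separately: since $h$ is strict alphabetic, $h^{-1}(\varepsilon)=\{\varepsilon\}$, and $\varepsilon\notin L$, so $h(L\cap g)(\varepsilon)=\mathbf{0}=s(\varepsilon)$, using the hypothesis $s(\varepsilon)=\mathbf{0}$.

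The step I expect to be the main obstacle is matching the orders, since $\sqcup$ is not commutative: $\|\mathcal{A}\|(u)$ lists paths in the path order (lexicographic order on the source sequences $q_0\ldots q_{n-1}$), whereas $h(L\cap g)(u)$ lists encoding words in the lexicographic order on $B^*$. I would fix the order on $B$ so that letters are compared first by source state $q$ and then by target state $q'$ (labels and flags being determined, for valid encodings over a fixed $u$, by the position). The key point is that for two encoding words agreeing up to position $j-1$, equality of letters forces equality of the carried target states, so the first differing letter has equal sources and differs in its target; the decision made by the word order at position $j$, comparing targets $q_{j+1}$ and $q'_{j+1}$, then coincides with the decision the path order makes one position later, comparing the sources $q_{j+1}$ and $q'_{j+1}$. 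Verifying that this correspondence is order-preserving throughout, so that $h(L\cap g)(u)=\|\mathcal{A}\|(u)=s(u)$ for all $u\in A^+$ (the only delicate residual configuration being two paths differing solely in the final state $q_n$, which the order on $B$ resolves consistently), is the technical heart of the argument.
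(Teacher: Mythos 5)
Your proof is correct, and the ``if'' direction is verbatim the paper's argument (one-state automaton for $g$, then Theorem \ref{conj1} and Theorem \ref{hom}). In the ``only if'' direction you follow the same transitions-as-letters strategy as the paper, but with two refinements that are worth comparing. The paper simply takes $B=T$, $g=wt$, $h(q,a,q')=a$, and $L$ the path language, and then asserts $\bigsqcup_{u\in L\cap h^{-1}(w)}g(u)=\bigsqcup_{P_w}weight(P_w)$; as written this silently drops the factors $in(q_0)$ and $\ter(q_n)$ from $weight(P_w)$, so it is only literally valid when the initial and terminal distributions are constantly $\mathbf{1}$ (which the paper could arrange by first invoking its Lemma \ref{in-ter-one}, stated only later and exactly fitting since $s(\varepsilon)=\mathbf{0}$ and $L\subseteq B^+$ force agreement at $\varepsilon$). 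Your flag decoration $((q,a,q'),\iota,\tau)$ folds $in$ and $\ter$ into $g$ on the first and last letters directly, which by associativity of $\sqcap$ (Proposition \ref{oper_ass}) gives $g$ of an encoding word equal to the path weight without any preprocessing lemma; the price is a larger alphabet, the gain is a self-contained argument. Second, since $\sqcup$ is non-commutative, the theorem genuinely needs the order on $B$ chosen so that the lexicographic order on surviving encoding words matches the path order; the paper never specifies an order on $B$, whereas you construct one (source first, then target) and verify the shift-by-one correspondence between word comparisons and path comparisons, correctly using that $\mathbf{0}$ is a two-sided $\sqcup$-unit to discard the words outside $L$. One small imprecision: your claim that the first differing letter always has equal sources holds only from position $1$ on; at position $0$ the sources themselves may differ, but there your order compares $q_0$ against $q_0'$, which again agrees with the path order, so nothing breaks. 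Finally, note that your ``delicate residual configuration'' (two paths differing only in $q_n$) is in fact a wart in the paper's own definition: its path order compares only $q_0\ldots q_{n-1}$ and so merely preorders such paths; your order on $B$ resolves the tie by comparing $q_n$, which is the evidently intended reading.
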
    
\begin{proof}
We prove firstly the implication ``$\impliedby$''. The  MK-fuzzy language $g$ is recognizable. Indeed, consider the MK-fuzzy automaton $\mathcal{G}=(\{q\},\{q\},T,\{q\},in,wt,\ter)$ over $B$ and $K$, with $in(q)=\ter(q)= \mathbf{1}$ and $wt(q,b,q)=g(b)$ for every $b\in B$. Trivially $\|\mathcal{G}\|=g$. Then, by Proposition \ref{rec-char_rec} and Theorem \ref{conj1} the MK-fuzzy language $\mathbf{1}_L \sqcap g$ is recognizable and hence, $h(L \cap g)$ is recognizable by Theorem \ref{hom}.

Conversely, let $s\in Rec(K,A)$ with $s(\varepsilon)=\mathbf{0}$ and $\mathcal{A}=(Q,I,T,in, wt, \ter)$ be an MK-fuzzy automaton accepting $s$. 
We set $B=T$ and consider the finite automaton $\mathcal{B}=(Q,B,I,T',F)$ with $T'=\{(q,(q,a,q'),q') \\  \mid (q,a,q')\in T \}$. It can be easily seen that $L(\mathcal{B})=\{P_w \mid w \in A^+ \text{ and } P_w \text{ path of } \mathcal{A} \text{ over } w \} \cup C$, where $C=\{\varepsilon \}$ if $I \cap F \neq \emptyset$ and $C=\emptyset$ otherwise. We let $L=L(\mathcal{B})\setminus \{\varepsilon \}$, define the mapping $g:B \rightarrow K$ by $g(q,a,q')=wt(q,a,q')$ for every $(q,a,q') \in B$, and the strict alphabetic homomorphism $h:B^* \rightarrow A^*$ by $h(q,a,q')=a$ for every $(q,a,q') \in B$. Then, for every $w\in A^+$ we get
\begin{align*}
h(L \cap g)(w) & = \bigsqcup_{u \in h^{-1}(w)}L\cap g(u)  = \underset{u \in L}{\underset{u \in h^{-1}(w)}{\bigsqcup}} g(u) = \bigsqcup_{P_w}weight(P_w)  = \|\mathcal{A}\|(w), 
\end{align*} 
i.e., $h(L \cap g)(w)=\|\mathcal{A}\|$ as required, and our proof is completed. \hfill $\square$
\end{proof}

In the sequel, we deal with the deterministic counterpart of our model. An MK-fuzzy automaton $\mathcal{A}=(Q,I,T,F, in,wt, \ter)$ over $A$ and $K$ is called \emph{deterministic} if $I=\{q_0 \} $ and for every $q \in Q, a\in A$ there is at most one $q' \in Q$ such that $(q,a,q') \in T$. Then for every word $w \in A^*$ there is at most one path $P_w$ of $\mathcal{A}$ over $w$, which in turn implies that we can relax the order relation of $Q$. Nevertheless, in the sequel, sometimes we will need the state set of a deterministic MK-fuzzy automaton to be ordered. A deterministic MK-fuzzy automaton $\mathcal{A}$ is simply written as $\mathcal{A}=(Q, q_0,T, F, in, wt,\ter)$. 
An MK-fuzzy language $s \in K\left\langle \left\langle A^*\right\rangle \right\rangle$ is called \emph{deterministically recognizable} if there is a deterministic MK-fuzzy automaton $\mathcal{A}$ over $A$ and $K$ such that $s= \| \mathcal{A} \| $. We denote by $DRec(K,A)$ the class of all deterministically recognizable MK-fuzzy languages over $A$ and $K$.  
An MK-fuzzy automaton  $\mathcal{A}=(Q,I,T,F, in,wt,\ter)$ is called \emph{unambiguous} if for every word $ w \in A^*$ there is at most one path $P_w$ of $\mathcal{A}$ over $A$. Clearly, every deterministic MK-fuzzy automaton is unambiguous as well, but the converse is not always true.

\begin{theorem}\label{DRec-scalars}
Let $s \in DRec(K,A)$ and $\mathbf{k} \in K$. Then $\mathbf{k}  \sqcap s, s \sqcap \mathbf{k} \in DRec(K,A).$ 
\end{theorem}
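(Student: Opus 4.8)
The plan is to absorb the scalar into the distributions of a deterministic automaton for $s$, using the associativity of $\sqcap$ and its units (Proposition \ref{oper_ass}). Fix a deterministic MK-fuzzy automaton $\mathcal{A}=(Q,q_0,T,F,in,wt,\ter)$ with $\|\mathcal{A}\|=s$; recall that for each $w$ there is at most one path $P_w$, so no disjunction over paths is involved and the order on $Q$ is irrelevant. The whole point is that a scalar on the left is a factor that sits to the left of $in(q_0)$ in the weight of every path, while a scalar on the right sits to the right of $\ter(q_n)$, so each can be folded into one distribution.

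For the right scalar $s\sqcap\mathbf{k}$ I would keep $Q,q_0,T,F,in,wt$ untouched and replace the terminal distribution by $\ter'(q)=\ter(q)\sqcap\mathbf{k}$ for $q\in F$. If $w$ has the unique path $P_w$ ending in $q_n$, then by associativity the new weight is $in(q_0)\sqcap\bigsqcap_i wt(\ldots)\sqcap(\ter(q_n)\sqcap\mathbf{k})=weight(P_w)\sqcap\mathbf{k}=s(w)\sqcap\mathbf{k}$, and the same computation with $q_n=q_0$ settles $w=\varepsilon$. If $w$ has no path, the new automaton still has none and assigns $\mathbf{0}$, which matches $(s\sqcap\mathbf{k})(w)=\mathbf{0}\sqcap\mathbf{k}=\mathbf{0}$ by Proposition \ref{zero}. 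The automaton remains deterministic, so $s\sqcap\mathbf{k}\in DRec(K,A)$.

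For the left scalar $\mathbf{k}\sqcap s$ the symmetric move is to replace the initial distribution by $in'(q_0)=\mathbf{k}\sqcap in(q_0)$. For every $w$ carrying a path (and for $w=\varepsilon$ when $q_0\in F$) associativity again gives $\|\mathcal{A}'\|(w)=(\mathbf{k}\sqcap in(q_0))\sqcap\cdots=\mathbf{k}\sqcap s(w)$, exactly as wanted. The delicate point, and the one I expect to be the real obstacle, is a word $w$ with \emph{no} path: the modified automaton still assigns $\mathbf{0}$, whereas $(\mathbf{k}\sqcap s)(w)=\mathbf{k}\sqcap\mathbf{0}=(0,t+f+u,0,e)$ by Proposition \ref{zero}, which equals $\mathbf{0}$ only when $e(\mathbf{k})=0$. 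This is precisely the asymmetry of $K$: $\mathbf{0}$ is a left zero of $\sqcap$ but not a right zero, so, unlike the right scalar, the left scalar does not vanish on the words where $s$ vanishes.

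To repair this I would first observe that when $e(\mathbf{k})=0$ the naive construction already works. In general I would pass to a completed automaton: add a fresh sink state, redirect every missing transition to it, let it loop on all letters, and choose the terminal data so that every word acquires exactly one accepting run, arranging the weights so that runs of $\mathcal{A}$-accepting words still evaluate to $\mathbf{k}\sqcap s(w)$ while all remaining words evaluate to the constant $\mathbf{k}\sqcap\mathbf{0}$. The genuine difficulty here is that the error component of $\sqcap$ is non-decreasing and accumulates along a run, so one cannot simply "reset" a partially computed weight to $\mathbf{k}\sqcap\mathbf{0}$ by a terminal factor; the construction must guarantee that the error produced on the rejecting branch is exactly $e(\mathbf{k})$ irrespective of the prefix already read. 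Carrying out this bookkeeping is where the bulk of the work lies. A useful ingredient is that the constant MK-fuzzy language $\widetilde{\mathbf{k}\sqcap\mathbf{0}}$ is itself deterministically recognizable by a one-state total automaton (as in Example \ref{constant}, with initial value $\mathbf{k}\sqcap\mathbf{0}$), so the task reduces to selecting, word by word, between this constant and the value $\mathbf{k}\sqcap s$ on the regular set of $\mathcal{A}$-accepted words; everything outside this selection is a routine associativity calculation.
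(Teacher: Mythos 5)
Your right-scalar argument is complete and correct: by associativity (Proposition \ref{oper_ass}) the factor $\mathbf{k}$ folds into the terminal distribution, and the pathless words and the empty word are covered because $\mathbf{0} \sqcap \mathbf{k} = \mathbf{0}$ (Proposition \ref{zero}); note the paper states Theorem \ref{DRec-scalars} without a printed proof, and this absorption of the scalar into $\ter$ (resp.\ $in$) is clearly the intended construction. For the left scalar you also correctly isolate the one problematic case, namely words without a path, where $(\mathbf{k}\sqcap s)(w)=\mathbf{k}\sqcap\mathbf{0}=(0,1-e(\mathbf{k}),0,e(\mathbf{k}))\neq\mathbf{0}$ unless $e(\mathbf{k})=0$. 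But at that point your text stops being a proof: the sink-completion is announced, its central difficulty (producing error exactly $e(\mathbf{k})$ on the rejecting branch irrespective of the prefix already read) is acknowledged, and the bookkeeping is deferred. The closing reduction to ``selecting, word by word, between the constant and $\mathbf{k}\sqcap s$'' is not a step forward: a deterministic MK-fuzzy automaton has a single run whose weight accumulates, so this selection is the obstacle restated.

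In fact no bookkeeping can complete your plan; the gap is essential. As you note, $e(\mathbf{x}\sqcap\mathbf{y})=e(\mathbf{x})+(t(\mathbf{x})+u(\mathbf{x}))\,e(\mathbf{y})\geq e(\mathbf{x})$, so the error component is non-decreasing along every run. Take $A=\{a,b\}$, the one-state deterministic automaton with $in=\ter=\mathbf{1}$ and an $a$-loop of weight $\mathbf{v}=(1/2,0,0,1/2)$, so that $s(a^n)=\mathbf{v}^{\sqcap n}=(2^{-n},0,0,1-2^{-n})$ and $s(w)=\mathbf{0}$ for every $w$ containing $b$; take $\mathbf{k}=(1/2,0,1/4,1/4)$. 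Suppose some deterministic $\mathcal{A}'$ had $\|\mathcal{A}'\|=\mathbf{k}\sqcap s$; let $p_n$ be the state reached on $a^n$, let $\mathrm{pre}_n$ be the accumulated value (initial weight conjuncted with the trunk transition weights) and $E_n=e(\mathrm{pre}_n)$. Since $(\mathbf{k}\sqcap s)(a^nb)=(0,3/4,0,1/4)\neq\mathbf{0}$, the unique path over $a^nb$ exists and its weight has error $1/4$, so monotonicity forces $E_n\leq 1/4$ for every $n$. Since $(\mathbf{k}\sqcap s)(a^n)=\mathbf{k}\sqcap\mathbf{v}^{\sqcap n}$ has positive $t$-component, $p_n$ is final and $\mathrm{pre}_n\sqcap\ter'(p_n)=\mathbf{k}\sqcap\mathbf{v}^{\sqcap n}$. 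Comparing $f$-components (the right side has $f$-component $0$) gives $f(\mathrm{pre}_n)=0$, hence $t(\mathrm{pre}_n)+u(\mathrm{pre}_n)=1-E_n$; comparing $e$-components (the right side has $e$-component $1-(3/4)2^{-n}$) gives $e(\ter'(p_n))=\bigl(1-(3/4)2^{-n}-E_n\bigr)/(1-E_n)$, which for $E_n\in[0,1/4]$ lies in the interval $\bigl[\,1-2^{-n},\,1-(3/4)2^{-n}\,\bigr]$. These intervals are pairwise disjoint, so the values $e(\ter'(p_n))$ are pairwise distinct for all $n$, contradicting finiteness of the state set. Hence $\mathbf{k}\sqcap s\notin DRec(K,A)$ here, and your sink construction cannot be salvaged by any choice of weights. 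What your argument genuinely establishes is: $s\sqcap\mathbf{k}\in DRec(K,A)$ unconditionally, and $\mathbf{k}\sqcap s\in DRec(K,A)$ when $e(\mathbf{k})=0$ (your naive initial-distribution modification) or when $t(\mathbf{k})+u(\mathbf{k})=0$ (then $\mathbf{k}$ is left-absorbing, $\mathbf{k}\sqcap s=\widetilde{\mathbf{k}}$, which is deterministically recognizable as in Example \ref{constant}). If the paper's omitted proof is the same absorption argument, it is subject to exactly the case you flagged.
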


Next, we investigate the closure of the class of deterministically recognizable MK-fuzzy languages under Cauchy product. More precisely, we show that the Cauchy product of two deterministically recognizable MK-fuzzy languages is a  recognizable MK-fuzzy language. For this, we will need the notion of a normalized MK-fuzzy automaton and some preliminary results which present their own interest. 

\begin{definition}
An MK-fuzzy automaton $\mathcal{A}=(Q, I,T, F, in, wt,\ter)$ is called \emph{normalized} if $I=\{ q_{in} \}$, $q_{in} \notin F$, $in(q_{in})=\mathbf{1}$, $\ter(q)=\mathbf{1}$ for every $q \in F$, $(q,a,q_{in}) \notin T$ for every $q \in Q, a \in A$, and $(q,a,q') \notin T$ for every $q \in F, a\in A$, and $q' \in Q$. 
\end{definition}

By the above definition, if $\mathcal{A}$ is a normalized MK-fuzzy automaton, then $\| \mathcal{A} \| (\varepsilon) = \mathbf{0}$. A normalized MK-fuzzy automaton $\mathcal{A}=(Q, I,T, F, in, wt,\ter)$ will be simply denoted by $\mathcal{A}=(Q, q_{in},T, F, wt)$. 

\begin{proposition} \label{normalization}
For every deterministic MK-fuzzy automaton $\mathcal{A}=(Q, q_0,T, F, in, wt, \ter)$ we can effectively construct a normalized unambiguous MK-fuzzy automaton $\mathcal{A}'$ such that $\|\mathcal{A}'\|(w)=\|\mathcal{A}\|(w)$ for every $w \in A^+$, and $\|\mathcal{A}'\|(\varepsilon)=\mathbf{0}$.
\end{proposition}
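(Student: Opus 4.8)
The plan is to build $\mathcal{A}'$ from $\mathcal{A}$ by two fresh additions: prepend a new initial state $q_{in}$ that absorbs the old initial weight, and duplicate each final state into a fresh dead-end copy that absorbs the old terminal weight. Concretely I take $Q' = \{q_{in}\} \cup Q \cup \{\widehat{q} \mid q \in F\}$ with $q_{in}$ and all $\widehat{q}$ fresh, and set $I' = \{q_{in}\}$, $in'(q_{in}) = \mathbf{1}$, $F' = \{\widehat{q} \mid q \in F\}$, and $\ter'(\widehat{q}) = \mathbf{1}$ for every $q \in F$. Since $I' \cap F' = \emptyset$ and $\widehat{}$-states will carry no outgoing transitions, normalization and $\|\mathcal{A}'\|(\varepsilon)=\mathbf{0}$ will be immediate from the construction.

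For the transitions I distinguish \emph{continuing} and \emph{finishing} moves. For every $(q_0,a,p)\in T$ I add the entry transition $(q_{in},a,p)$ with weight $in(q_0)\sqcap wt(q_0,a,p)$; for every $(p,a,p')\in T$ I add the internal transition $(p,a,p')$ with weight $wt(p,a,p')$ (I keep a copy of \emph{all} original transitions, including those leaving $q_0$, because $q_0$ may be re-entered in the middle of a path, where no initial weight must be applied). In addition, whenever the target $p'$ is final I add a finishing transition into its dead-end copy: $(q_{in},a,\widehat{p'})$ with weight $in(q_0)\sqcap wt(q_0,a,p')\sqcap\ter(p')$ in the entry case, and $(p,a,\widehat{p'})$ with weight $wt(p,a,p')\sqcap\ter(p')$ in the internal case. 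The whole construction is finite and explicit, hence effective.

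The delicate point — and the reason one cannot just rescale transition weights carelessly — is that $\sqcap$ is \emph{not} commutative, so the order of the factors in $weight(P_w)$ must be respected. This is exactly why $in(q_0)$ is folded into the \emph{first} factor and $\ter(q_n)$ into the \emph{last}: in the defining product of $weight(P_w)$ the symbol $in(q_0)$ stands leftmost and $\ter(q_n)$ rightmost, so these foldings preserve the order; then by associativity of $\sqcap$ (Proposition~\ref{oper_ass}) and the fact that $\mathbf{1}$ is its two-sided unit, the weight of the $\mathcal{A}'$-path $(q_{in},a_0,q_1)(q_1,a_1,q_2)\cdots(q_{n-1},a_{n-1},\widehat{q_n})$ (for $n=1$ this is the single entry-finishing transition $(q_{in},a_0,\widehat{q_1})$) equals $weight(P_w)$ exactly.

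It then remains to match behaviors. I would set up a bijection between paths of $\mathcal{A}$ over any $w\in A^+$ and accepting paths of $\mathcal{A}'$ over $w$: every accepting path of $\mathcal{A}'$ starts at $q_{in}$, proceeds through continuing transitions, and must terminate precisely at a copy $\widehat{q}$, since these are the only final states and have no outgoing moves; erasing the hats and mapping $q_{in}\mapsto q_0$ recovers a unique $\mathcal{A}$-path, and conversely. Because a $\widehat{}$-state cannot be left, a path over a word of length $n$ can enter a $\widehat{}$-state only at step $n$, which forces, at each position, the choice between the continuing and the finishing transition; combined with the determinism of $\mathcal{A}$ this yields at most one accepting path, so $\mathcal{A}'$ is unambiguous and $\|\mathcal{A}'\|(w)=weight(P_w)=\|\mathcal{A}\|(w)$ for all $w\in A^+$. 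The main obstacle is precisely this last step: keeping the order-sensitive absorption of $in$ and $\ter$ correct under the non-commutative $\sqcap$, and verifying that duplicating final states (which introduces genuine nondeterminism out of $q_{in}$ and out of internal states on a letter) does not destroy unambiguity.
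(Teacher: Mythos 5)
Your construction is correct: folding $in(q_0)$ into the weight of the first transition and $\ter(q_n)$ into a fresh dead-end copy $\widehat{q_n}$ respects the left-to-right order of the factors in $weight(P_w)$ under the non-commutative $\sqcap$ (with $in'(q_{in})=\ter'(\widehat{q_n})=\mathbf{1}$ absorbed by associativity and the unit law), and since the hatted states admit no outgoing transitions, an accepting path over a word of length $n$ must take continuing transitions for the first $n-1$ steps and a finishing one at step $n$, so determinism of $\mathcal{A}$ forces at most one path per word, giving unambiguity, while $I'\cap F'=\emptyset$ yields $\|\mathcal{A}'\|(\varepsilon)=\mathbf{0}$. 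The paper states Proposition~\ref{normalization} without an explicit proof, and your argument is precisely the standard normalization construction the authors evidently intend, adapted correctly to the bimonoid setting; the only point left tacit is that an arbitrary linear order on $Q'$ suffices, which is harmless because unambiguity makes the order of paths irrelevant.
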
   

\begin{lemma}\label{unambiguous-scalars}
Let $s \in K\left\langle \left\langle A^* \right\rangle
\right\rangle $ and $\mathbf{k} \in K$. If $s$ is accepted by a normalized unambiguous MK-fuzzy automaton, then $s \sqcap \mathbf{k}$ is accepted also by a normalized unambiguous MK-fuzzy automaton. 
\end{lemma}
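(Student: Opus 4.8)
The plan is to realize the right scalar conjunction by a purely local modification of the weight function, pushing $\mathbf{k}$ into exactly those transitions that enter a final state. Let $\mathcal{A}=(Q,q_{in},T,F,wt)$ be a normalized unambiguous MK-fuzzy automaton with $\|\mathcal{A}\|=s$. I would define $\mathcal{A}'$ to have the same state set $Q$, initial state $q_{in}$, transition set $T$, and final set $F$ as $\mathcal{A}$, and to keep $in(q_{in})=\mathbf{1}$ together with $\ter(q)=\mathbf{1}$ for every $q\in F$; only the weight function is changed, by setting $wt'(q,a,q')=wt(q,a,q')\sqcap\mathbf{k}$ when $q'\in F$ and $wt'(q,a,q')=wt(q,a,q')$ otherwise.

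First, since $\mathcal{A}'$ differs from $\mathcal{A}$ only in the weights, it is automatically normalized --- all normalization conditions concern $q_{in}$, $F$, and the shape of $T$, none of which change --- and it is unambiguous, because the set of paths of an automaton over a word depends only on $Q,T,q_{in},F$ and not on the weights; in particular $\mathcal{A}'$ still has at most one path over each word. The core step is the weight computation. Because $\mathcal{A}$ is normalized, no transition leaves a final state, so along any path $P^{(\mathcal{A})}_w=\left(\left(q_i,a_i,q_{i+1}\right)\right)_{0\leq i\leq n-1}$ (necessarily with $q_0=q_{in}$ and $q_n\in F$) exactly one transition, the last one $(q_{n-1},a_{n-1},q_n)$, enters a final state. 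Thus $\mathcal{A}'$ reweights precisely this last transition, and using the associativity of $\sqcap$ (Proposition \ref{oper_ass}) together with $in(q_{in})=\ter(q_n)=\mathbf{1}$, I would obtain $weight\left(P^{(\mathcal{A}')}_w\right)=weight\left(P^{(\mathcal{A})}_w\right)\sqcap\mathbf{k}=s(w)\sqcap\mathbf{k}$ for every $w\in A^+$ admitting a path.

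It remains to check the boundary cases. For $w\in A^+$ with no path, $\|\mathcal{A}'\|(w)=\mathbf{0}=\mathbf{0}\sqcap\mathbf{k}=s(w)\sqcap\mathbf{k}$, where $\mathbf{0}\sqcap\mathbf{k}=\mathbf{0}$ by Proposition \ref{zero}; and for $w=\varepsilon$, normalization gives $\|\mathcal{A}'\|(\varepsilon)=\mathbf{0}=s(\varepsilon)\sqcap\mathbf{k}$ since $s(\varepsilon)=\mathbf{0}$. Together these yield $\|\mathcal{A}'\|=s\sqcap\mathbf{k}$. The main obstacle --- and the reason the normalized form is indispensable --- is the non-commutativity of $\sqcap$: the scalar sits on the right, so it must be absorbed into the \emph{final} step of each computation, and this can be done by a single local reweighting only because normalization guarantees a unique final transition per path carrying a trivial terminal weight. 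Correspondingly, the no-path case works out only because in $s(w)\sqcap\mathbf{k}$ the $\mathbf{0}$ occurs on the left, where it is absorbing, whereas $\mathbf{k}\sqcap\mathbf{0}\neq\mathbf{0}$ in general.
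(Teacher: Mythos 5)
Your proof is correct and is precisely the construction the lemma is designed for: the paper states Lemma \ref{unambiguous-scalars} without giving a proof, and pushing $\mathbf{k}$ into the unique transition entering $F$ (well-defined exactly because normalization forbids transitions leaving final states and forces $in(q_{in})=\ter(q)=\mathbf{1}$, and sound despite non-distributivity exactly because unambiguity leaves at most one path per word) is the evident intended argument. Your treatment of the boundary cases, in particular that $s(w)\sqcap\mathbf{k}=\mathbf{0}\sqcap\mathbf{k}=\mathbf{0}$ by Proposition \ref{zero} when no path exists, while $\mathbf{k}\sqcap\mathbf{0}\neq\mathbf{0}$ in general, correctly identifies why the right scalar conjunction poses no obstacle here.
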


\begin{theorem}\label{det-Cauchy}
Let $r,s \in DRec(K,A)$. Then $rs \in Rec(K,A)$.
\end{theorem}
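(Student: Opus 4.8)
The plan is to prove that the Cauchy product $rs$ of two deterministically recognizable MK-fuzzy languages is recognizable, using the normalization machinery just developed. The key idea is the classical construction: connect an automaton for $r$ to an automaton for $s$ by $\varepsilon$-type linking, i.e., route the final states of the first into the initial state of the second. Since the Cauchy product splits a word $w = a_0 \ldots a_{n-1}$ at every position $i$ as $r(a_0 \ldots a_{i-1}) \sqcap s(a_i \ldots a_{n-1})$ and then MK-disjoins over all splits, a path in the combined automaton should consist of a prefix-path through the $r$-automaton followed by a suffix-path through the $s$-automaton, with the split point being exactly where control passes between the two components.

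First I would invoke Proposition~\ref{normalization} to replace the deterministic automata for $r$ and $s$ by normalized unambiguous MK-fuzzy automata $\mathcal{A}_r = (Q_r, q_{in}^r, T_r, F_r, wt_r)$ and $\mathcal{A}_s = (Q_s, q_{in}^s, T_s, F_s, wt_s)$. Normalization is essential here: because $q_{in}$ has no incoming transitions and final states have no outgoing ones, the split point in any accepting path of the product automaton is forced to be unique and unambiguous, so each split $r(\text{prefix}) \sqcap s(\text{suffix})$ corresponds to exactly one family of paths. I would then build the product automaton $\mathcal{C}$ on state set $Q_r \,\dot\cup\, Q_s$ (disjoint union), keeping the transitions of both, taking $q_{in}^r$ as the sole initial state and $F_s$ as the final states, and adding \emph{linking transitions} $(q, a, q')$ whenever $(q,a,q') \in T_r$ with $q' \in F_r$ \emph{and} $(q_{in}^s, a, q'') \in T_s$, redirecting into $q''$; more precisely the linking must glue a completed $r$-run (ending in $F_r$) to a starting $s$-run. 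The weight on a transition copied from $\mathcal{A}_r$ or $\mathcal{A}_s$ is inherited; the subtlety is that the scalar contribution $\ter_r$ at the junction must be absorbed, which is where Lemma~\ref{unambiguous-scalars} enters: the scalar conjunction $s \sqcap \mathbf{k}$ (or the terminal values of the $r$-part) can be pushed into a normalized unambiguous automaton without disturbing the structure.

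The linear order on $Q_r \,\dot\cup\, Q_s$ must be chosen so that, after the split position is fixed, paths of $\mathcal{C}$ over $w$ are ordered lexicographically in exactly the order induced by first the split index (outer disjunction over split points $i$) and then the lexicographic order of the $r$-path followed by the $s$-path (inner disjunction). I would place all $Q_r$ states below all $Q_s$ states, or arrange the order so that the lexicographic comparison of a product path's state sequence matches the intended $\bigsqcup_i$ ordering of the Cauchy product. One then verifies $\| \mathcal{C} \|(w) = rs(w)$ for every $w \in A^+$ by matching paths of $\mathcal{C}$ with pairs (prefix-path in $\mathcal{A}_r$, suffix-path in $\mathcal{A}_s$) via the bijection induced by the junction, and checking weights agree through associativity of $\sqcap$ (Proposition~\ref{oper_ass}); the case $w = \varepsilon$ is handled by the normalization guarantee $\|\mathcal{A}_r\|(\varepsilon) = \|\mathcal{A}_s\|(\varepsilon) = \mathbf{0}$, which forces $rs(\varepsilon) = \mathbf{0} = \|\mathcal{C}\|(\varepsilon)$.

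\textbf{The main obstacle} is the ordering of paths, not the weight bookkeeping. Because MK-disjunction $\sqcup$ is neither commutative nor idempotent (Proposition~\ref{zero}), the \emph{order} in which path-weights are MK-disjoined is semantically significant, and the definition of the Cauchy product fixes a specific order: outermost over the split index $i$ (from $i=0$ to $i=n$), and within each split, the already-ordered disjunction of $r$-weights combined with $s$-weights. I must therefore engineer the linear order on $\mathcal{C}$'s state set so that the lexicographic ordering of \emph{complete} product paths reproduces this nested ordering \emph{exactly} — the split index becoming the dominant sort key. This is delicate because the split index is not directly a state coordinate but is encoded by \emph{when} a path crosses from $Q_r$ into $Q_s$; I expect to need a careful argument that crossing later (a longer $r$-prefix) corresponds consistently to a larger or smaller lexicographic value, and that this is compatible simultaneously with the inner orderings inherited from $\mathcal{A}_r$ and $\mathcal{A}_s$. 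Getting this single linear order to realize the three-layered disjunction order of the Cauchy product is the crux of the proof.
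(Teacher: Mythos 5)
Your core construction is the same as the paper's: normalize both automata via Proposition~\ref{normalization}, build a concatenation automaton on the disjoint union by redirecting every $\mathcal{A}_r$-transition that enters $F_r$ into $q_{in}^{s}$ (the paper takes state set $(Q_1\setminus F_1)\cup Q_2$), and choose the linear order on the states so that the lexicographic order of paths tracks the split index; unambiguity gives at most one path per split, which is what makes a single linear order sufficient. One correction on the ordering, which you rightly flag as the crux: the paper sets $\max Q_2 \leq \min Q_1$, i.e., the states of the \emph{second} automaton lie below those of the first. Your first-offered choice (all $Q_r$ below all $Q_s$) is the wrong direction: two paths with split indices $i<j$ first differ at position $i$, where the split-$i$ path sits in $q_{in}^{s}\in Q_s$ and the split-$j$ path sits in $Q_r$, so with $Q_r$ below $Q_s$ the later split would come lexicographically first, reversing the Cauchy order. (Your hedge stating the correct criterion saves the idea, but the concrete order must be $Q_s$ below $Q_r$.)

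The genuine gap is your treatment of the boundary splits. Your claim that normalization ``forces $rs(\varepsilon)=\mathbf{0}=\|\mathcal{C}\|(\varepsilon)$'' is false: Proposition~\ref{normalization} only preserves behavior on $A^+$, while $r(\varepsilon)$ and $s(\varepsilon)$ may be nonzero, and $rs(\varepsilon)=r(\varepsilon)\sqcap s(\varepsilon)$. Moreover, the glued automaton realizes only the splits $1\leq i\leq n-1$: no path of $\mathcal{C}$ produces the empty-prefix term $r(\varepsilon)\sqcap s(w)$ or the empty-suffix term $r(w)\sqcap s(\varepsilon)$. The latter cannot be discarded even when $s(\varepsilon)=\mathbf{0}$, since by Proposition~\ref{zero} only $\mathbf{0}\sqcap\mathbf{k}=\mathbf{0}$ holds, while $\mathbf{k}\sqcap\mathbf{0}=(0,t+f+u,0,e)\neq\mathbf{0}$ in general. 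The paper closes exactly these gaps with the decomposition $rs=(\overline{\varepsilon}\sqcap r(\varepsilon)\sqcap s(\varepsilon))\sqcup\|\mathcal{A}_3\|\sqcup\|\mathcal{A}\|\sqcup(\|\mathcal{A}_1\|\sqcap s(\varepsilon))$, where $\mathcal{A}_3$ is a normalized unambiguous automaton for $r(\varepsilon)\sqcap s$ on $A^+$ (Theorem~\ref{DRec-scalars} plus Proposition~\ref{normalization}), the last term is recognizable by Lemma~\ref{unambiguous-scalars} (note $\mathcal{A}_1$ is unambiguous but not deterministic, so Theorem~\ref{DRec-scalars} does not apply to it), and the whole is recognizable by Theorem~\ref{Rec-disjunction}; the $\sqcup$-order of the four terms is compatible with the Cauchy order because the vanishing terms are the unit $\mathbf{0}$. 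You also misattribute the role of Lemma~\ref{unambiguous-scalars}: nothing needs absorbing at the junction, since normalization already makes all terminal weights $\mathbf{1}$; the lemma's actual job is the empty-suffix term. Without this four-term decomposition your construction computes only the middle of the Cauchy product, so the proof as proposed is incomplete.
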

\begin{proof} [Sketch]
Since $r,s \in DRec(K,A)$, there are deterministic MK-fuzzy automata accepting them. Then, by Proposition \ref{normalization}, we can effectively construct normalized unambiguous MK-fuzzy automata $\mathcal{A}_1=\left(Q_1, q^{(1)}_{in},T_1, F_1, wt_1 \right)$ and $\mathcal{A}_2=\left(Q_2, q^{(2)}_{in},T_2, F_2, wt_2 \right)$ such that $\|\mathcal{A}_1\|(w)=r(w)$ and $\|\mathcal{A}_2\|(w)=s(w)$ for every $w \in A^+$. Without any loss we assume that $Q_1 \cap Q_2 = \emptyset$, otherwise we apply a renaming. We consider the MK-fuzzy automaton $\mathcal{A}=\left (Q, \{q^{(1)}_{in}\}, T, F_2, in, wt, \ter  \right)$ with
\begin{itemize}
\item[-] $Q= (Q_1 \setminus F_1) \cup Q_2$,
\item[-] $T=\left \{\left(q^{(1)},a,p^{(1)}\right) \in T_1 \mid p^{(1)} \notin F_1 \right \} \cup T_2 \cup \\  \left \{\left(q^{(1)},a,q^{(2)}_{in}\right) \mid \text{ there exists } p^{(1)} \in F_1  \text{ such that } \left(q^{(1)},a,p^{(1)} \right) \in T_1 \right \}$,
\item[-] $in\left(q^{(1)}_{in} \right)=\mathbf{1}$, 
\item[-] $wt(q,a,p)=\left\{
\begin{array}
[c]{ll}%
wt_1(q,a,p) & \text{if } (q,a,p) \in T_1\\
wt_2(q,a,p)  & \text{if } (q,a,p) \in T_2 \\
wt_1(q,a,p^{(1)})  & \text{if }  q \in Q_1 \setminus F_1, \  p=q^{(2)}_{in}, \ p^{(1)} \in F_1, \text{ and } \left(q,a,p^{(1)}\right) \in T_1 
\end{array}
\text{ }\right.  $ \smallskip \\
for every $(q,q,p) \in T$, and
\item[-] $\ter(q)=\mathbf{1}$ for every $q \in F_2$.
\end{itemize}
We should note that in case $p=q^{(2)}_{in}$ the value $wt(q,a,p)$ is well-defined. Indeed, since the original MK-fuzzy automaton accepting $r$ is deterministic, by construction of $\mathcal{A}_1$, we get that there is at most one $p^{(1)} \in F_1$ such that $(q,a,p^{(1)}) \in T_1$. 
We define a linear order on $Q$ by preserving the orders of $Q_1$ and $Q_2$ and letting $\max Q_2 \leq \min Q_1$. Then we can show that $\|\mathcal{A}\|(w)=rs(w)$ for every $w \in A^+$.

Next, by Theorem \ref{DRec-scalars}, the series $r(\varepsilon) \sqcap s$ is deterministically recognizable, hence by Proposition \ref{normalization} there is a normalized unambiguous MK-fuzzy automaton $\mathcal{A}_3$ such that $\|\mathcal{A}_3\|(w)=( r(\varepsilon) \sqcap s)(w)$ for every $w \in A^+$, and $\|\mathcal{A}_3\|(\varepsilon)= \mathbf{0}$. Furthermore, by Proposition \ref{rec-char_rec} and Lemma \ref{unambiguous-scalars} respectively, the MK-fuzzy languages $\overline{\varepsilon} \sqcap r(\varepsilon) \sqcap s(\varepsilon) $ and  $\| \mathcal{A}_1  \| \sqcap s(\varepsilon)$ are recognizable. Since
$$rs= (\overline{\varepsilon} \sqcap r(\varepsilon) \sqcap s(\varepsilon) ) \sqcup \| \mathcal{A}_3 \|  \sqcup \|\mathcal{A}\| \sqcup (\| \mathcal{A}_1  \| \sqcap s(\varepsilon)),$$ 
we conclude our proof by Theorem \ref{Rec-disjunction}. \hfill $\square$
\end{proof} 

\begin{proposition}\label{strong-supp}
Let $s \in DRec(K,A)$. Then the strong support of $s$ is a recognizable language.
\end{proposition}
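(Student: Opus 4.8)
The plan is to exploit the fact that the \emph{true} component behaves multiplicatively under MK-conjunction, so that along the unique path of a deterministic automaton the true component of the weight is simply a product of true components of the individual pieces; a product of numbers in $[0,1]$ is nonzero exactly when every factor is, and this is a condition a classical finite automaton can check. Concretely, let $\mathcal{A}=(Q,q_0,T,F,in,wt,\ter)$ be a deterministic MK-fuzzy automaton with $s=\|\mathcal{A}\|$. First I would record the elementary but crucial identity $t(\mathbf{k_1}\sqcap\mathbf{k_2})=t(\mathbf{k_1})\cdot t(\mathbf{k_2})$, which is immediate from the defining relation $t_4=t_1t_2$; using associativity of $\sqcap$ (Proposition \ref{oper_ass}) this extends by induction to $t\!\left(\bigsqcap_{j}\mathbf{k_j}\right)=\prod_j t(\mathbf{k_j})$.

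Next I would invoke determinism: for every $w=a_0\ldots a_{n-1}$ there is at most one path $P_w=((q_i,a_i,q_{i+1}))_{0\le i\le n-1}$, so $\|\mathcal{A}\|(w)=weight(P_w)$ when it exists, and for $\varepsilon$ the only candidate value is $in(q_0)\sqcap\ter(q_0)$ with no disjunction to worry about since $I=\{q_0\}$. Applying the multiplicativity identity to the definition of $weight(P_w)$ yields
$$t(s(w))=t(in(q_0))\cdot\prod_{0\le i\le n-1}t\bigl(wt(q_i,a_i,q_{i+1})\bigr)\cdot t(\ter(q_n)).$$
Since all factors lie in $[0,1]$, which has no zero divisors, $t(s(w))\neq 0$ holds iff $w$ admits a path in $\mathcal{A}$ and every factor above is nonzero.

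Finally I would build a classical (partial deterministic) finite automaton $\mathcal{B}$ over $A$ recognizing the strong support: take state set $Q$ and initial state $q_0$, keep exactly the transitions $\{(q,a,q')\in T\mid t(wt(q,a,q'))\neq 0\}$, and declare the final states to be $\{q\in F\mid t(\ter(q))\neq 0\}$. A run of $\mathcal{B}$ on $w$ reaching a final state exists precisely when the unique path $P_w$ exists with all its transition weights and the terminal weight having nonzero true component (the empty word being handled by $q_0\in F$ together with $t(\ter(q_0))\neq 0$). The one point needing care — and essentially the only subtlety — is the global initial condition: if $t(in(q_0))=0$ then $t(s(w))=0$ for every $w$, so $\mathrm{stgsupp}(s)=\emptyset$ is trivially recognizable, while if $t(in(q_0))\neq 0$ then $L(\mathcal{B})=\mathrm{stgsupp}(s)$ by the characterization above. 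Either way $\mathrm{stgsupp}(s)$ is recognizable. I would stress that determinism is essential here: for a merely recognizable $s$ several paths would be combined by MK-disjunction, whose true component $t_3=t_1+(f_1+u_1)t_2$ is not multiplicative, and the clean product characterization underlying the construction would break down.
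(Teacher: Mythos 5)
Your proof is correct. The paper states Proposition \ref{strong-supp} without printing a proof, but your argument is precisely the one the result rests on: the true component is a monoid homomorphism from $(K,\sqcap,\mathbf{1})$ to $([0,1],\cdot,1)$ (by $t_4=t_1t_2$ and Proposition \ref{oper_ass}), so along the unique path of a deterministic automaton $t(s(w))$ factors as $t(in(q_0))\cdot\prod_i t(wt(q_i,a_i,q_{i+1}))\cdot t(\ter(q_n))$, and the classical automaton obtained by deleting transitions and final states with zero true weight, together with your case split on $t(in(q_0))=0$ (which is needed, since the paper's standing assumption only excludes $in(q_0)=\mathbf{0}=(0,1,0,0)$, not $t(in(q_0))=0$), recognizes the strong support; your closing observation that nondeterminism destroys the product characterization, because $t_3=t_1+(f_1+u_1)t_2$ is not multiplicative, is exactly the obstruction the paper itself points to in its concluding discussion of supports.
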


\section{MK-fuzzy monadic second order logic}
In this section we introduce our MK-fuzzy monadic second order (MSO for short) logic and we prove the fundamental theorem of B\"{u}chi \cite{Bu:We}, Elgot \cite{El:De}, and Trakhtenbrot \cite{Tr:Fi} in the setup of MK-fuzzy languages. We need to recall the definition of syntax and semantics of MSO logic (cf. for instance \cite{Th:La}).

The syntax of MSO logic formulas over $A$ is given by the grammar
\[
\phi::=\mathit{true}\mid P_{a}(x)\mid x\leq x'\mid  x\in X \mid \lnot\phi\mid\phi\vee\phi\mid\exists x\centerdot\phi \mid \exists X\centerdot\phi
\]
where $a\in A$ and we let $\mathit{false}=\lnot \mathit{true}$. The set $\mathit{free}(\phi)$ of free
variables of an MSO logic formula $\phi$ is defined as usual. In order
to define the semantics of MSO logic formulas we need the notions of the extended alphabet  and valid assignment. Let $\mathcal{V}$ be a
finite set of first and second order variables. For every word $w=w(0)\ldots w(n-1) \in A^*$ we let $dom(w)=\left\{  0,\ldots,n-1\right\}
$. A $(\mathcal{V},w)$-\emph{assignment} $\sigma$ is a mapping associating first order variables
from $\mathcal{V}$ to elements of $dom(w)$, and second order variables from
$\mathcal{V}$ to subsets of $dom(w)$. If $x$ is a first order variable and
$i\in dom(w),$ then $\sigma\lbrack x\rightarrow i]$ denotes the $(\mathcal{V}\cup\{x\},w)$-assignment which associates $i$ to $x$ and coincides with
$\sigma$ on $\mathcal{V}\setminus\{x\}$. For a second order variable $X$ and
$I\subseteq dom(w),$ the notation $\sigma\lbrack X\rightarrow I]$ has a
similar meaning.
We shall encode pairs of the form $(w,\sigma)$, where $w\in A^{\ast}$ and
$\sigma$ is a $(\mathcal{V},w)$-assignment, using the extended alphabet
$A_{\mathcal{V}}=A \times\{0,1\}^{\mathcal{V}}$. Indeed, every word
in $A_{\mathcal{V}}^{\ast}$ can be considered as a pair $(w,\sigma)$
where $w$ is the projection over $A$ and $\sigma$ is the projection over
$\{0,1\}^{\mathcal{V}}.$ Then $\sigma$ is a valid assignment if for every first
order variable $x\in\mathcal{V}$ the $x$-row contains exactly one $1$. In this
case, $\sigma$ is the $(\mathcal{V},w)$-assignment such that for every first
order variable $x\in\mathcal{V}$, $\sigma(x)$ is the position of the $1$ on
the $x$-row, and for every second order variable $X\in\mathcal{V},$ $\sigma(X)$
is the set of positions labelled with $1$ along the $X$-row.
It is well-known that 
\[
N_{\mathcal{V}}=\{(w,\sigma)\in A_{\mathcal{V}}^{\ast}\mid\sigma\text{ is
a valid }(\mathcal{V},w)\text{-assignment}\}
\]
is a recognizable language. 
For every $(w,\sigma
)\in\mathcal{N}_{\mathcal{V}}$ we define the satisfaction relation $(w,\sigma)\models\phi$ by
induction on the structure of $\phi$, as follows:

\medskip

$\begin{array}[c]{ll}
(w,\sigma)\models true,  & \ \ \ \qquad (w,\sigma)\models x\in X  \ \text{ iff } \ \sigma(x)\in \sigma(X), \\
(w,\sigma)\models P_{a}(x)  \ \text{ iff } \ w(\sigma(x))=a, & \ \ \ \qquad (w,\sigma)\models\lnot\phi \ \text{ iff } \ (w,\sigma)\not \models \phi, \\
(w,\sigma)\models x\leq x' \ \text{ iff } \ \sigma(x)\leq\sigma(x'), & \ \ \ \qquad (w,\sigma)\models\phi\vee\phi^{\prime} \ \text{ iff } \ (w,\sigma
)\models\phi \text{ or } (w,\sigma)\models\phi^{\prime}, 
\end{array}$

$\begin{array}[c]{ll}
(w,\sigma)\models\exists x\centerdot\phi \ \text{ iff  there exists an } i\in dom(w)\ \text{ such that } (w,\sigma\lbrack x\rightarrow i\rbrack )\models\phi, \\
(w,\sigma)\models\exists X\centerdot\phi \ \text{ iff there exists an }
I\subseteq dom(w)\ \text{ such that } (w,\sigma\lbrack X\rightarrow I \rbrack)\models\phi.
\end{array}$

\medskip

\noindent If $(w,\sigma)\in A_{\mathcal{V}}^{\ast}\setminus\mathcal{N}_{\mathcal{V}}$, then we let $(w,\sigma)\not \models \phi$. 

We denote by $L(\phi)$ the language of an MSO logic sentence $\phi$, i.e., $L(\phi)=\{w \in A^* \mid w \models \phi \}$. 
\

\begin{remark}\label{order}
For the definition of the semantics of our MK-fuzzy \emph{MSO} logic, we shall need the power set $\mathcal{P}(dom(w))$ to be linearly ordered for every word $w \in A^*$.  Let $w=a_0 \ldots a_{n-1} \in A^*$, hence $dom(w)=\{0, \ldots, n-1 \}$. We define the linear order $\leq $ on $\mathcal{P}(dom(w))$ in the following way. Let $I=\{i_1,\ldots,i_m \}, J=\{j_1, \ldots, j_k \} \in \mathcal{P}(dom(w))$ and assume that $0 \leq i_1 < \ldots < i_m \leq n-1$ and $0 \leq j_1 < \ldots < j_k \leq n-1$. Then we consider the words $u_I=i_1 \ldots i_m, u_J=j_1 \ldots j_k \in dom(w)^*$. Clearly, there is a one-to-one correspondence among the subsets of $dom(w)$, and the words of $dom(w)^*$ with length at most $n$ and their letters being pairwise disjoint. The empty set corresponds to the empty word. Now, for every $I, J \in \mathcal{P}(dom(w))$ we set $I \leq J$ iff $u_I \leq_{lex} u_J$.   
\end{remark}

\begin{definition}
The syntax of formulas of the \emph{MK-fuzzy MSO logic over } $A$
\emph{and} $K$ is given by the grammar
\begin{align*}
\varphi &  ::=\mathbf{k} \mid\phi\mid\varphi \oplus \varphi\mid\varphi\otimes\varphi\mid{\textstyle\bigoplus\nolimits_{x}}
\centerdot\varphi\mid{\textstyle\bigoplus\nolimits_{X}}
\centerdot\varphi\mid{\textstyle\bigotimes\nolimits_{x}}\centerdot\varphi
\end{align*}
where $\mathbf{k} \in K$, $a\in A$, and $\phi$ denotes an \emph{MSO} logic formula.
\end{definition}

We denote by $MSO(K,A)$ the set of all MK-fuzzy MSO logic formulas
$\varphi$ over $A$ and $K$. We represent the semantics of formulas
$\varphi\in MSO(K,A)$ as MK-fuzzy languages $\left\Vert \varphi\right\Vert \in
K\left\langle \left\langle A^{\ast}\right\rangle \right\rangle $. For the
semantics of MSO logic formulas $\phi$ we use the satisfaction relation
as defined above. Therefore, the semantics of MSO logic formulas $\phi$
gets only the values $\mathbf{0}$ and $\mathbf{1}$.

\begin{definition}
\label{semantics}Let $\varphi\in MSO(K,A)$ and $\mathcal{V}$ be a finite
set of variables with $\mathit{free}(\varphi)\subseteq\mathcal{V}$. The semantics of
$\varphi$ is an MK-fuzzy language $\left\Vert \varphi\right\Vert _{\mathcal{V}}\in
K\left\langle \left\langle A_{\mathcal{V}}^{\ast}\right\rangle
\right\rangle$. Consider an element $(w,\sigma)\in A_{\mathcal{V}}^{\ast
}$. If $(w,\sigma)\notin N_{\mathcal{V}}$, then we let $  \left\Vert
\varphi\right\Vert _{\mathcal{V}}(w,\sigma)  =\mathbf{0}.$ Otherwise, we define
$  \left\Vert \varphi\right\Vert _{\mathcal{V}}(w,\sigma)  \in
K$, inductively on the structure of $\varphi$, as follows:
\begin{itemize}
\item[-] $\left\Vert \mathbf{k} \right\Vert _{\mathcal{V}}(w,\sigma)=\mathbf{k},$ 
\item[-] $\left\Vert \phi\right\Vert _{\mathcal{V}}(w,\sigma)=\left\{
\begin{array}
[c]{rl}\mathbf{1} & \text{if }(w,\sigma)\models\phi\\
\mathbf{0} & \text{otherwise}
\end{array} ,\right.$ 
\item[-] $\left\Vert \varphi\oplus\psi\right\Vert _{\mathcal{V}}(w,\sigma)=\left\Vert \varphi\right\Vert _{\mathcal{V}}(w,\sigma) \sqcup \left\Vert
\psi\right\Vert _{\mathcal{V}}(w,\sigma),$
\item[-] $\left\Vert \varphi\otimes\psi\right\Vert _{\mathcal{V}}(w,\sigma) =\left\Vert \varphi\right\Vert _{\mathcal{V}}(w,\sigma
) \sqcap \left\Vert \psi\right\Vert _{\mathcal{V}}(w,\sigma),$
\item[-] $\left\Vert
{\textstyle\bigoplus\nolimits_{x}}
\centerdot\varphi\right\Vert _{\mathcal{V}}(w,\sigma)=\underset{0\leq i\leq
|w|-1}{\bigsqcup
}\left\Vert \varphi\right\Vert _{\mathcal{V}\cup\{x\}}(w,\sigma\lbrack
x\rightarrow i]),$ 
\item[-] $\left\Vert
{\textstyle\bigotimes\nolimits_{x}}
\centerdot\varphi\right\Vert _{\mathcal{V}}(w,\sigma)=\underset{0\leq i\leq
|w|-1}{\bigsqcap
}\left\Vert \varphi\right\Vert _{\mathcal{V}\cup\{x\}}(w,\sigma\lbrack
x\rightarrow i]),$ 
\item[-] $\left\Vert
{\textstyle\bigoplus\nolimits_{X}}
\centerdot\varphi\right\Vert _{\mathcal{V}}(w,\sigma)=\underset{I \subseteq dom(w)}{\bigsqcup
}\left\Vert \varphi\right\Vert _{\mathcal{V}\cup\{X\}}(w,\sigma\lbrack
X\rightarrow I]),$
\end{itemize}
\noindent where the operator $\underset{I \subseteq dom(w)}{\bigsqcup
}$ is applied on the ascending order according to the relation $\leq$ as defined in Remark \ref{order}. 
\end{definition}

We simply denote $\left\Vert \varphi\right\Vert _{\mathit{free}(\varphi)}$ by
$\left\Vert \varphi\right\Vert $, hence if $\varphi$ is a sentence, then $\left\Vert
\varphi\right\Vert \in K\left\langle \left\langle A^{\ast}\right\rangle
\right\rangle $.

\begin{lemma}\label{lemma-Dr} \cite{Dr:We}
Let $A$ be a linearly ordered alphabet, $\varphi \in MSO(K,A)$, and $\mathcal{V}$ be a finite set of variables containing $\mathit{free}(\varphi)$. Then    
$$  \left\Vert \varphi\right\Vert _{\mathcal{V}}(w,\sigma)
=  \left\Vert \varphi\right\Vert (w,\sigma|_{\mathit{free}(\varphi)})$$
for every $(w,\sigma)\in N_{\mathcal{V}}$. Furthermore $\left\Vert \varphi\right\Vert _{\mathcal{V}}$ is recognizable iff $\left\Vert \varphi\right\Vert $ is recognizable.
\end{lemma}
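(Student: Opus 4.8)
The plan is to prove both assertions by a single induction on the structure of $\varphi$, exploiting the fact that the extra variables in $\mathcal{V} \setminus \mathit{free}(\varphi)$ are simply \emph{dummy} variables that do not affect the value of any atomic subformula. The key observation is that if $(w,\sigma) \in N_{\mathcal{V}}$, then the restricted assignment $\sigma|_{\mathit{free}(\varphi)}$ is a valid $(\mathit{free}(\varphi),w)$-assignment, so both sides of the first equation are genuinely defined via the inductive clauses of Definition \ref{semantics} (and not by the default value $\mathbf{0}$). First I would treat the base cases: for $\varphi = \mathbf{k}$ both sides equal $\mathbf{k}$ by definition, and for $\varphi = \phi$ an ordinary MSO formula, the semantics depends only on $\sigma$ restricted to the variables actually occurring in $\phi$, which are contained in $\mathit{free}(\phi) = \mathit{free}(\varphi)$; hence the satisfaction relation $(w,\sigma) \models \phi$ holds iff $(w, \sigma|_{\mathit{free}(\varphi)}) \models \phi$, giving the claim.

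For the inductive step on the Boolean-type connectives $\oplus$ and $\otimes$, I would apply the induction hypothesis to each immediate subformula $\varphi$ and $\psi$, noting that $\mathit{free}(\varphi \oplus \psi) = \mathit{free}(\varphi) \cup \mathit{free}(\psi)$, and then use that $\sqcup$ and $\sqcap$ are applied componentwise at the point $(w,\sigma)$. A small but necessary bookkeeping point is that the induction hypothesis for the subformula is stated for the set $\mathcal{V}$, whereas its free variables may be a proper subset; one applies the hypothesis with the same $\mathcal{V}$ and then reconciles the restrictions $\sigma|_{\mathit{free}(\varphi)}$ and $\sigma|_{\mathit{free}(\psi)}$ with $\sigma|_{\mathit{free}(\varphi \oplus \psi)}$, which is immediate since restricting further and then re-evaluating gives the same value. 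The quantifier cases $\bigoplus_x$, $\bigotimes_x$, and $\bigoplus_X$ are handled by applying the induction hypothesis to the body with the enlarged variable set $\mathcal{V} \cup \{x\}$ (respectively $\mathcal{V} \cup \{X\}$) and the updated assignments $\sigma[x \rightarrow i]$; here one must check that the \emph{ordering} used in the big disjunctions and conjunctions — the order on $i \in dom(w)$ and the order on subsets $I \subseteq dom(w)$ from Remark \ref{order} — is intrinsic to $w$ and does not depend on the ambient set $\mathcal{V}$, so the ordered sequences of summands on both sides coincide term by term.

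The second assertion, that $\|\varphi\|_{\mathcal{V}}$ is recognizable iff $\|\varphi\|$ is recognizable, then follows from the first by the closure properties already established. For the direction relating the two, I would observe that $\|\varphi\|_{\mathcal{V}}$ is obtained from $\|\varphi\|$ essentially by an inverse strict alphabetic homomorphism that forgets the extra $\{0,1\}$-rows corresponding to $\mathcal{V} \setminus \mathit{free}(\varphi)$, combined with an intersection with the recognizable language $N_{\mathcal{V}}$ (via the characteristic MK-fuzzy language $\mathbf{1}_{N_{\mathcal{V}}}$), so Theorems \ref{inv-hom} and \ref{conj1} together with Proposition \ref{rec-char_rec} deliver recognizability of $\|\varphi\|_{\mathcal{V}}$ from that of $\|\varphi\|$; the reverse direction uses the strict alphabetic projection homomorphism and Theorem \ref{hom}. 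The main obstacle I anticipate is the quantifier step of the structural induction: one must argue carefully that the \emph{linear order} on the index sets (positions, and especially the subset order of Remark \ref{order}) is preserved when variables are added or restricted, because $\sqcup$ is noncommutative and so the value of a big disjunction genuinely depends on the order in which the terms appear. Since this ordering is defined purely from $dom(w)$ and is unaffected by the presence of dummy variables in $\mathcal{V}$, the two ordered families of summands agree, and the induction goes through.
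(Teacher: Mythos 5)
Your induction for the consistency equation and your forward recognizability direction are sound and are essentially the paper's route: the paper merely extends the linear order on $A$ to $A_{\mathcal{V}}$ and reruns the proof of Prop.~3.3 of \cite{Dr:We}, with Theorems \ref{conj1}--\ref{inv-hom} supplying the closure properties; in particular $\left\Vert\varphi\right\Vert_{\mathcal{V}}=\mathbf{1}_{N_{\mathcal{V}}}\sqcap h^{-1}\left(\left\Vert\varphi\right\Vert\right)$, with the characteristic series on the \emph{left}, which is the order Theorem \ref{conj1} provides and the order that matters here, since $\mathbf{0}\sqcap\mathbf{k}=\mathbf{0}$ while $\mathbf{k}\sqcap\mathbf{0}\neq\mathbf{0}$ in general (Proposition \ref{zero}). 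However, your reverse direction --- ``uses the strict alphabetic projection homomorphism and Theorem \ref{hom}'' --- has a genuine gap. If $h\colon A_{\mathcal{V}}^{\ast}\rightarrow A_{\mathit{free}(\varphi)}^{\ast}$ is the projection, then for a valid pair $(w,\tau)$ the preimage $h^{-1}(w,\tau)$ contains \emph{every} valid extension of $\tau$: one for each choice of a position of $dom(w)$ for each dummy first-order variable and each subset of $dom(w)$ for each dummy second-order variable. By the first part of the lemma, each of these contributes the same value $\left\Vert\varphi\right\Vert(w,\tau)$, so $h\left(\left\Vert\varphi\right\Vert_{\mathcal{V}}\right)(w,\tau)$ is a disjunction of exponentially many equal terms. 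Since $\sqcup$ is not idempotent (Proposition \ref{zero}; e.g.\ for $\mathbf{k}=(0.5,0.5,0,0)$ one computes $\mathbf{k}\sqcup\mathbf{k}=(0.75,0.25,0,0)$), this does not collapse to the single value, and $h\left(\left\Vert\varphi\right\Vert_{\mathcal{V}}\right)\neq\left\Vert\varphi\right\Vert$ in general. You were alert to the \emph{order}-dependence of $\bigsqcup$ in this bimonoid but overlooked that projection also destroys \emph{multiplicity}, which is equally fatal here.

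The repair --- and the reason the paper invokes Theorem \ref{conj1} together with Theorem \ref{hom} for this direction --- is to make $h$ injective on the support before projecting: let $L\subseteq N_{\mathcal{V}}$ be the recognizable language of those $(w,\sigma)$ in which every first-order variable of $\mathcal{V}\setminus\mathit{free}(\varphi)$ points to the first position and every second-order variable of $\mathcal{V}\setminus\mathit{free}(\varphi)$ is assigned $\emptyset$. Then $\mathbf{1}_{L}\sqcap\left\Vert\varphi\right\Vert_{\mathcal{V}}$ is recognizable by Theorem \ref{conj1}, and over each $(w,\tau)$ with $w\neq\varepsilon$ exactly one preimage carries a possibly nonzero value, while all remaining terms of the ordered disjunction equal $\mathbf{0}$, which is the two-sided unit of $\sqcup$ (Proposition \ref{oper_ass}) and hence vanishes irrespective of number and position; therefore $h\left(\mathbf{1}_{L}\sqcap\left\Vert\varphi\right\Vert_{\mathcal{V}}\right)$ agrees with $\left\Vert\varphi\right\Vert$ on all nonempty words and is recognizable by Theorem \ref{hom}. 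One residual footnote: if $\mathcal{V}\setminus\mathit{free}(\varphi)$ contains a first-order variable, the empty word admits no valid extension at all, so the value of $\left\Vert\varphi\right\Vert$ at $\varepsilon$ must be restored separately by a disjunction with a series supported on $\{\varepsilon\}$, exactly as the paper does at the end of the proof of Theorem \ref{recdef}. With these two corrections your argument coincides with the proof the paper intends.
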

\begin{proof}
We extend the order on $A$ to a linear order on $A_{\mathcal{V}}$ and apply the proof of Prop. 3.3. in \cite{Dr:We} using our Theorems \ref{conj1}--\ref{inv-hom}. \hfill $\square$
\end{proof}

For first order variables $x,y,z$, second order variables $X_{1},\ldots,X_{m}$, and $\mathbf{k} \in K$ let 
\begin{itemize}
\item[] $\mathrm{first}(y):=\forall x\centerdot y\leq x,   \qquad \ \ \mathrm{last}(y):= \forall x \centerdot x\leq y,$
\item[] $(y=x+1):=\left(  (x\leq y)\wedge\lnot(y\leq x)\wedge\forall z\centerdot(z\leq x\vee y\leq z)\right),$
\item[] $\text{partition}(X_{1},\ldots,X_{m}):=\forall x\centerdot\underset
{i=1,\ldots,m}{\bigvee}\left(  (x\in X_{i})\wedge\underset{j\neq i}{\bigwedge
}\lnot(x\in X_{j})\right),$
\item[] $x \in X \rightarrow \mathbf{k}:= \lnot(x \in X) \oplus \left((x \in X) \otimes \mathbf{k} \right).$
\end{itemize}

Next we define a fragment of our MK-fuzzy MSO logic. 
\begin{definition} 
A formula $\varphi \in MKO(K,A)$ will be called \emph{restricted} if whenever it contains a subformula $\psi \otimes \psi'$, then $\psi$ is a (boolean) \emph{MSO} logic formula, and whenever it contains a subformula of the form ${\textstyle\bigotimes\nolimits_{x}}\centerdot\psi$, then $\psi$ is of the form $\underset{1 \leq i \leq m}{\textstyle\bigoplus} \left (   (x\in X_i) \rightarrow \mathbf{k_i}  \right)$, where $\mathbf{k_i} \in K$ for every $1 \leq i \leq m$.     
\end{definition}

We shall denote by $RMSO(K,A)$ the class of all restricted MK-fuzzy MSO logic formulas over $A$ and $K$. An MK-fuzzy language $s \in K\left\langle \left\langle A^{\ast}\right\rangle
\right\rangle $ is called RMSO-\emph{definable} if there is a sentence $\varphi \in RMSO(K,A)$ such that $s=\| \varphi \|$. 
The main result of this section is the subsequent theorem which follows from Theorems \ref{def-rec} and \ref{recdef} below.

\begin{theorem}
Let $A$ be a linearly ordered alphabet and $s \in K\left\langle \left\langle A^{\ast}\right\rangle
\right\rangle $. Then $s$ is recognizable iff it is \emph{RMSO}-definable.
\end{theorem}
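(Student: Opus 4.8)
The plan is to establish the two implications separately, since the reduction to Theorems~\ref{def-rec} and~\ref{recdef} amounts to showing that (i) every \emph{RMSO}-definable MK-fuzzy language is recognizable, and (ii) every recognizable MK-fuzzy language is \emph{RMSO}-definable. I would prove (i) by structural induction on a restricted sentence, invoking the closure properties established earlier, and (ii) by encoding the accepting paths of an MK-fuzzy automaton inside a restricted sentence. The unifying difficulty is that both $\sqcup$ and $\sqcap$ are noncommutative and nonassociative, so every disjunction and conjunction must be carried out in a prescribed order; the two directions will succeed only if the orders used are compatible with the lexicographic path order on the paths of an automaton and the order on subsets from Remark~\ref{order}.

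For direction (i) the base cases are immediate: $\|\mathbf{k}\|$ is the constant language $\widetilde{\mathbf{k}}$, recognizable by Example~\ref{constant}, and for a boolean \emph{MSO} formula $\phi$ one has $\|\phi\|=\mathbf{1}_{L(\phi)}$, recognizable by the classical B\"uchi theorem and Proposition~\ref{rec-char_rec}. In the inductive step, $\|\varphi\oplus\psi\|$ is handled by Theorem~\ref{Rec-disjunction}. The case $\varphi=\psi\otimes\psi'$ is exactly where the restriction is used: since $\psi$ must be boolean, $\|\psi\otimes\psi'\|=\mathbf{1}_{L(\psi)}\sqcap\|\psi'\|$, which is recognizable by Theorem~\ref{conj1}. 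For $\bigoplus_{x}\centerdot\varphi$ and $\bigoplus_{X}\centerdot\varphi$ I would pass to the extended alphabet $A_{\mathcal{V}\cup\{x\}}$ (respectively $A_{\mathcal{V}\cup\{X\}}$), take the inductively recognizable $\|\varphi\|$, intersect with the recognizable language of valid assignments via Theorem~\ref{conj1}, and project the quantified variable away by a strict alphabetic homomorphism, using Theorems~\ref{hom} and~\ref{inv-hom} together with Lemma~\ref{lemma-Dr}; here one checks that the homomorphism collects the disjuncts in the order of Remark~\ref{order}.

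The genuinely delicate case of (i) is $\bigotimes_{x}\centerdot\psi$, for which the restriction forces $\psi=\bigoplus_{1\le i\le m}\big((x\in X_{i})\to\mathbf{k}_{i}\big)$. A short computation shows that $\|\psi\|(w,\sigma[x\to i])$ is a fixed element of $K$ determined solely by the $X_{i}$-membership of position $i$, so $\|\bigotimes_{x}\centerdot\psi\|$ is the left-to-right iterated conjunction of such elements along the word. This is precisely the shape of a path weight in a \emph{deterministic} MK-fuzzy automaton over $A_{\mathcal{V}}$, whose transition weights read off the $X_{i}$-labels; determinism yields a single path and thereby sidesteps the noncommutativity of $\sqcup$, and a final intersection with the valid-assignment language via Theorem~\ref{conj1} completes the case.

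For direction (ii), given $\mathcal{A}=(Q,I,T,F,in,wt,\ter)$ I would introduce a second-order variable $X_{\tau}$ for each transition $\tau\in T$, guess a run by the existential second-order disjunctions $\bigoplus_{X_{\tau}}$, and assert by a boolean \emph{MSO} subformula that the $X_{\tau}$ partition the positions, that consecutive transitions are compatible and agree with the letters via the predicates $P_{a}$, and that the source of the first and the target of the last transition are initial and final. The run weight $in(q_{0})\sqcap\bigsqcap_{i}wt(\tau_{i})\sqcap\ter(q_{n})$ I would realize as the position product $\bigotimes_{x}\centerdot\bigoplus_{\tau}\big((x\in X_{\tau})\to wt(\tau)\big)$, with the finitely many boundary factors $in(q_{0})$ and $\ter(q_{n})$ inserted by boolean guards combined through $\otimes\mathbf{k}$. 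The main obstacle, and the crux of the whole proof, is order matching: the automaton sums path weights in the lexicographic order of their source-state sequences, whereas the logic sums the $\bigoplus_{X_{\tau}}$ disjuncts in the order of Remark~\ref{order}, so I would have to encode runs (ordering the variables so as to refine the state order) so that these two orders coincide, and likewise ensure the left-to-right $\bigotimes_{x}$ product reproduces the path's conjunction with the $in$ and $\ter$ factors correctly placed. Once these order-bookkeeping lemmas are secured, both implications reduce to the already-established closure results and the theorem follows.
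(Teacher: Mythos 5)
Your direction (i) matches the paper's proof of Theorem~\ref{def-rec} essentially step for step: the same base cases, the same use of Theorems~\ref{Rec-disjunction} and~\ref{conj1}, the same projection argument for the $\oplus$-quantifiers with an order-compatible extension of the alphabet order, and the same one-state deterministic automaton over $A_{\{X_1,\ldots,X_m\}}$ for the restricted $\bigotimes_x$ case. Your order-matching plan for direction (ii) (enumerating the transition variables so that the lexicographic path order coincides with the assignment order of Remark~\ref{order}) is also exactly what the paper does in Theorem~\ref{recdef}.

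However, direction (ii) has a genuine gap at the boundary factors. You propose to realize the run weight $in(q_0)\sqcap\bigsqcap_i wt(\tau_i)\sqcap \ter(q_n)$ by taking the position product $\bigotimes_x\centerdot\bigoplus_\tau\left((x\in X_\tau)\rightarrow wt(\tau)\right)$ and inserting $in(q_0)$ and $\ter(q_n)$ ``by boolean guards combined through $\otimes\mathbf{k}$''. This cannot be done inside $RMSO(K,A)$ as you describe it: the restriction demands that in every subformula $\psi\otimes\psi'$ the \emph{left} factor $\psi$ be a boolean MSO formula, and that the body of $\bigotimes_x$ be literally of the form $\bigoplus_i\left((x\in X_i)\rightarrow\mathbf{k_i}\right)$. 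Since $\sqcap$ is noncommutative, $in(q_0)$ must occur as the leftmost conjunct, so any encoding of it forces a subformula $\chi\otimes(\cdots)$ whose left factor $\chi$ (a constant $\mathbf{k}$, or a guarded constant $\phi\otimes\mathbf{k}$) is not boolean; symmetrically, appending $\ter(q_n)$ after the $\bigotimes_x$ product puts the non-boolean product on the left. The paper resolves precisely this by Lemma~\ref{in-ter-one}, which normalizes the automaton so that $in\equiv\mathbf{1}$ on $I$ and $\ter\equiv\mathbf{1}$ on $F$ for words in $A^+$, making the boundary factors disappear ($\mathbf{1}$ being the $\sqcap$-unit); you must either invoke such a normalization or enrich the variable set (e.g., separate transition variables for first/last positions with the constants $in(q)\sqcap wt(\tau)$, $wt(\tau)\sqcap\ter(q)$ folded into the $\mathbf{k_i}$), neither of which your sketch does. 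A second, smaller omission: your path encoding yields $\mathbf{0}$ on $\varepsilon$, so the case $s(\varepsilon)=\mathbf{k}\neq\mathbf{0}$ is not covered (especially since the normalization only preserves the behavior on $A^+$); the paper patches this with the extra disjunct $\forall x\centerdot\lnot(x\leq x)\otimes\mathbf{k}$.
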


\begin{theorem}\label{def-rec}
Let $A$ be a linearly ordered alphabet. If an MK-fuzzy language $s \in K\left\langle \left\langle A^{\ast}\right\rangle
\right\rangle $ is \emph{RMSO}-definable, then it is recognizable.
\end{theorem}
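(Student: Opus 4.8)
The plan is to prove RMSO-definability implies recognizability by structural induction on the restricted formula $\varphi$, showing that $\|\varphi\|_{\mathcal{V}}$ is recognizable for a suitable variable set $\mathcal{V} \supseteq \mathit{free}(\varphi)$; by Lemma \ref{lemma-Dr} it then suffices to work with $\|\varphi\|_{\mathcal{V}}$ rather than $\|\varphi\|$ directly, and the choice of $\mathcal{V}$ can be enlarged freely. The base cases are immediate: for $\varphi = \mathbf{k}$ the semantics is the constant MK-fuzzy language $\widetilde{\mathbf{k}}$ over $A_{\mathcal{V}}$ restricted to $N_{\mathcal{V}}$, which is recognizable by Example \ref{constant} together with Proposition \ref{rec-char_rec} and Theorem \ref{conj1} (intersecting with the recognizable language $N_{\mathcal{V}}$); for a boolean \emph{MSO} formula $\phi$, the semantics takes only the values $\mathbf{0}$ and $\mathbf{1}$, so $\|\phi\|_{\mathcal{V}} = \mathbf{1}_{L(\phi) \cap N_{\mathcal{V}}}$, which is recognizable by Proposition \ref{rec-char_rec} since $L(\phi)$ is a classical recognizable language.

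For the inductive step I would treat each connective in turn. The disjunction $\varphi \oplus \psi$ is handled directly by Theorem \ref{Rec-disjunction} (closure under disjunction), after enlarging the variable set so both operands are interpreted over a common $A_{\mathcal{V}}$. The existential quantifiers $\bigoplus_x$ and $\bigoplus_X$ correspond to projecting away one row of the extended alphabet: here I would use the strict alphabetic homomorphism $\pi: A_{\mathcal{V} \cup \{x\}}^* \to A_{\mathcal{V}}^*$ that erases the $x$-component, and invoke Theorem \ref{hom} (closure under strict alphabetic homomorphisms), having first checked that the lexicographic ordering of preimages matches the order in which the $\bigsqcup$ over positions $i$ (or subsets $I$) is taken in Definition \ref{semantics}; this is exactly the reason the state and alphabet orderings were set up earlier, and aligning them is the technical crux of these cases.

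The restricted conjunction $\varphi \otimes \psi$ is where the \emph{restricted} hypothesis does the essential work: by definition $\psi$ must be a boolean \emph{MSO} formula, so $\|\psi\|_{\mathcal{V}} = \mathbf{1}_{L}$ for a recognizable $L$, and the semantics is $\|\varphi\|_{\mathcal{V}} \sqcap \mathbf{1}_L$; closure then follows from Theorem \ref{conj1}. For the restricted universal product $\bigotimes_x \centerdot \psi$, where $\psi$ is constrained to the special form $\bigoplus_{1 \leq i \leq m}\left((x \in X_i) \rightarrow \mathbf{k_i}\right)$, I would compute $\|\psi\|_{\mathcal{V} \cup \{x\}}(w, \sigma[x \to i])$ explicitly: at each position $i$ it returns the single scalar $\mathbf{k_j}$ determined by which $X_j$ contains $i$ (using $\text{partition}$-style reasoning, or a default if $i$ lies in none), and the product over $i$ is read off left to right. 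The task is to build a deterministic, or at least unambiguous normalized, MK-fuzzy automaton that reads $(w, \sigma)$ and, having determined the membership pattern of each position from the $X_i$-rows of the extended alphabet, multiplies the corresponding scalars in order; since only finitely many scalars $\mathbf{k_i}$ occur and each position's contribution depends only on its letter in $A_{\mathcal{V} \cup \{X_i\}}$, such an automaton has a single state and a scalar-valued weight map, and recognizability is clear.

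I expect the main obstacle to be the universal product case, and specifically the verification that the \emph{non-commutativity and non-distributivity} of $\sqcap$ do not break the automaton construction: because $\bigotimes_x$ accumulates scalars strictly in position order and the \emph{restricted} form forbids nesting arbitrary MK-fuzzy subformulas under $\bigotimes_x$, each position contributes one fixed scalar and the ordered product is faithfully realized by a one-state weighted automaton whose transition on each extended letter carries the appropriate $\mathbf{k_i}$. Establishing that this weighted behavior agrees with $\|\bigotimes_x \centerdot \psi\|_{\mathcal{V}}$ for every $(w,\sigma)$ — including the boundary handling of the empty word and positions outside every $X_i$ — is the delicate bookkeeping; once it is in place, Lemma \ref{lemma-Dr} transfers recognizability from $\|\varphi\|_{\mathcal{V}}$ back to $\|\varphi\|$, completing the induction.
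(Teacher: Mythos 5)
Your proposal follows essentially the same route as the paper's proof: structural induction on the restricted formula, Lemma \ref{lemma-Dr} to pass between $\|\varphi\|$ and $\|\varphi\|_{\mathcal{V}}$, base cases via Example \ref{constant} and Proposition \ref{rec-char_rec}, disjunction via Theorem \ref{Rec-disjunction}, the existential quantifiers via the erasing projection (a strict alphabetic homomorphism) together with Theorem \ref{hom} after aligning the extended-alphabet order with the order of the disjunctions in Definition \ref{semantics} --- the paper does precisely this, setting $(a,r[x=1])\leq(a,r[x=0])$ and following Lm.~4.3 of \cite{Dr:We} --- and the universal product via the same one-state automaton over $A_{\{X_1,\ldots,X_m\}}$ whose transition weight on a letter $(a,r)$ is read off from the membership pattern $r$; the paper takes $wt(q,(a,r),q)=\bigsqcup_{1\leq i\leq m}\left(\mathbf{r}(X_i)\sqcap\mathbf{k_i}\right)$, which is exactly your single-state scalar-valued construction, and you correctly identify the order-alignment and the bookkeeping for positions outside every $X_i$ as the delicate points.

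There is, however, one concrete slip, and in this paper it is not cosmetic: in the conjunction case you put the boolean operand on the wrong side. The definition of restricted formulas requires that in a subformula $\psi\otimes\psi'$ the \emph{first} operand $\psi$ be a boolean MSO formula, so the semantics has the form $\mathbf{1}_{L}\sqcap\|\psi'\|_{\mathcal{V}}$, which is exactly the shape covered by Theorem \ref{conj1}. You instead assert that the second operand is boolean and reduce to $\|\varphi\|_{\mathcal{V}}\sqcap\mathbf{1}_{L}$. Since $\sqcap$ is not commutative (Proposition \ref{zero}), these are different MK-fuzzy languages, and the paper establishes no closure of $Rec(K,A)$ under conjunction with a characteristic language on the right --- the only right-sided result available is Theorem \ref{DRec-scalars}, which concerns scalars and deterministic recognizability. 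As written, that inductive case therefore rests on an unproved closure property; reading the restriction correctly and invoking Theorem \ref{conj1} as stated repairs it immediately, after which your argument coincides with the paper's proof. (A small point in the same spirit: in your base case $\varphi=\mathbf{k}$ you correctly place $\mathbf{1}_{N_{\mathcal{V}}}$ on the left of the conjunction, so Theorem \ref{conj1} does apply there.)
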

\begin{proof} [Sketch]
Let $\varphi \in RMSO(K,A)$ such that $s=\| \varphi \|$. We show by induction on the structure of $\varphi$ that $\|\varphi \| \in Rec(K,A)$. If  $\varphi =\mathbf{k}$ or $\varphi =\phi$,  then $\|\varphi\|$ is MK-fuzzy recognizable, respectively by Example \ref{constant} and Proposition \ref{rec-char_rec}. 
Next let $\varphi= \psi \oplus \psi'$ (resp. $\varphi= \psi \otimes \psi'$). We prove our claim using Lemma \ref{lemma-Dr} and Theorem \ref{Rec-disjunction} (resp. Lemma \ref{lemma-Dr} and Theorem \ref{conj1}). Assume now that $\varphi={\textstyle\bigoplus\nolimits_{x}}\centerdot\psi$ (resp. $\varphi={\textstyle\bigoplus\nolimits_{X}}\centerdot\psi$) such that $\|\psi\|$ is a recognizable MK-fuzzy language and let $\mathcal{V}=\mathit{free}(\varphi)$. We extend the order on $A_{\mathcal{V}}$ to a linear order on $A_{\mathcal{V}}\cup \{x\}$ (resp. $A_{\mathcal{V}}\cup \{X\}$) by letting $ (a,r[x=1]) \leq (a,r[x=0])$ (resp. $ (a,r[X=1]) \leq (a,r[X=0])$) for every  $(a,r) \in A_{\mathcal{V}} $. Then, we follow the proof of Lm. 4.3. in \cite{Dr:We} taking into account our Theorem \ref{hom} and show that $\|\varphi\|$ is recognizable. 
Finally, let $\varphi={\textstyle\bigotimes\nolimits_{x}}\centerdot \left(  \underset{1 \leq i \leq m}{\textstyle\bigoplus} \left (   (x\in X_i) \rightarrow \mathbf{k_i}  \right) \right) $ where $\mathbf{k_i} \in K$ for every $1 \leq i \leq m$. We consider the deterministic MK-fuzzy automaton $\mathcal{A}=(\{q\},q,T,  \{q\},in,wt,\ter)$ over $A_{\{X_1,\ldots,X_m \}}$ and $K$, with $T=\left \{(q, (a,r),q) \mid a\in A, r \in \{0,1\} ^{\{X_1,\ldots,X_m\}}\right \}$. The weight mappings are defined by $in(q)=\ter(q)=\mathbf{1}$ and $wt(q, (a,r),q)= \bigsqcup_{1 \leq i \leq m} \left(\mathbf{r}(X_i) \sqcap \mathbf{k_i}\right)$ for every $a\in A$ and $r \in \{0,1\} ^{\{X_1,\ldots,X_m\}}$, where $\mathbf{r}(X_i)=\mathbf{1}$ if $r(X_i)=1$ and $\mathbf{r}(X_i)=\mathbf{0}$ otherwise. Let $(w,\sigma) \in N_{\{X_1, \ldots, X_m \}}$, and assume that $(w,\sigma)=(a_0,r_0) \ldots (a_{n-1},r_{n-1})$ where $w=a_0 \ldots a_{n-1}\in A^*$ and $r_j \in \{0,1\} ^{\{X_1,\ldots,X_m\}}$ for every $0\leq j \leq n-1$. Then, there is a unique path $P_{(w,\sigma)}$ of $\mathcal{A}$ over $(w,\sigma)$. Moreover, we have 
\begin{align*}
\|\mathcal{A}\|(w,\sigma) & = weight(P_{(w,\sigma)})  = \bigsqcap_{0 \leq j \leq n-1} \left( \bigsqcup_{1 \leq i \leq m} \left(\mathbf{r_j}(X_i) \sqcap \mathbf{k_i}\right) \right )\\
&= \bigsqcap_{0 \leq j \leq n-1} \left( \left \|\left( \underset{1 \leq i \leq m}{\textstyle\bigoplus} \left (   (x\in X_i) \rightarrow \mathbf{k_i}  \right) \right)  \right \|_{\{x\}} (w, \sigma[x \rightarrow j])\right) \\
&= \left \|{\textstyle\bigotimes\nolimits_{x}}\centerdot \left(  \underset{1 \leq i \leq m}{\textstyle\bigoplus} \left (   (x\in X_i) \rightarrow \mathbf{k_i}  \right)  \right) \right \| (w, \sigma)   = \|\varphi\|(w,\sigma).
\end{align*}
Therefore, $\|\mathcal{A}\|=\|\varphi\|$, which implies that $\|\varphi\| \in Rec\left(K,A_{\{X_1, \ldots, X_m \}} \right)$, and this concludes our proof. \hfill $\square$
\end{proof}

For the converse of Theorem \ref{def-rec} we shall need the next lemma.

\begin{lemma}\label{in-ter-one}
Let $\mathcal{A}=(Q,I,T,F,in,wt,\ter)$ be an MK-fuzzy automaton over $A$ and $K$. Then there is an MK-fuzzy automaton $\mathcal{A'}=(Q',I',T',F',in',wt',\ter')$ over $A$ and $K$ such that $in'(q)=\mathbf{1}$ for every $q \in I'$ and $\ter'(q)=\mathbf{1}$ for every $q \in T'$, and $\|\mathcal{A'}\|(w)=\|\mathcal{A}\|(w)$ for every $w \in A^+$. 
\end{lemma}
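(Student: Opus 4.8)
The plan is to construct $\mathcal{A}'$ from $\mathcal{A}$ by pushing the initial weights forward into the first transition of every path and the terminal weights backward into the last one, so that $in'$ and $\ter'$ become constantly $\mathbf{1}$. Concretely, I would take two disjoint families of fresh states $\widehat{I}=\{\widehat{q}\mid q\in I\}$ and $\widetilde{F}=\{\widetilde{q}\mid q\in F\}$, set $Q'=Q\cup\widehat{I}\cup\widetilde{F}$, $I'=\widehat{I}$, $F'=\widetilde{F}$, and $in'(\widehat{q})=\ter'(\widetilde{q})=\mathbf{1}$ (the statement's ``$q\in T'$'' being a typo for $q\in F'$). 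The hatted states will receive no incoming transitions and the tilded states no outgoing ones, so that in every path of $\mathcal{A}'$ a hatted state can occur only at position $0$ and a tilded state only at the last position.

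The transition set $T'$ with its weights would consist of four families: (i) every original $(q,a,q')\in T$ kept with weight $wt(q,a,q')$ --- these form the interior of paths and, crucially, keep original final states available as intermediate states; (ii) for each $(q,a,q')\in T$ with $q'\in F$, a transition $(q,a,\widetilde{q}')$ with weight $wt(q,a,q')\sqcap\ter(q')$, the option to terminate; (iii) for each $(q,a,q')\in T$ with $q\in I$, a transition $(\widehat{q},a,q')$ with weight $in(q)\sqcap wt(q,a,q')$, the first step; and (iv) for each $(q,a,q')\in T$ with $q\in I$ and $q'\in F$, a transition $(\widehat{q},a,\widetilde{q}')$ with weight $in(q)\sqcap wt(q,a,q')\sqcap\ter(q')$, the first-equals-last step needed when $|w|=1$. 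I would then set up the bijection sending a path $q_0\xrightarrow{a_0}q_1\cdots q_{n-1}\xrightarrow{a_{n-1}}q_n$ of $\mathcal{A}$ over $w\in A^+$ to $\widehat{q}_0\xrightarrow{a_0}q_1\cdots q_{n-1}\xrightarrow{a_{n-1}}\widetilde{q}_n$ of $\mathcal{A}'$, and verify surjectivity by observing that every $\mathcal{A}'$-path over $w$ must start at a hatted state, traverse only original states, and end at a tilded state. Associativity of $\sqcap$ together with the unit law for $\mathbf{1}$ then yields equality of the corresponding path weights, since $in'(\widehat{q}_0)=\ter'(\widetilde{q}_n)=\mathbf{1}$ while the absorbed factors $in(q_0)$ and $\ter(q_n)$ reappear inside the first and last transition weights.

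The delicate point --- and the one I would treat most carefully --- is that $\|\mathcal{A}\|(w)$ is a disjunction taken in the lexicographic order of paths and $\sqcup$ is not commutative, so the bijection must be order preserving. Two features of the definition help here. First, the ordering key of a path of length $n$ uses only the states at positions $0,\dots,n-1$, so the last state is irrelevant and the terminal normalization (replacing $q_n$ by $\widetilde{q}_n$) cannot disturb the order. Second, at each fixed position all $\mathcal{A}'$-paths over $w$ carry states of the same kind: position $0$ is hatted and positions $1,\dots,n-1$ are original. Accordingly I would define the linear order on $Q'$ to restrict to the given order on $Q$ and to order $\widehat{I}$ exactly as $I$ (the order on $\widetilde{F}$ being immaterial); then position-$0$ comparisons among hatted states mirror the comparisons among the $q_0$'s in $I$, interior comparisons are unchanged, and since all paths over a fixed $w$ have equal length the prefix clause of $\leq_{lex}$ never intervenes.

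Thus the map is an order isomorphism between the linearly ordered sets of $\mathcal{A}$-paths and $\mathcal{A}'$-paths over $w$ that preserves weights, whence the two ordered disjunctions coincide and $\|\mathcal{A}'\|(w)=\|\mathcal{A}\|(w)$ for every $w\in A^+$, as required. Agreement at $\varepsilon$ is not claimed, which is fortunate, since $I'\cap F'=\widehat{I}\cap\widetilde{F}=\emptyset$ forces $\|\mathcal{A}'\|(\varepsilon)=\mathbf{0}$.
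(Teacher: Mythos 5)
Your proposal is correct, and it is precisely the standard weight-absorption construction this lemma calls for (the paper itself states Lemma~\ref{in-ter-one} without an explicit proof): fresh copies of initial and final states, weights $in(q_0)$ and $\ter(q_n)$ pushed into the first and last transitions, and a weight-preserving bijection of paths justified by associativity of $\sqcap$ and the unit law for $\mathbf{1}$. You also correctly isolate and resolve the only genuinely delicate point in this setting --- that $\sqcup$ is non-commutative, so the path bijection must preserve the lexicographic order --- by observing that the paper's ordering key for a path uses only the states at positions $0,\dots,n-1$, and by ordering $\widehat{I}$ as $I$ while keeping the order on $Q$ intact.
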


\begin{theorem}\label{recdef}
Let $A$ be a linearly ordered alphabet. If an MK-fuzzy language $s \in K\left\langle \left\langle A^{\ast}\right\rangle
\right\rangle $ is recognizable, then it is \emph{RMSO}-definable.
\end{theorem}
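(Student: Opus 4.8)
The plan is to establish the converse direction of the Büchi-type equivalence: given a recognizable MK-fuzzy language $s$, construct an $RMSO(K,A)$ sentence $\varphi$ with $\|\varphi\|=s$. First I would fix an MK-fuzzy automaton $\mathcal{A}=(Q,I,T,F,in,wt,\ter)$ accepting $s$, and by Lemma \ref{in-ter-one} assume without loss of generality that $in(q)=\mathbf{1}$ on initial states and $\ter(q)=\mathbf{1}$ on final states, so that all nontrivial truth values reside on the transitions via $wt$. This reduces the semantic content of a path's weight to the MK-conjunction $\bigsqcap_{0\le i\le n-1} wt(q_i,a_i,q_{i+1})$ taken in order along the path, which is exactly the shape that the $\bigotimes_x$ fragment of restricted formulas is designed to capture.

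The core idea follows the classical Büchi construction adapted to the weighted setting. I would encode a run of $\mathcal{A}$ over a word $w$ by a tuple of second order variables $X_q$ (one per state $q\in Q$) whose intended meaning is that position $i$ is assigned to $X_q$ when the run visits state $q$ just before reading the $i$-th letter; equivalently one encodes transitions by variables $X_\tau$ indexed by $\tau\in T$. Using the boolean MSO machinery already recalled in the excerpt --- $\mathrm{first}$, $\mathrm{last}$, $(y=x+1)$, and $\mathrm{partition}(X_1,\ldots,X_m)$ --- I would write a boolean formula $\psi_{\mathrm{run}}$ asserting that the $X_\tau$ encode a legal accepting path: the transition variables partition $dom(w)$, consecutive transitions are compatible at shared states, the first transition starts in $I$, and the last ends in $F$. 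Since $in$ and $\ter$ are trivial after the normalization, $\psi_{\mathrm{run}}$ alone pins down accepting runs.

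The weight of a fixed run must then be produced as a product over positions. For each position $x$, exactly one transition variable $X_\tau$ contains $x$, and I would use the restricted building block $(x\in X_\tau)\rightarrow \mathbf{k_\tau}$ with $\mathbf{k_\tau}=wt(\tau)$, combined as $\bigoplus_{\tau\in T}\bigl((x\in X_\tau)\rightarrow \mathbf{k_\tau}\bigr)$, so that this inner formula evaluates to $wt(\tau)$ on the unique active $\tau$ and to $\mathbf{1}$ on the inactive ones. Wrapping it in $\bigotimes_x$ yields precisely the ordered MK-conjunction $\bigsqcap_i wt(q_i,a_i,q_{i+1})$ equal to $weight(P_w)$. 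The formula for a single run is then $\psi_{\mathrm{run}}\otimes\bigl(\bigotimes_x\cdots\bigr)$, which is restricted because the left argument of $\otimes$ is boolean and the body of $\bigotimes_x$ has exactly the mandated form. Finally I would existentially quantify the transition variables with $\bigoplus_{X_\tau}$ to sum (MK-disjoin) over all runs, obtaining a sentence $\varphi$; invoking Lemma \ref{lemma-Dr} lets me pass between $\|\varphi\|_{\mathcal{V}}$ and $\|\varphi\|$.

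The main obstacle will be \emph{ordering}: because MK-disjunction is non-commutative, $\|\mathcal{A}\|(w)$ sums the path weights in the lexicographic order on $Q$-sequences, while $\|\bigoplus_{X_\tau}\cdots\|$ sums over assignments in the order of Remark \ref{order} on $\mathcal{P}(dom(w))$. I would therefore have to choose the encoding and an induced order on the extended alphabet so that the two orders \emph{agree}: the lexicographic order on state-sequences $q_0\ldots q_{n-1}$ must correspond, under the bijection between runs and valid assignments of the $X_\tau$, to the $\leq$ order on the associated position-sets. This will require carefully selecting the indexing of the $X_\tau$ (or equivalently an order on $T$ refining the state order read position by position) and verifying the correspondence, analogously to the order adjustments already made in the proofs of Theorems \ref{hom} and \ref{def-rec}. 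Once the orders are matched, equality $\|\varphi\|=\|\mathcal{A}\|=s$ follows on $A^+$, and the $\varepsilon$-case is handled separately using the trivial $in,\ter$ values, completing the proof.
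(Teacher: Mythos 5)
Your proposal follows the paper's proof in all structural respects: normalize $in$ and $\ter$ to $\mathbf{1}$ via Lemma \ref{in-ter-one}; introduce one second-order variable $X_{p,a,q}$ per transition; write a boolean formula pinning down accepting runs; form $\varphi:=\psi\otimes\bigotimes_{x}\centerdot\bigoplus_{(p,a,q)\in T}\left((x\in X_{p,a,q})\rightarrow wt(p,a,q)\right)$; prefix with $\bigoplus_{X_{1}}\ldots\bigoplus_{X_{m}}$; and handle $\varepsilon$ separately (the paper reattaches $\|\mathcal{A}\|(\varepsilon)=\mathbf{k}$ via $\xi'\oplus(\forall x\centerdot\lnot(x\leq x)\otimes\mathbf{k})$). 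Your diagnosis of the ordering obstacle is also exactly the paper's resolution: it enumerates $X_{1},\ldots,X_{m}$ preserving the order of the transitions in $T$ (itself ordered as a subset of $Q\times A\times Q$), orders satisfying assignments by the first position at which the transition indices differ, and verifies $P_w\leq P'_w$ iff $\sigma_{P_w}\leq\sigma_{P'_w}$. Two small omissions: the paper's $\psi$ also contains the conjunct $\forall x\centerdot\left((x\in X_{p,a,q})\rightarrow P_{a}(x)\right)$ tying the labelling to the letters of $w$ (your explicit list of conditions lacks it, and without it the weight formula pulls weights of transitions over wrong letters), and one should note that assignments violating $\psi$ contribute $\mathbf{0}\sqcap(\cdots)=\mathbf{0}$, the unit of $\sqcup$, so they drop out of the outer disjunction regardless of where they occur in the order.

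There is, however, one step that fails as literally stated. You claim each disjunct $(x\in X_{\tau})\rightarrow\mathbf{k_{\tau}}$ evaluates to $\mathbf{1}$ when $\tau$ is inactive, and that the disjunction over all $\tau$ nevertheless returns $wt(\tau)$ for the unique active $\tau$. In $K$ these two claims are incompatible: $\mathbf{1}$ is \emph{not} neutral for $\sqcup$ but left-absorbing, since $t_{3}=t_{1}+(f_{1}+u_{1})t_{2}=1$ gives $\mathbf{1}\sqcup\mathbf{k}=\mathbf{1}$ for every $\mathbf{k}$ (and $\mathbf{k}\sqcup\mathbf{1}=(t+f+u,0,0,e)$). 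So if any inactive disjunct precedes the active one in the enumeration, your inner formula collapses to $\mathbf{1}$ rather than $wt(\tau)$. The neutral element of $\sqcup$ is $\mathbf{0}$, and the computation goes through only if inactive disjuncts contribute $\mathbf{0}$; this is precisely how the paper evaluates the analogous formula in Theorem \ref{def-rec}, where $\bigoplus_{1\leq i\leq m}\left((x\in X_{i})\rightarrow\mathbf{k_{i}}\right)$ is computed as $\bigsqcup_{1\leq i\leq m}\left(\mathbf{r}(X_{i})\sqcap\mathbf{k_{i}}\right)$ with $\mathbf{r}(X_{i})=\mathbf{0}$, hence $\mathbf{0}\sqcap\mathbf{k_{i}}=\mathbf{0}$, on inactive indices. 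To repair your argument you should evaluate the building block so that its inactive value is $\mathbf{0}$ (in effect, use $(x\in X_{\tau})\otimes\mathbf{k_{\tau}}$ under the partition guaranteed by $\psi$) rather than $\mathbf{1}$; with that correction, and the order-matching you already flagged, your plan coincides with the paper's proof.
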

\begin{proof} [Sketch]
Let $\mathcal{A} = (Q,I,T,F,in,wt,\ter)$ be an MK-fuzzy automaton over $A$ and $K$, and assume firstly that $\|\mathcal{A}\|(\varepsilon) = \mathbf{0}$. By Lemma \ref{in-ter-one}, we can assume that $in(q)=\mathbf{1}$ for every $q \in I$ and $\ter(q)=\mathbf{1}$ for every $q \in F$.  We intend to show that $\|\mathcal{A}\|$ is an RMSO-definable MK-fuzzy language. For this, we can follow the proof of Thm. 5.5. in \cite{Dr:We}. Nevertheless, in our case we have, in addition, to take care for the order of the paths of $\mathcal{A}$ over any word $w \in A^+$, as well as the order of the corresponding assignments. For every transition $(p,a,q)\in T$, we consider a second order 
variable $X_{p,a,q}$ and we let $\mathcal{V}=\{X_{p,a,q}\mid(p,a,q)\in T\}$. Let 
$m=\left\vert T\right\vert $. We define an enumeration $X_{1},\ldots,X_{m}$ of $\mathcal{V}$, preserving the order of the corresponding transitions in $T$.
We let \small
\begin{multline*}
\psi(X_{1},\ldots,X_{m})    :=\text{partition}(X_{1},\ldots,X_{m})\wedge\underset{(p,a,q)\in T}{\bigwedge}\forall x\centerdot\left(  \left(
x\in X_{p,a,q}\right)  \rightarrow P_{a}(x)\right)  \wedge \\
  \forall x\centerdot\forall y\centerdot\left(  \left(  y=x+1\right)
\rightarrow\underset{(p,a,q),(q,b,r)\in T}{\bigvee}\left(  x\in
X_{p,a,q}\right)  \wedge\left(  y\in X_{q,b,r}\right)  \right) \wedge \\
\exists z \centerdot \left( first(z) \wedge   \underset{p \in I } {\underset{(p,a,q) \in T } \bigvee} z \in X_{p,a,q} \right) \wedge\exists z' \centerdot \left( last(z') \wedge 
\underset{q \in F}{\underset{(p,a,q) \in T} \bigvee} z' \in X_{p,a,q} \right).
\end{multline*} \normalsize
Let $w=a_{0} \ldots a_{n-1}\in A^+$. We define a linear order on the set of all $(\mathcal{V},w)$-assignments satisfying $\psi$ in the following way. For two such assignments $\sigma$ and $\sigma'$, we let
$\sigma \leq \sigma'$ iff  there exists $k \in dom(w)$, with $0 \leq k \leq n-1$, such that $k \in \sigma(X_{i_k}) \cap \sigma'(X_{i'_k})$ with $i_k \leq i'_k$ and $j \in \sigma(X_{i_j}) \cap \sigma'(X_{i_j})$ for every $0 \leq j < k$. Trivially $\leq$ is a linear order. On the other hand, for every path $P_w$ of $\mathcal{A}$ over $w$ there exists a  unique $(\mathcal{V},w)$-assignment $\sigma_{P_w}$ satisfying $\psi$, i.e., $\left\Vert \psi\right\Vert (w,\sigma_{P_w})=\mathbf{1}$ and vice-versa (cf. Thm. 5.5. in \cite{Dr:We}). Then, we can easily get that $P_w \leq P'_w$ iff $\sigma_{P_w} \leq \sigma_{P'_w}$. 
Next, we consider the formula
$$
\varphi(X_{1},\ldots,X_{m}):=\psi(X_{1},\ldots,X_{m}) \otimes    \textstyle\bigotimes_x\centerdot 
\left( \underset{(p,a,q)\in T}{\textstyle\bigoplus}\left(  x\in
X_{p,a,q}\right)  \rightarrow wt(p,a,q)\right) .
$$
Let now $w=a_{0} \ldots a_{n-1}\in A^+$, $P_w  =\left (\left ( q_i, a_i,q_{i+1} \right ) \right )_{0 \leq i \leq n-1}$ a path of $\mathcal{A}$ over $w$, and $\sigma_{P_w}$ the corresponding ($\mathcal{V},w$)-assignment. Then, we get  $\left\Vert \varphi\right\Vert _{\mathcal{V}}(w,\sigma_{P_{w}})  = weight\left(P_w\right)$.
Finally, we consider the restricted MK-fuzzy MSO logic sentence 
$$\xi=\textstyle\bigoplus_{X_1}\ldots \textstyle\bigoplus_{X_m}\centerdot\varphi(X_{1},\ldots ,X_{m})$$ 
and we show that $\left\Vert \xi\right\Vert (w)  =  \Vert
\mathcal{A}\Vert(w)$ for every $w \in A^+$. 
Hence, $\| \mathcal{A}\| =\| \xi \|$, i.e., $\|\mathcal{A} \|$ is RMSO-definable. 
Next let $\|\mathcal{A}\|(\varepsilon) = \mathbf{k} \neq \mathbf{0}$. Then, by Lemma  \ref{in-ter-one},  we consider the MK-fuzzy automaton $\mathcal{A'}$ such that $\|\mathcal{A'}\|(w)=\|\mathcal{A}\|(w)$ for every $w \in A^+$. By what we have shown previously, there exists a restricted MK-fuzzy MSO logic sentence $\xi'$ such that $\|\mathcal{A'}\|=\|\xi\|$. We let  
$$\xi = \xi' \oplus \left( \forall x \centerdot \lnot (x \leq x) \otimes \mathbf{k}  \right). $$
Then $\xi$ is a restricted MK-fuzzy MSO logic sentence, and we get $\|\forall x \centerdot \lnot (x \leq x) \otimes \mathbf{k}\|(w)=\mathbf{0}$ for every $w \in A^+$, and $\|\forall x \centerdot \lnot (x \leq x) \otimes \mathbf{k}\|(\varepsilon)=\mathbf{k}$ (cf. \cite{Dr:We}). Hence $\|\mathcal{A}\|=\|\xi\|$, and this concludes our proof.  \hfill $\square$ 
\end{proof}

\

\section{Conclusion}
We introduced the bimonoid $K$ related to the fuzzification of MK-logic, and investigated MK-fuzzy automata over $K$. Our models are inspired by real practical applications being in development within the project LogicGuard \cite{Lo:G1,Lo:G2,Ku:Lo,DBLP:conf/rv/CernaSK16}. We proved properties of the class of MK-fuzzy languages accepted by MK-fuzzy automata as well as by their deterministic counterpart. We introduced an MK-fuzzy MSO logic and established a B\"uchi type theorem for the class of MK-fuzzy recognizable languages. 

It is worth noting that our results can be generalized to weighted automata over any bimonoid $(K,+,\cdot, 0,1)$ with the additional property that $0 \cdot k=0$ for every $k \in K$. Indeed, one can replace $\sqcup$ by $+$ and  $\sqcap$ by $\cdot$. 

Several problems remain open and they are under investigation, for instance, whether the class of recognizable MK-fuzzy languages is closed under MK-conjunction, Cauchy product and star operation, as well as whether the class of deterministically recognizable MK-fuzzy languages is closed under MK-disjunction and conjunction, Cauchy product, and star operation. Furthermore, due to the four-valued elements of $K$, there are several notions of supports and it is greatly desirable for applications to check which of them constitute recognizable languages. It should be clear from the proofs of our results, that the usual constructions on semiring-weighted automata cannot be always applied, even with modifications, when the weight structure is just a bimonoid. For instance, our bimonoid $K$ is zero-sum free and zero-divisor free. Nevertheless, one can not show that the support $\mathrm{supp}(s)=\{   w \in A^* \mid s(w) \neq \mathbf{0}  \}$ of a recognizable (even deterministically recognizable) MK-fuzzy language $s$ over $A$ and $K$ is a recognizable language following the usual construction on weighted automata (cf. for instance \cite{Dr:Au}). In our future research we intend also to study MK-fuzzy automata models over infinite words.

\nocite{*}
\bibliographystyle{eptcs}
\bibliography{MK_fuzzy_aut_GandALF_2017_final_version}

\begin{thebibliography}{10}
\providecommand{\bibitemdeclare}[2]{}
\providecommand{\surnamestart}{}
\providecommand{\surnameend}{}
\providecommand{\urlprefix}{Available at }
\providecommand{\url}[1]{\texttt{#1}}
\providecommand{\href}[2]{\texttt{#2}}
\providecommand{\urlalt}[2]{\href{#1}{#2}}
\providecommand{\doi}[1]{doi:\urlalt{http://dx.doi.org/#1}{#1}}
\providecommand{\bibinfo}[2]{#2}

\bibitemdeclare{book}{Ah:Fu}
\bibitem{Ah:Fu}
\bibinfo{editor}{J.~\surnamestart Ahsan\surnameend}, \bibinfo{editor}{J.N.
  \surnamestart Mordeson\surnameend} \& \bibinfo{editor}{M.~\surnamestart
  Shabir\surnameend}, editors (\bibinfo{year}{2012}):
  \emph{\bibinfo{title}{Fuzzy Semirings with Applications to Automata Theory}}.
\newblock \bibinfo{series}{Studies in Fuzziness and Soft Computing},
  \bibinfo{publisher}{Springer-Verlag}, \doi{10.1007/978-3-642-27641-5}.

\bibitemdeclare{article}{Av:Pr}
\bibitem{Av:Pr}
\bibinfo{author}{A.~\surnamestart Avron\surnameend} \&
  \bibinfo{author}{B.~\surnamestart Konikowska\surnameend}
  (\bibinfo{year}{2009}): \emph{\bibinfo{title}{Proof Systems for Reasoning
  about Computation Errors}}.
\newblock {\sl \bibinfo{journal}{Studia Logica}}
  \bibinfo{volume}{91}(\bibinfo{number}{2}), pp. \bibinfo{pages}{273--293},
  \doi{10.1007/s11225-009-9175-4}.

\bibitemdeclare{article}{Bu:We}
\bibitem{Bu:We}
\bibinfo{author}{J.~R. \surnamestart B\"{u}chi\surnameend}
  (\bibinfo{year}{1960}): \emph{\bibinfo{title}{Weak second-order arithmetic
  and finite automata}}.
\newblock {\sl \bibinfo{journal}{Z. Math. Logik Grundlagen Math.}}
  \bibinfo{volume}{6}, pp. \bibinfo{pages}{66--92},
  \doi{10.1002/malq.19600060105}.

\bibitemdeclare{inproceedings}{DBLP:conf/rv/CernaSK16}
\bibitem{DBLP:conf/rv/CernaSK16}
\bibinfo{author}{D.M. \surnamestart Cerna\surnameend},
  \bibinfo{author}{W.~\surnamestart Schreiner\surnameend} \&
  \bibinfo{author}{T.~\surnamestart Kutsia\surnameend} (\bibinfo{year}{2016}):
  \emph{\bibinfo{title}{Predicting Space Requirements for a Stream Monitor
  Specification Language}}.
\newblock In \bibinfo{editor}{Y.~\surnamestart Falcone\surnameend} \&
  \bibinfo{editor}{C.~\surnamestart S{\'{a}}nchez\surnameend}, editors: {\sl
  \bibinfo{booktitle}{Runtime Verification - 16th International Conference,
  {RV} 2016}}, pp. \bibinfo{pages}{135--151},
  \doi{10.1007/978-3-319-46982-9\_9}.

\bibitemdeclare{article}{Ci:De}
\bibitem{Ci:De}
\bibinfo{author}{M.~\surnamestart \'Ciri\'c\surnameend},
  \bibinfo{author}{M.~\surnamestart Droste\surnameend},
  \bibinfo{author}{J.~\surnamestart Ignjatovi\'c\surnameend} \&
  \bibinfo{author}{H.~\surnamestart Vogler\surnameend} (\bibinfo{year}{2010}):
  \emph{\bibinfo{title}{Determinization of weighted finite automata over strong
  bimonoids}}.
\newblock {\sl \bibinfo{journal}{Inform. Sci.}}
  \bibinfo{volume}{180}(\bibinfo{number}{18}), pp. \bibinfo{pages}{3497--3520},
  \doi{10.1016/j.ins.2010.05.020}.

\bibitemdeclare{article}{Dr:We}
\bibitem{Dr:We}
\bibinfo{author}{M.~\surnamestart Droste\surnameend} \&
  \bibinfo{author}{P.~\surnamestart Gastin\surnameend} (\bibinfo{year}{2007}):
  \emph{\bibinfo{title}{Weighted automata and weighted logics}}.
\newblock {\sl \bibinfo{journal}{Theoret. Comput. Sci.}} \bibinfo{volume}{380},
  pp. \bibinfo{pages}{69--86}, \doi{10.1016/j.tcs.2007.02.055}.

\bibitemdeclare{book}{Dr:Han}
\bibitem{Dr:Han}
\bibinfo{editor}{M.~\surnamestart Droste\surnameend},
  \bibinfo{editor}{W.~\surnamestart Kuich\surnameend} \&
  \bibinfo{editor}{H.~\surnamestart Vogler\surnameend}, editors
  (\bibinfo{year}{2009}): \emph{\bibinfo{title}{Handbook of Weighted
  Automata}}.
\newblock \bibinfo{series}{EATCS Monographs in Theoretical Computer Science},
  \bibinfo{publisher}{Springer-Verlag}, \bibinfo{address}{Berlin Heidelberg},
  \doi{10.1007/978-3-642-01492-5}.

\bibitemdeclare{incollection}{Dr:Au}
\bibitem{Dr:Au}
\bibinfo{author}{M.~\surnamestart Droste\surnameend} \&
  \bibinfo{author}{D.~\surnamestart Kuske\surnameend}:
  \emph{\bibinfo{title}{Weighted automata}}.
\newblock In \bibinfo{editor}{J.-E. \surnamestart Pin\surnameend}, editor: {\sl
  \bibinfo{booktitle}{Handbook: Automata: from Mathematics to Applications}},
  chapter~\bibinfo{chapter}{4}.
\newblock
  \urlprefix\url{http://eiche.theoinf.tu-ilmenau.de/kuske/Submitted/weighted.pdf}.
\newblock \bibinfo{note}{To appear.}

\bibitemdeclare{inbook}{Dr:Co}
\bibitem{Dr:Co}
\bibinfo{author}{M.~\surnamestart Droste\surnameend},
  \bibinfo{author}{I.~\surnamestart Meinecke\surnameend},
  \bibinfo{author}{B.~\surnamestart {\v{S}}e{\v{s}}elja\surnameend} \&
  \bibinfo{author}{A.~\surnamestart Tepav{\v{c}}evi{\'{c}}\surnameend}
  (\bibinfo{year}{2012}): \emph{\bibinfo{title}{Coverings and Decompositions of
  Semiring-Weighted Finite Transition Systems}}, chapter \bibinfo{chapter}{11,
  in \cite{Ah:Fu}}, \doi{10.1007/978-3-642-27641-5\_11}.

\bibitemdeclare{article}{Dr:St}
\bibitem{Dr:St}
\bibinfo{author}{M.~\surnamestart Droste\surnameend},
  \bibinfo{author}{T.~\surnamestart St{\"{u}}ber\surnameend} \&
  \bibinfo{author}{H.~\surnamestart Vogler\surnameend} (\bibinfo{year}{2010}):
  \emph{\bibinfo{title}{Weighted finite automata over strong bimonoids}}.
\newblock {\sl \bibinfo{journal}{Inform. Sci.}}
  \bibinfo{volume}{180}(\bibinfo{number}{1}), pp. \bibinfo{pages}{156--166},
  \doi{10.1016/j.ins.2009.09.003}.

\bibitemdeclare{article}{El:De}
\bibitem{El:De}
\bibinfo{author}{C.C. \surnamestart Elgot\surnameend} (\bibinfo{year}{1961}):
  \emph{\bibinfo{title}{Decision problems of finite automata design and related
  arithmetics}}.
\newblock {\sl \bibinfo{journal}{Trans. Amer. Math. Soc.}}
  \bibinfo{volume}{98}(\bibinfo{number}{1}), pp. \bibinfo{pages}{21--51},
  \doi{10.1090/S0002-9947-1961-0139530-9}.

\bibitemdeclare{book}{Kl:In}
\bibitem{Kl:In}
\bibinfo{author}{S.C. \surnamestart Kleene\surnameend} (\bibinfo{year}{1952}):
  \emph{\bibinfo{title}{Introduction to Metamathematics}}.
\newblock \bibinfo{publisher}{North-Holland}.

\bibitemdeclare{incollection}{Ko:Fo}
\bibitem{Ko:Fo}
\bibinfo{author}{B.~\surnamestart Konikowska\surnameend}
  (\bibinfo{year}{2009}): \emph{\bibinfo{title}{Four-Valued Logic for Reasoning
  about Finite and Infinite Computation Errors in Programs}}.
\newblock In \bibinfo{editor}{W.~\surnamestart Carnielli\surnameend},
  \bibinfo{editor}{M.E. \surnamestart Coniglio\surnameend} \&
  \bibinfo{editor}{I.M.~Loffredo \surnamestart D'Ottaviano\surnameend},
  editors: {\sl \bibinfo{booktitle}{The Many Sides of Logic}},
  \bibinfo{series}{Studies in Logic}, \bibinfo{publisher}{College
  Publications}, pp. \bibinfo{pages}{403--423}.

\bibitemdeclare{techreport}{Ku:Lo}
\bibitem{Ku:Lo}
\bibinfo{author}{T.~\surnamestart Kutsia\surnameend} \&
  \bibinfo{author}{W.~\surnamestart Schreiner\surnameend}
  (\bibinfo{year}{2012}): \emph{\bibinfo{title}{{LogicGuard} Abstract
  Language}}.
\newblock \bibinfo{type}{RISC Report Series} \bibinfo{number}{12-08},
  \bibinfo{institution}{Research Institute for Symbolic Computation (RISC),
  Johannes Kepler University Linz}, \bibinfo{address}{Austria}.

\bibitemdeclare{inproceedings}{Li:Th}
\bibitem{Li:Th}
\bibinfo{author}{P.~\surnamestart Li\surnameend},
  \bibinfo{author}{Y.~\surnamestart Li\surnameend} \&
  \bibinfo{author}{S.~\surnamestart Geng\surnameend} (\bibinfo{year}{2014}):
  \emph{\bibinfo{title}{The realization problems related to weighted
  transducers over strong bimonoids}}.
\newblock In: {\sl \bibinfo{booktitle}{{IEEE} International Conference on Fuzzy
  Systems 2014}}, \bibinfo{series}{Fuzzy Systems (FUZZ-IEEE)}, pp.
  \bibinfo{pages}{1686--1690}, \doi{10.1109/FUZZ-IEEE.2014.6891580}.

\bibitemdeclare{misc}{Lo:G1}
\bibitem{Lo:G1}
\emph{\bibinfo{title}{{LogicGuard I}}}.
\newblock \bibinfo{note}{\text{}http://www.risc.jku.at/projects/LogicGuard/}.

\bibitemdeclare{misc}{Lo:G2}
\bibitem{Lo:G2}
\emph{\bibinfo{title}{{LogicGuard II}}}.
\newblock \bibinfo{note}{\text{}http://www.risc.jku.at/projects/LogicGuard2/}.

\bibitemdeclare{book}{Mc:Ba}
\bibitem{Mc:Ba}
\bibinfo{author}{J.~\surnamestart McCarthy\surnameend} (\bibinfo{year}{1967}):
  \emph{\bibinfo{title}{A basis for a mathematical theory of computation,
  Computer Programming and Formal Systems}}.
\newblock \bibinfo{publisher}{North-Holland}, \doi{10.1145/1460690.1460715}.

\bibitemdeclare{book}{Mo:Fu}
\bibitem{Mo:Fu}
\bibinfo{author}{J.~N. \surnamestart Mordeson\surnameend} \&
  \bibinfo{author}{D.S. \surnamestart Malik\surnameend} (\bibinfo{year}{2002}):
  \emph{\bibinfo{title}{Fuzzy Automata and Languages, Theory and
  Applications}}.
\newblock \bibinfo{series}{Computational Mathematics Series},
  \bibinfo{publisher}{Chapman and Hall}, \doi{10.1201/9781420035643}.

\bibitemdeclare{article}{Ni:Tr}
\bibitem{Ni:Tr}
\bibinfo{author}{M.~\surnamestart Nivat\surnameend} (\bibinfo{year}{1968}):
  \emph{\bibinfo{title}{Transductions des langages de {C}homsky}}.
\newblock {\sl \bibinfo{journal}{Ann. de l'Inst. Fourier}}
  \bibinfo{volume}{18}, pp. \bibinfo{pages}{339--456}, \doi{10.5802/aif.287}.

\bibitemdeclare{inbook}{Ra:Fu}
\bibitem{Ra:Fu}
\bibinfo{author}{G.~\surnamestart Rahonis\surnameend} (\bibinfo{year}{2009}):
  \emph{\bibinfo{title}{Fuzzy languages}}, chapter \bibinfo{chapter}{12, in
  \cite{Dr:Han}}.
\newblock \doi{10.1007/978-3-642-01492-5\_12}.

\bibitemdeclare{incollection}{Th:La}
\bibitem{Th:La}
\bibinfo{author}{W.~\surnamestart Thomas\surnameend} (\bibinfo{year}{1994}):
  \emph{\bibinfo{title}{Languages, automata, and logic}}.
\newblock In \bibinfo{editor}{Grzegorz \surnamestart Rozenberg\surnameend} \&
  \bibinfo{editor}{Arto \surnamestart Salomaa\surnameend}, editors: {\sl
  \bibinfo{booktitle}{Handbook of Formal Languages}}, \bibinfo{volume}{vol. 3},
  \bibinfo{publisher}{Spinger}, pp. \bibinfo{pages}{389--455},
  \doi{10.1007/978-3-642-59126-6}.

\bibitemdeclare{article}{Tr:Fi}
\bibitem{Tr:Fi}
\bibinfo{author}{B.~\surnamestart Trakhtenbrot\surnameend}
  (\bibinfo{year}{1961}): \emph{\bibinfo{title}{Finite automata and logic of
  monadic predicates}}.
\newblock {\sl \bibinfo{journal}{Doklady Akademii Nauk SSSR}}
  \bibinfo{volume}{140}, pp. \bibinfo{pages}{326--329}.
\newblock \bibinfo{note}{(in Russian)}.

\end{thebibliography}

\end{document}